\documentclass[onecolumn]{IEEEtran} 
\usepackage{amsthm,amsmath,amssymb,amsfonts,bm}
\usepackage{tikz}
\usepackage{booktabs,multirow,nicematrix}
\usepackage[table,xcdraw]{xcolor}
\usepackage{adjustbox}
\usepackage[shortlabels]{enumitem}
\usepackage[square, comma, sort&compress, numbers]{natbib}
\usepackage{float,graphicx}
\usepackage[colorlinks,linkcolor=blue]{hyperref}

\renewcommand{\eqref}[1]{\textcolor{blue}{(\ref{#1})}}

\newtheorem{theorem}{Theorem}[section]
\newtheorem{lemma}[theorem]{Lemma}
\newtheorem{proposition}[theorem]{Proposition}
\newtheorem{corollary}[theorem]{Corollary}
\newtheorem{definition}[theorem]{Definition}

\newtheorem{remark}[theorem]{Remark}

\makeatletter
\@addtoreset{equation}{section}
\makeatother

\begin{document}
	\title{The Hermitian Hull Dimensions for a Class of $(\mathcal{L},\mathcal{P})$-Twisted Generalized
		Reed–Solomon Codes}
	\author{Chenlu Jia, Zhonghao Liang, Yue Huang and Qunying Liao
		\thanks{Corresponding author: Qunying Liao.  Emails:3120193984@qq.com; liangzhongh0807@163.com; 2907245673@qq.com; qunyingliao@sicnu.edu.cn.}
		\thanks{Chenlu Jia, Zhonghao Liang, Yue Huang and Qunying Liao are with College of Mathematical Sciences, Sichuan Normal University, Chengdu 610066, China.} 
		\thanks{This paper is supported by National Natural Science Foundation of China (12471494) and Natural Science Foundation of Sichuan Province (2024NSFSC2051).}
	}
	\maketitle
	
	\begin{abstract}
		Determining the hull of linear codes has long been an important topic in coding theory. 
		Recently, non-generalized Reed–Solomon (in short, non-GRS) codes have attracted extensive research interest. The 
		$(\mathcal{L},\mathcal{P})$-twisted generalized Reed–Solomon (in short, $(\mathcal{L},\mathcal{P})$-TGRS) code, which is an extension of the generalized Reed-Solomon (GRS) code, constitutes a well-studied calss of non-GRS codes. 
		There are numerous works focusing on the Euclidean hull of $(\mathcal{L},\mathcal{P})$-TGRS codes, while only a few results on the Hermitian hull of $(\mathcal{L},\mathcal{P})$-TGRS codes.
		In this paper, we focus on a class of $(\mathcal{L},\mathcal{P})$-TGRS codes $\mathcal{C}_{k}(\boldsymbol{\alpha})$. By taking a special class of the vector $\boldsymbol{\alpha}$ with length $i(q-1)$, and analyze the parity of  $i$ and the relation between $i$ and $q+1$, we divide three cases to fully determine the Hermitian hull dimension of $\mathcal{C}_{k}(\boldsymbol{\alpha})$. As an application, we construct two classes of entanglement-assisted quantum error-correcting codes.
		

	\end{abstract} 
	\begin{IEEEkeywords}
		$(\mathcal{L},\mathcal{P})$-TGRS code; Hermitian hull; Entanglement-assisted quantum code. 
	\end{IEEEkeywords}  
	
	\section{Introduction} 
	Let $\mathbb{F}_q$ be the finite field with $q$ elements, where $q$ is a prime power. An $[n,k,d]_{q^2}$ linear code $\mathcal{C}$ is a $k$-dimensional subspace of $\mathbb{F}_{q^2}^n$ with minimum distance $d$. The Hermitian inner product of two vectors $\mathbf{a} = (a_1,\dots,a_n)$ and $ \mathbf{b} = (b_1,\dots,b_n) $ over $ \mathbb{F}_{q^2}^n$ is defined by 
	$\langle \mathbf{a},\mathbf{b} \rangle_\mathrm{H} = \sum\limits_{i=1}^n a_i b_i^q$. 
	The Hermitian dual of $\mathcal{C}$ is
	$$
	\mathcal{C}^{\perp_\mathrm{H}} = \{\mathbf{x} \in \mathbb{F}_{q^2}^n: \langle \mathbf{x},\mathbf{c} \rangle_\mathrm{H} = 0,\text{ for all } \mathbf{c} \in \mathcal{C}\}.
	$$

	For a linear code $\mathcal{C}$, the hull  $\mathrm{Hull}(\mathcal{C})$  is defined as the intersection of $\mathcal{C}$ and its dual code. 
	It is well-known that the
	value of $\dim(\mathrm{Hull}(\mathcal{C}))$ plays a critical role in determining the computational complexity of algorithms to check the permutation equivalence of two linear codes\cite{Hull}, computing the automorphism group of a linear code\cite{Hull 6}, calculating the number of shared pairs that required to construct an entanglement-assisted quantum error-correcting code (in short, EAQECC)\cite{EAQECC}. And so, it is very important to determine  $\dim(\operatorname{Hull}(\mathcal{C}))$\citep{Hull 1,Hull 2,Hull 3,Hull 4,Hull 5,Hull TGRS 2,RS}.

	
	
	In recent years, the construction of non-GRS type linear codes has attracted considerable attention due to that they can effectively resist the Sidelnikov-Shestakov attack and the Wi-eschebrink attack. 
	So far, there are extensive study on the properties and constructions of non-GRS codes\citep{non-GRS liang 1,non-GRS  5,non-GRS liang 2,non-GRS liang 3,non-GRS liang 4,non-GRS  6,non-GRS  9,non-GRS 10,non-GRS 12,non-GRS 13,non-GRS 14,non-GRS 15}. In particular, in 2017, Beelen et al.  \cite{TGRS} firstly introduced the twisted generalized Reed-Solomon (in short, TGRS) code. 
	Subsequently, many scholars studied the TGRS code, including the NMDS properties, self-dual properties, self-orthogonal properties, and so on\citep{Hull TGRS 3,TGRS property 2,Hull TGRS 8,TGRS property 4,TGRS property 5,TGRS property 6,TGRS property 7,TGRS property 8,TGRS property 9,TGRS property 10}. In 2025, Zhao et al.\cite{A-TGRS} generalized the definition of the TGRS code to be the arbitrary twisted generalized Reed-Solomon (in short, A-TGRS) code. And then they constructed several classes of Hermitian self-dual A-TGRS codes\cite{A-TGRS 2}. 
	Recently, Hu et al.\cite{L P-TGRS} generalized TGRS codes to be the most general form, namely, $(\mathcal{L},\mathcal{P})$-twisted generalized Reed-Solomon (in short, $(\mathcal{L},\mathcal{P})$-TGRS) codes, and presented an in-depth and comprehensive investigation. So far, there are many study focusing on some special $(\mathcal{L},\mathcal{P})$-TGRS codes\citep{L P-TGRS,L P liang 1,L P liang 2,L P 3,L P 4,Guo}.
	
	
	
	To date, there are numerous works focused on Euclidean hulls of $(\mathcal{L},\mathcal{P})$-TGRS codes\citep{Hull TGRS 1,Hull TGRS 2,Hull TGRS 3,Hull TGRS 4,Hull TGRS 5,sok,Hull TGRS 7,Hull TGRS 8,Hull TGRS 9,Hull TGRS 10,Hull TGRS 11}. However, there exist only a few results on Hermitian hulls of $(\mathcal{L},\mathcal{P})$-TGRS codes, as listed below. 
	
	\begin{itemize}
		\item In 2021, Wu et al.\cite{Hull TGRS 10}  constructively proved that there exist  $(\mathcal{L},\mathcal{P})$-TGRS codes $\mathcal{C}(\mathcal{L},\mathcal{P},\boldsymbol{B}_1)$ with zero-dimensional Hermitian hull, where 
		$$
		\boldsymbol{B}_1 =
		\begin{pmatrix}
			\boldsymbol{0}_{h-1\times t-1}&\boldsymbol{0}_{h-1\times 1}&\boldsymbol{0}_{h-1\times n-k-t}\\
			\boldsymbol{0}_{1\times t-1} &d_{h,t}& \boldsymbol{0}_{1\times n-k-t} \\
			\boldsymbol{0}_{k-h\times t-1}&\boldsymbol{0}_{k-h\times 1}&\boldsymbol{0}_{k-h\times n-k-t}
		\end{pmatrix}_{k\times(n-k)}
		\left(0 \le h \le k-1, 0 \le t \le n-k-1 \right).
		$$	
		
		\item In 2022, Lin Sok \cite{sok}  constructively proved that there exist $(\mathcal{L},\mathcal{P})$-TGRS codes $\mathcal{C}(\mathcal{L},\mathcal{P},\boldsymbol{B}_2)$ with arbitrary Hermitian hull dimension, where 
		$$
		\boldsymbol{B}_2 =
		\begin{pmatrix}
			\boldsymbol{0}_{k-1\times 1}&\boldsymbol{0}_{k-1\times n-k-1}\\
			b_{k-1,0} &\boldsymbol{0}_{1\times n-k-1} \\
		\end{pmatrix}_{k\times(n-k)}.
		$$	
		
		\item In 2022, Luo et al. \cite{Luo} constructively proved that there exist $(\mathcal{L},\mathcal{P})$-TGRS codes $\mathcal{C}(\mathcal{L},\mathcal{P},\boldsymbol{B}_3)$ with $\mathrm{dim}(\mathrm{Hull}_{H}(\mathcal{C}_k))=k$, where 
		$$
		\boldsymbol{B}_3 =
		\begin{pmatrix}
			d_{0,0}&\boldsymbol{0}_{1\times n-k-1}\\
			\boldsymbol{0}_{k-1\times 1} &\boldsymbol{0}_{k-1\times n-k-1} \\
		\end{pmatrix}_{k\times(n-k)}.
		$$	
		

		\item In 2025, Gao et al. \cite{H Gao 1} determined the  Hermitian hull dimension for the $(\mathcal{L},\mathcal{P})$-TGRS code $\mathcal{C}(\mathcal{L},\mathcal{P},\boldsymbol{B}_4)$ with 
		$$
		\boldsymbol{B}_4 =
		\begin{pmatrix}
			\boldsymbol{0}_{k-1\times 1}&\boldsymbol{0}_{k-1\times n-k-1}\\
			1 &\boldsymbol{0}_{1\times n-k-1} \\
		\end{pmatrix}_{k\times(n-k)}.
		$$

		\item In 2026, for a special class of $(\mathcal{L},\mathcal{P})$-TGRS codes  $\mathcal{C}(\mathcal{L},\mathcal{P},\boldsymbol{B}_5)$, Lao et al. \cite{Bound} gave a  upper bound and a lower bound for $\mathrm{dim}\left(\mathrm{Hull}_{H}\left(\mathcal{C}(\mathcal{L},\mathcal{P},\boldsymbol{B}_5)\right)\right)$ and some sufficient conditions for that the code $\mathcal{C}(\mathcal{L},\mathcal{P},\boldsymbol{B}_5)$ has a given  Hermitian hull dimensions. Here, 
		$\mathcal{P} \subseteq \{h_1,h_2,\ldots,h_\ell\}\subseteq \{0,1,\ldots,k-1\}$,  $\mathcal{L}\subseteq\{t_1,t_2,\ldots,t_\ell\}\subseteq \{1,2,\ldots,n-k\}$ with $h_1,h_2,\ldots,h_\ell$ are distinct and $t_1,t_2,\ldots,t_\ell$ are distinct, and for each integer $s$ with $1\le s \le \ell$, 
		$$b_{ij}=\begin{cases}
			\eta_s, &\text{if }i=h_s\text{ and }j=t_s-1;\\
			0,&\text{otherwise}. 
		\end{cases}$$
	\end{itemize}
	

	Motivated by the above works, in this paper, we focus on a class of $(\mathcal{L},\mathcal{P})$-TGRS codes  $\mathcal{C}(\mathcal{L},\mathcal{P},\boldsymbol{B})$ with 
	$$
	\boldsymbol{B} =
	\begin{pmatrix}
		\boldsymbol{0}_{k-2\times 1}&\boldsymbol{0}_{k-2\times 1}&\boldsymbol{0}_{k-2\times n-k-2}\\
		b_{k-2,0}&b_{k-2,1} &\boldsymbol{0}_{1\times n-k-2} \\
		b_{k-1,0}&b_{k-1,1} &\boldsymbol{0}_{1\times n-k-2} \\
	\end{pmatrix}_{k\times(n-k)},
	$$
	where $\left\{b_{k-2,0}, b_{k-2,1},  b_{k-1,0},b_{k-1,1}\right\} \subseteq \mathbb{F}_{q^2}$. By taking a special class of the vector $\boldsymbol{\alpha}$, we divide three cases to completely determine the corresponding Hermitian Hull dimensions.  
	
	The paper is organized as follows. In Section \ref{section 2}, we give the definition of the $(\mathcal{L},\mathcal{P})$-TGRS code and some necessary lemmas. In Section \ref{section 3}, we determine the Hermitian hull dimension for a class of $(\mathcal{L},\mathcal{P})$-TGRS codes, and then obtain two classes of EAQECCs. In Section \ref{section 4}, we give some corresponding examples. In Section \ref{section 5}, we conclude the whole paper.

	\section{Preliminaries}
	\label{section 2}
	
	Throughout this paper, for convenience, we fix some notations as  follows. 
	
	\begin{itemize}
		\item $q=p^m$ and $j=\frac{q-3}{2}$, where $p$ is an odd prime and $m$ is a positive integer.
		\item $\mathbb{F}_{q^2}$ is the finite field with $q^2$ elements and $\mathbb{F}_{q^2}^{*}=\mathbb{F}_{q^2}\setminus \{0\}=\left< \gamma \right>$.
		\item $\mathbb{Z}$ denotes the set of all integers.
		\item For any matrix $\boldsymbol{G}$, $\boldsymbol{G}^\dagger$ denotes the transpose of $\boldsymbol{G}$ with respect to the Hermitian inner product.
		\item For the linear code $\mathcal{C}$, $\dim(\mathrm{Hull}_{H}(\mathcal{C}))$ denotes the Hermitian hull dimension of $\mathcal{C}$. 
		\item For any integer $r$ with $1\le r\le q$, $c_{r}=\left( q-1 \right) \beta ^{r\left( q-1 \right) }$.
		\item For any integers $r$ and $i$ with $2\le i\le q$, $1\le r\le q$, $e_{r}=\left( q-1 \right) \sum\limits_{t=0}^{i-1}{\gamma ^{rt\left( q-1 \right)}}$. 
		\item $\varGamma=\Delta\left[1+\left(\Delta\cdot\gamma^{(i-1)}\right)^{q-1}\right]\left(1+\gamma ^{i\left( q-1 \right)}\right)\left(1+\gamma ^{2\left( q-1 \right)}\right)-2\left(\frac{b_{k-2,1}}{b_{k-2,0}}\right)^{q+1}\gamma^{i(q-1)}\cdot\left(1+\gamma ^{\left( q-1 \right)}\right)$, 
		where 
		
		$\Delta=b_{k-1,1}-\frac{b_{k-2,1}b_{k-1,0}}{b_{k-2,0}}$.
		\item $\varGamma_{1}=b_{k-1,1}\left(\gamma^{i(q-1)}+1\right)\left(1+\gamma^{q-1}\right)\left(1+b_{k-1,1}^{q-1}\gamma^{(i-1)(q-1)}\right)-b_{k-1,0}^{q+1}\left(\gamma^{(i-1)(q-1)}-1\right)\left(\gamma^{(i+1)(q-1)}-1\right)$.
		\item $\varGamma_{2}=2b_{k-1,1}\left(1+\gamma^{1-q}\right)\left(1+b_{k-1,1}^{q-1}\gamma^{(q-1)(i-1)}\right)+b_{k-1,0}^{q+1}\left(1+\gamma^{i(q-1)}\right)\left(1+\gamma^{2(1-q)}\right)$. 
		
	\end{itemize}


	In this section, we recall the definition of the $(\mathcal{L},\mathcal{P})$-twisted generalized Reed-Solomon code, and give some necessary lemmas. 
	
	The definition of the $(\mathcal{L},\mathcal{P})$-twisted generalized Reed-Solomon code is given in the following 
	
	\begin{definition} $($\textup{\cite{L P-TGRS}}, \text{Definition $2$}$)$ 
		\label{D2.1}
		Let $n$, $k$ and $\ell$ be integers with $0 < k \leq n$ and $0 \le \ell \le n-k$,
		$\boldsymbol{B} = (b_{i,j})_{k \times (n-k)}$, $\mathcal{L} \subseteq \{0,1,\dots,n-k-1\}$ and $\mathcal{P} \subseteq \{0,1,\dots,k-1\}$.
		Let $\boldsymbol{\alpha} = (\alpha_1,\dots,\alpha_n) \in \mathbb{F}_{q^2}^n$ with $\alpha_i \neq \alpha_j$ ($i \neq j$), $\boldsymbol{v} = (v_1,\dots,v_n) \in (\mathbb{F}_{q^2}^*)^n$.
		The $(\mathcal{L},\mathcal{P})$-twisted generalized Reed-Solomon (in short, $(\mathcal{L},\mathcal{P})$-TGRS) code is defined as
		$$
		\mathcal{C}(\mathcal{L},\mathcal{P},\boldsymbol{B}) \triangleq \bigl\{ (v_1 f(\alpha_1),\dots,v_n f(\alpha_n)) \,\big|\, f(x) \in \mathcal{F}(\mathcal{L},\mathcal{P},\boldsymbol{B}) \bigr\},
		$$
		where
		\[
		\mathcal{F}(\mathcal{L},\mathcal{P},\boldsymbol{B}) = \left\{ \sum_{i=0}^{k-1} f_i x^i + \sum_{i \in \mathcal{P}} f_i \sum_{j \in \mathcal{L}} b_{i,j} x^{k+j} \,\bigg|\, f_i \in \mathbb{F}_{q^2},\ 0 \leq i \leq k-1 \right\}. 
		\]
		Specifically, when $\boldsymbol{v} = (1,1, \ldots, 1)\in (\mathbb{F}_{q^2}^*)^n$, the linear code is called a  $(\mathcal{L},\mathcal{P})$-TRS code.
		
	\end{definition}

	In this paper, we consider a special class of $(\mathcal{L},\mathcal{P})$-TRS codes with
	$$
	\boldsymbol{B} =
	\begin{pmatrix}
		0&0&0&\cdots&0\\
		\vdots&\vdots&\vdots& &\vdots\\
		0&0&0&\cdots&0\\
		b_{k-2,0} & b_{k-2,1}&0&\cdots&0 \\
		b_{k-1,0} & b_{k-1,1}&0&\cdots&0 \\
	\end{pmatrix}_{k\times(n-k)},
	$$
	where $\left\{b_{k-2,0}, b_{k-2,1},  b_{k-1,0},b_{k-1,1}\right\} \subseteq \mathbb{F}_{q^2}$, 
	and briefly denote it as  $\mathcal{C}_{k}(\boldsymbol{\alpha})$.
	
	\begin{remark}
		\label{G k}
		By Definition \ref{D2.1}, it is easy to know that $\mathcal{C}_{k}(\boldsymbol{\alpha})$ has the generator matrix
		\begin{equation}
			\label{G}
			\boldsymbol{G}=
			\begin{pmatrix}
				1 & \cdots & 1 \\
				\alpha_1 & \cdots & \alpha_n \\
				\vdots & & \vdots \\
				\alpha_1^{k-3} & \cdots & \alpha_n^{k-3} \\[6pt]
				\displaystyle \alpha_1^{k-2} + \sum_{j=0}^{1} b_{k-2,j} \alpha_1^{k+j}
				& \cdots &
				\displaystyle \alpha_n^{k-2} + \sum_{j=0}^{1} b_{k-2,j} \alpha_n^{k+j} \\[8pt]
				\displaystyle \alpha_1^{k-1} + \sum_{j=0}^{1} b_{k-1,j} \alpha_1^{k+j}
				& \cdots &
				\displaystyle \alpha_n^{k-1} + \sum_{j=0}^{1} b_{k-1,j} \alpha_n^{k+j}
			\end{pmatrix}.
		\end{equation}	
	\end{remark}



The following Lemma \ref{L2.1 } is crucial for calculating the matrix $\boldsymbol{GG}^\dagger$. 

\begin{lemma}
	\label{L2.1 }
	$($\textup{\cite{ref2}}$)$
	Let $s$ be a positive integer with $s \mid q^2-1$, and 
	$\alpha_i = \gamma^{\frac{q^2-1}{s}i}$ for $1 \le i \le s$. Then for any integer $t$ and $\beta \in \mathbb{F}_{q^2}^*$, we have
	\[
	\sum_{i=1}^{s} (\beta \alpha_i)^t =
	\begin{cases}
		\beta^t s, & \text{if } s \mid t; \\
		0, & \text{otherwise}.
	\end{cases}
	\]
\end{lemma}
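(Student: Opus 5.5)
The identity follows directly from the structure of the $s$-th roots of unity in $\mathbb{F}_{q^2}^{*}$.

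\textbf{Step 1 (reduce to roots of unity).} Set $\omega=\gamma^{\frac{q^2-1}{s}}$. Since $\gamma$ generates the cyclic group $\mathbb{F}_{q^2}^{*}$ of order $q^2-1$, the element $\omega$ has multiplicative order exactly $s$. Hence $\alpha_i=\omega^{i}$ for $1\le i\le s$, and these are precisely the $s$ distinct $s$-th roots of unity. Factoring out $\beta^t$ gives
\[
\sum_{i=1}^{s}(\beta\alpha_i)^t=\beta^t\sum_{i=1}^{s}(\omega^{t})^{i}.
\]
So it suffices to evaluate the geometric sum $S=\sum_{i=1}^{s}\xi^{i}$ with $\xi=\omega^t$. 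This is valid for every integer $t$, including negative ones, because $\omega$ and $\beta$ are invertible.

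\textbf{Step 2 (case $s\mid t$).} If $s\mid t$, then $\xi=1$, so $S=s$, read as an element of $\mathbb{F}_{q^2}$. The sum therefore equals $\beta^t s$.

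\textbf{Step 3 (case $s\nmid t$).} Since $\omega$ has order exactly $s$, the condition $s\nmid t$ gives $\xi\neq 1$. On the other hand $\xi^{s}=(\omega^{s})^{t}=1$. The geometric sum formula then gives
\[
S=\xi\cdot\frac{\xi^{s}-1}{\xi-1}=0.
\]
An alternative is to note that multiplying $S$ by $\xi$ permutes the terms, so $\xi S=S$ with $\xi\neq1$, which forces $S=0$. In either form the whole sum is $0$.

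The only point needing care is that $\omega$ has order exactly $s$, rather than merely dividing it; this is what makes $\xi\neq1$ equivalent to $s\nmid t$. It follows from $s\mid q^2-1$ together with $\gamma$ being a primitive element.
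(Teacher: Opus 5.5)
Your proof is correct. The paper gives no proof of this lemma (it is quoted from the literature), and your argument is the standard one: write the sum as $\beta^t\sum_{i=1}^{s}\xi^{i}$ with $\xi=\omega^{t}$ and $\omega=\gamma^{(q^2-1)/s}$, then evaluate the geometric sum, correctly handling the one delicate point that $\omega$ has order exactly $s$, so that $\xi\neq 1$ if and only if $s\nmid t$.
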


\begin{remark}
	\label{u+v-2}
	By taking $s=q-1$ in Lemma \ref{L2.1 }, it is easy to know that for any integers $u$ and $v$, we have 
	$$
	\sum_{i=1}^{q-1}{\left( \alpha _i\beta \right) ^{\left( u-1 \right) +q\left( v-1 \right)}}
	= \begin{array}{c}
		\begin{cases}
			\left( q-1 \right) \beta ^{\left( u-1 \right) +q\left( v-1 \right)} , &\text{if } \left( q-1 \right) \mid \left( u+v-2 \right);\\
			0 , &\text{otherwise}.\\
		\end{cases}
	\end{array}$$
	
\end{remark}

For an $[n,k,d]_{q^2}$ linear code $\mathcal{C}$, the following Lemmas \ref{L2.16 }-\ref{L2.17} provide a method for calculating  $\dim(\mathrm{Hull}_H(\mathcal{C}))$ and constructing EAQECCs. 

\begin{lemma}
	\label{L2.16 }
	$($\textup{\cite{EAQECC}}, \text{Proposition $3.2$}$)$ Let $\mathcal{C}$ be a classical $[n,k,d]_{q^2}$ code with parity-check matrix $\boldsymbol{H}$ and generator matrix $\boldsymbol{G}$. Then  
	$$
	\mathrm{rank}(\boldsymbol{HH}^\dagger) = n - k - \dim(\mathrm{Hull}_H(\mathcal{C})) $$
	and $$
	\mathrm{rank}\left( \boldsymbol{GG}^\dagger \right) = k - \dim(\mathrm{Hull}_H(\mathcal{C})) .$$
\end{lemma}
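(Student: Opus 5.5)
The plan is to identify $\mathrm{Hull}_H(\mathcal{C})$ with the left null space of the $k\times k$ matrix $\boldsymbol{G}\boldsymbol{G}^\dagger$, and then to apply the same argument to the Hermitian dual code. Here $\boldsymbol{G}^\dagger$ is the conjugate transpose, i.e. the $(r,s)$ entry of $\boldsymbol{G}^\dagger$ is $g_{s,r}^q$. The only inputs are that $\boldsymbol{G}$ has full row rank $k$, and that the Hermitian dual is involutive: $(\mathcal{C}^{\perp_\mathrm{H}})^{\perp_\mathrm{H}}=\mathcal{C}$.

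\emph{Step 1 (the $\boldsymbol{G}$ part).} Every codeword has a unique expression $\mathbf{c}=\mathbf{x}\boldsymbol{G}$ with $\mathbf{x}\in\mathbb{F}_{q^2}^k$, because the rows $\mathbf{g}_1,\dots,\mathbf{g}_k$ of $\boldsymbol{G}$ are linearly independent. Since the Hermitian product is linear in its first argument and the $\mathbf{g}_r$ span $\mathcal{C}$, we have $\mathbf{c}\in\mathcal{C}^{\perp_\mathrm{H}}$ if and only if $\langle \mathbf{c},\mathbf{g}_r\rangle_\mathrm{H}=0$ for all $r$. Now $\langle \mathbf{c},\mathbf{g}_r\rangle_\mathrm{H}=\sum_i c_i g_{r,i}^q$ is exactly the $r$-th entry of $\mathbf{c}\boldsymbol{G}^\dagger$. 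Hence
$$\mathbf{x}\boldsymbol{G}\in \mathrm{Hull}_H(\mathcal{C}) \iff \mathbf{x}\,\boldsymbol{G}\boldsymbol{G}^\dagger=\mathbf{0}.$$
The map $\mathbf{x}\mapsto \mathbf{x}\boldsymbol{G}$ is injective, so it gives an $\mathbb{F}_{q^2}$-linear isomorphism from the left kernel of $\boldsymbol{G}\boldsymbol{G}^\dagger$ onto $\mathrm{Hull}_H(\mathcal{C})$. Rank–nullity then yields $\dim(\mathrm{Hull}_H(\mathcal{C}))=k-\mathrm{rank}(\boldsymbol{G}\boldsymbol{G}^\dagger)$.

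\emph{Step 2 (the $\boldsymbol{H}$ part).} Here $\boldsymbol{H}$ is a parity-check matrix in the Hermitian sense, i.e. a $(n-k)\times n$ generator matrix of full row rank for $\mathcal{C}^{\perp_\mathrm{H}}$. Since $\mathcal{C}^{\perp_\mathrm{H}}$ has dimension $n-k$, Step 1 applied to this code gives $\dim(\mathrm{Hull}_H(\mathcal{C}^{\perp_\mathrm{H}}))=n-k-\mathrm{rank}(\boldsymbol{H}\boldsymbol{H}^\dagger)$. It remains to show $\mathrm{Hull}_H(\mathcal{C}^{\perp_\mathrm{H}})=\mathcal{C}^{\perp_\mathrm{H}}\cap(\mathcal{C}^{\perp_\mathrm{H}})^{\perp_\mathrm{H}}=\mathcal{C}^{\perp_\mathrm{H}}\cap\mathcal{C}=\mathrm{Hull}_H(\mathcal{C})$. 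For this we need $(\mathcal{C}^{\perp_\mathrm{H}})^{\perp_\mathrm{H}}=\mathcal{C}$.

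\emph{Step 3 (involutivity, the expected main point).} The Hermitian form is nondegenerate, so $\dim \mathcal{C}^{\perp_\mathrm{H}}=n-\dim\mathcal{C}$. To see this, note that $\mathcal{C}^{\perp_\mathrm{H}}$ is the Euclidean dual of $\mathcal{C}^{(q)}=\{(c_1^q,\dots,c_n^q)\}$, which has the same dimension as $\mathcal{C}$. The inclusion $\mathcal{C}\subseteq(\mathcal{C}^{\perp_\mathrm{H}})^{\perp_\mathrm{H}}$ uses $\langle \mathbf{b},\mathbf{a}\rangle_\mathrm{H}=\langle \mathbf{a},\mathbf{b}\rangle_\mathrm{H}^q$. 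Comparing dimensions then gives equality.

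This last step is the only place where the Frobenius twist needs care; everything else is linear algebra.
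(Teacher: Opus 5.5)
Your argument is correct and complete. The paper gives no proof of this lemma: it quotes Proposition 3.2 of the cited EAQECC reference and uses it as a black box. So what you have written is a self-contained justification of a step the paper takes on trust.

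Your Step 1 amounts to two observations. First, $\boldsymbol{G}\boldsymbol{G}^\dagger$ is the Gram matrix of $\langle\cdot,\cdot\rangle_{\mathrm{H}}$ restricted to $\mathcal{C}$, in the basis formed by the rows of $\boldsymbol{G}$. Second, $\mathrm{Hull}_H(\mathcal{C})$ is exactly the radical of this restricted form. Rank--nullity then gives the formula for every generator matrix at once.

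The other common route works differently. It picks a generator matrix whose first $\ell=\dim(\mathrm{Hull}_H(\mathcal{C}))$ rows span the hull, so the Gram matrix becomes block diagonal: a zero $\ell\times\ell$ block and a $(k-\ell)\times(k-\ell)$ block that must then be shown nonsingular. It then transfers the result to an arbitrary generator matrix $\boldsymbol{A}\boldsymbol{G}$ via $(\boldsymbol{A}\boldsymbol{G})(\boldsymbol{A}\boldsymbol{G})^\dagger=\boldsymbol{A}(\boldsymbol{G}\boldsymbol{G}^\dagger)\boldsymbol{A}^\dagger$. Your kernel argument replaces both the nonsingularity check and the change-of-basis step.

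Only the $\boldsymbol{G}\boldsymbol{G}^\dagger$ half is ever used in this paper, since Section 3 only computes $\mathrm{rank}(\boldsymbol{G}\boldsymbol{G}^\dagger)$. Your Step 1 alone therefore covers what the paper needs.

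Two minor points.

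First, the reduction to the rows $\mathbf{g}_r$ does not use linearity in the first argument, as you wrote. It uses conjugate-linearity in the second argument: $\langle \mathbf{c},\sum_r y_r\mathbf{g}_r\rangle_{\mathrm{H}}=\sum_r y_r^{q}\langle \mathbf{c},\mathbf{g}_r\rangle_{\mathrm{H}}$.

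Second, you read ``parity-check matrix'' as a generator matrix of $\mathcal{C}^{\perp_{\mathrm{H}}}$. Under the usual Euclidean convention $\mathcal{C}=\{\mathbf{x}:\boldsymbol{H}\mathbf{x}^{T}=\mathbf{0}\}$, the rows of $\boldsymbol{H}$ span $\mathcal{C}^{\perp}$ instead. The fix is one line:
\begin{itemize}
\item the entrywise $q$-th power $\boldsymbol{H}^{(q)}$ generates $(\mathcal{C}^{\perp})^{(q)}=\mathcal{C}^{\perp_{\mathrm{H}}}$;
\item $\boldsymbol{H}^{(q)}(\boldsymbol{H}^{(q)})^{\dagger}=\boldsymbol{H}^{(q)}\boldsymbol{H}^{T}=(\boldsymbol{H}\boldsymbol{H}^{\dagger})^{(q)}$, which has the same rank as $\boldsymbol{H}\boldsymbol{H}^{\dagger}$ because $x\mapsto x^{q}$ is a field automorphism of $\mathbb{F}_{q^2}$.
\end{itemize}
With that, your Step 2 goes through unchanged.
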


\begin{lemma}
	$($\textup{\cite{EAQECC}}, \text{Corollary $3.2$}$)$
	\label{L2.17}
	Let $\mathcal{C}$ and $\mathcal{C}^{\perp_H}$ be a classical linear code and its Hermitian dual with the parameters $[n,k,d]_{q^2}$ and $[n,k,d^{\perp_H}]_{q^2}$, respectively. Then there exist two EAQECCs with the parameters
	\[
	\bigl[\bigl[n,\,k - \dim\bigl(\mathrm{Hull}_H(\mathcal{C})\bigr),\,d,\,n - k - \dim\bigl(\mathrm{Hull}_H(\mathcal{C})\bigr)\bigr]\bigr]_q
	\]
	and
	\[
	\bigl[\bigl[n,\,n - k - \dim\bigl(\mathrm{Hull}_H(\mathcal{C})\bigr),\,d^{\perp_H},\,k - \dim\bigl(\mathrm{Hull}_H(\mathcal{C})\bigr)\bigr]\bigr]_q,
	\]
	respectively. Moreover, if $\mathcal{C}$ is MDS, then the above two EAQECCs are also MDS.
\end{lemma}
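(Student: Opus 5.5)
The plan is to deduce both codes from the general Hermitian construction of entanglement-assisted quantum codes. I will use that construction as a black box: if $\mathcal{D}$ is an $[n,\kappa,\delta]_{q^2}$ linear code with parity-check matrix $\boldsymbol{H}_{\mathcal{D}}$, then there is an EAQECC with parameters $[[n,\,2\kappa-n+c,\,\delta;\,c]]_q$, where $c=\mathrm{rank}(\boldsymbol{H}_{\mathcal{D}}\boldsymbol{H}_{\mathcal{D}}^\dagger)$. This is the $q$-ary Hermitian version of the Brun--Devetak--Hsieh / Wilde--Brun construction and underlies \cite{EAQECC}. With it in hand, the proof reduces to computing $c$ in two cases via Lemma \ref{L2.16 }, which expresses $\mathrm{rank}(\boldsymbol{HH}^\dagger)$ and $\mathrm{rank}(\boldsymbol{GG}^\dagger)$ through $h:=\dim(\mathrm{Hull}_H(\mathcal{C}))$.

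\emph{First code.} Take $\mathcal{D}=\mathcal{C}$, so $\kappa=k$ and $\delta=d$, with parity-check matrix $\boldsymbol{H}$. By Lemma \ref{L2.16 }, $c=\mathrm{rank}(\boldsymbol{HH}^\dagger)=n-k-h$. The number of logical qudits is then $2k-n+(n-k-h)=k-h$. This gives $[[n,k-h,d;n-k-h]]_q$, the first claimed code.

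\emph{Second code.} Take $\mathcal{D}=\mathcal{C}^{\perp_H}$, which has dimension $n-k$ and distance $d^{\perp_H}$. A generator matrix $\boldsymbol{G}$ of $\mathcal{C}$ is a parity-check matrix of $\mathcal{C}^{\perp_H}$, since $(\mathcal{C}^{\perp_H})^{\perp_H}=\mathcal{C}$. By Lemma \ref{L2.16 }, $c=\mathrm{rank}(\boldsymbol{GG}^\dagger)=k-h$. The number of logical qudits is $2(n-k)-n+(k-h)=n-k-h$. This gives $[[n,n-k-h,d^{\perp_H};k-h]]_q$.

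A consistency check is that $\mathrm{Hull}_H(\mathcal{C}^{\perp_H})=\mathcal{C}^{\perp_H}\cap\mathcal{C}=\mathrm{Hull}_H(\mathcal{C})$. So the same $h$ legitimately appears in both codes, and one can equally apply the parity-check formula of Lemma \ref{L2.16 } to $\mathcal{C}^{\perp_H}$ and obtain the same $c$.

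\emph{MDS claim.} The entanglement-assisted Singleton bound reads $2(\delta-1)\le n-\kappa_Q+c$, where $\kappa_Q$ is the number of logical qudits. If $\mathcal{C}$ is MDS, then $d=n-k+1$, and $\mathcal{C}^{\perp_H}$ is also MDS, so $d^{\perp_H}=k+1$.
\begin{itemize}
\item For the first code, the right-hand side is $n-(k-h)+(n-k-h)=2(n-k)=2(d-1)$.
\item For the second code, it is $n-(n-k-h)+(k-h)=2k=2(d^{\perp_H}-1)$.
\end{itemize}
So both codes meet the bound with equality. The only step I expect to need care is pinning down the exact form and hypotheses of the underlying Hermitian EA construction, in particular that $c$ is the rank of $\boldsymbol{H}\boldsymbol{H}^\dagger$ with $\dagger$ the conjugate transpose under $x\mapsto x^q$. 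The rest is the arithmetic above.
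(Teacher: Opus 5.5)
Your argument is correct and is the standard derivation behind Corollary~3.2 of \cite{EAQECC}, which the paper quotes without proof. You apply the Hermitian construction $[[n,2\kappa-n+c,\delta;c]]_q$ with $c=\mathrm{rank}(\boldsymbol{H}\boldsymbol{H}^\dagger)$ to $\mathcal{C}$ and to $\mathcal{C}^{\perp_H}$, read off $c=n-k-h$ and $c=k-h$ from Lemma~\ref{L2.16 }, and check equality in the EA Singleton relation. You are right to take $\mathcal{C}^{\perp_H}$ of dimension $n-k$; the $[n,k,d^{\perp_H}]_{q^2}$ in the statement is a misprint.

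One interpretive caveat. The bound $2(\delta-1)\le n-\kappa_Q+c$ is only guaranteed for $\delta\le (n+2)/2$. For an MDS $\mathcal{C}$, this holds for both codes at once only when $k=n/2$. So ``MDS'' in the statement should be read, as in the cited source, as attaining equality in that relation.
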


The following Lemmas \ref{L2.2 }-\ref{L2.14 } are crucial for proving our main results. And the proofs of Lemmas \ref{L2.6 1}-\ref{L2.14 } are given in the Appendix \ref{appendex A}.

\begin{lemma}
	\label{L2.2 }
	For any integers $i$ and $r$ with $2\le i\le q$ and $ 1\le r\le q$, then 
	$e_{r}=0$ if and only if $ri \equiv 0 \pmod{q+1}$.
\end{lemma}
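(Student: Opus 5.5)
We need: $e_r=(q-1)\sum_{t=0}^{i-1}\omega^{t}$ with $\omega=\gamma^{r(q-1)}$, and we want $e_r=0$ if and only if $ri\equiv 0\pmod{q+1}$, for $2\le i\le q$ and $1\le r\le q$. Since $q-1\neq 0$ in $\mathbb{F}_{q^2}$ (as $p\nmid q-1$), the plan is to show that $S=\sum_{t=0}^{i-1}\omega^t$ vanishes exactly under this congruence.

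First note that $\gamma^{q-1}$ has multiplicative order exactly $q+1$, since $\gamma$ generates $\mathbb{F}_{q^2}^*$ of order $q^2-1=(q-1)(q+1)$. Hence $\omega^{s}=1$ if and only if $(q+1)\mid rs$. We split into two cases according to whether $\omega=1$.

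\emph{Case $\omega=1$,} i.e. $(q+1)\mid r$. Since $1\le r\le q$, this is impossible. So always $\omega\ne 1$, and the geometric sum formula gives
\[
S=\frac{\omega^{i}-1}{\omega-1}.
\]
Therefore $S=0$ if and only if $\omega^i=1$, i.e. $\gamma^{ri(q-1)}=1$. By the order computation this holds if and only if $(q+1)\mid ri$, i.e. $ri\equiv 0\pmod{q+1}$.

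(If one allowed $r=q+1$, then $S=i$, which is nonzero in characteristic $p$ when $p\nmid i$. This shows why the range restriction $r\le q$ is needed, but that case is excluded here.)

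There is no real obstacle. The only care points are checking $q-1\not\equiv 0$ in $\mathbb{F}_{q^2}$, so the factor $q-1$ can be dropped, and ruling out $\omega=1$ using $r\le q$. Both are immediate, so the proof is a short direct argument with no need for Lemma \ref{L2.1 }.
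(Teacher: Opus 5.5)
Your proof is correct and follows essentially the paper's argument. The paper also uses that $\gamma^{r(q-1)}-1\neq 0$ for $1\le r\le q$, multiplies through so the sum telescopes to $(q-1)\bigl(\gamma^{ri(q-1)}-1\bigr)$, drops the factor $q-1$ because $p\nmid q-1$, and concludes from the order $q+1$ of $\gamma^{q-1}$; your observation that $(q+1)\nmid r$ just makes explicit the step the paper calls ``easy to know.''
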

\textbf{Proof.} For any $2\le i\le q$ and $1 \le r \le q$, it is easy to know that   $\gamma^{r(q-1)}-1\in \mathbb{F}_{q^2}^{*}$, then $e_r=0$ if and only if 
$\left(\gamma^{r(q-1)}-1\right)\left( q-1 \right)\sum\limits_{t=0}^{i-1}{ \gamma ^{rt\left( q-1 \right)}}=0$,  i.e., $(q-1)\left(\gamma^{ri(q-1)}-1\right)=0$. 
Now by $p\nmid q-1$. we know that  $e_{r} = 0$ if and only if $\gamma^{ri(q-1)}=1$, i.e., $ri \equiv 0 \pmod{q+1}$.

\begin{remark}
	\label{C2.1}
	$(1)$ If  $r=1$ or $q$,  then $\gcd(r, q+1)=1$. Furthermore, by Lemma \ref{L2.2 }, we know that $e_r=0$ if and only if $i\equiv 0\pmod {q+1}$. Note that $2\le i\le q$, and so $i\not\equiv 0\pmod {q+1}$, thus $e_r\neq0$, i.e., $e_1\neq 0$ and $e_q\neq 0$. 
	
	$(2)$ If $\gcd(i, q+1) = 1$, then by Lemma \ref{L2.2 }, we know that $e_r = 0$ if and only if $r \equiv 0 \pmod{q+1}$. Furthermore, for any $1 \le r \le q$, $e_{r} \neq 0$. 
\end{remark}

\begin{lemma}
	\label{L2.3 }
	Let $k_r=\dfrac{rh}{q+1}$. If $\gcd\left( i,q+1 \right) = h>1$, then for any $1\le r\le q$, $e_{r}=0$ if and only if $k_r\in \mathbb{Z}$. 
\end{lemma}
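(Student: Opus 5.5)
By Lemma \ref{L2.2 }, for $1\le r\le q$ we have $e_r=0$ if and only if $ri\equiv 0\pmod{q+1}$. So the plan is to reduce this congruence to the integrality of $k_r=\frac{rh}{q+1}$, where $h=\gcd(i,q+1)>1$.

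Write $i=h i'$ and $q+1=h m'$ with $\gcd(i',m')=1$.

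First, $ri\equiv 0\pmod{q+1}$ means $hm'\mid r h i'$. This is equivalent to $m'\mid r i'$, and since $\gcd(i',m')=1$, it is equivalent to $m'\mid r$.

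Second, $m'\mid r$ means $\frac{r}{m'}=\frac{rh}{q+1}=k_r$ is an integer.

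Chaining the two equivalences gives $e_r=0 \iff ri\equiv 0 \pmod{q+1} \iff m'\mid r \iff k_r\in\mathbb{Z}$, as claimed.

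There is no real obstacle here. The only point needing care is the cancellation step, which uses the coprimality of $i'$ and $m'$. The hypothesis $h>1$ is not needed for the equivalence itself; it only makes the statement nontrivial, since otherwise we are in the case of Remark \ref{C2.1}(2).
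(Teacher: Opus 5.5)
Your proof is correct and follows the paper's argument: use Lemma~\ref{L2.2 } to reduce $e_r=0$ to $(q+1)\mid ri$, then to $\frac{q+1}{h}\mid r$, which is exactly $k_r\in\mathbb{Z}$. The paper passes over the cancellation step silently, whereas you justify it via $\gcd\left(\frac{i}{h},\frac{q+1}{h}\right)=1$; you are also right that the hypothesis $h>1$ plays no role in the equivalence itself.
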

\textbf{Proof.} By Lemma \ref{L2.2 } and $\gcd\left( i,q+1 \right) = h$, we know that $e_r = 0$ if and only if  $r\equiv 0\left(\bmod \frac{q+1}{h}\right) $, i.e., there exists some $k_r\in \mathbb{Z}$ such that  $r=\frac{k_r(q+1)}{h}$. Namely, $e_r=0$ if and only if  $k_r=\dfrac{rh}{q+1}\in\mathbb{Z}$.

\begin{lemma}
	\label{L2.4 } If $\gcd\left( i,q+1 \right) = h>1$ and $2\le i\le q$, then the following two statements are true. 
	
	$(1)$ For $1\le r\le q-1$, $e_r$ and $e_{r+1}$ are not zeros simultaneously.
	
	$(2)$ For $1\le r\le q$, $e_r$ and $e_{q+1-r}$ are both zeros or not simultaneously. 
\end{lemma}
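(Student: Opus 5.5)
We reduce both statements to Lemma \ref{L2.3 }: since $\gcd(i,q+1)=h>1$, for every $1\le r\le q$ we have $e_r=0$ if and only if $k_r=\frac{rh}{q+1}\in\mathbb{Z}$, that is, if and only if $\frac{q+1}{h}\mid r$. So the zero set of $e_r$ on $1\le r\le q$ is exactly the set of positive multiples of $d:=\frac{q+1}{h}$ lying below $q+1$. Both parts then become elementary divisibility statements about $d$.

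For $(1)$, argue by contradiction. Suppose $e_r=e_{r+1}=0$ for some $1\le r\le q-1$; both indices lie in $[1,q]$, so Lemma \ref{L2.3 } applies to each. Then $d\mid r$ and $d\mid r+1$, hence $d\mid 1$, i.e. $d=1$ and $h=q+1$. But $h=\gcd(i,q+1)\le i\le q<q+1$, which is a contradiction. Hence $e_r$ and $e_{r+1}$ are never both zero. The restriction $2\le i\le q$ is used exactly here, to rule out $h=q+1$.

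For $(2)$, fix $1\le r\le q$ and note that $1\le q+1-r\le q$, so Lemma \ref{L2.3 } also applies to $e_{q+1-r}$. Since $k_r+k_{q+1-r}=\frac{rh+(q+1-r)h}{q+1}=h\in\mathbb{Z}$, we get $k_r\in\mathbb{Z}$ if and only if $k_{q+1-r}\in\mathbb{Z}$. Therefore $e_r=0$ if and only if $e_{q+1-r}=0$, which is the claimed simultaneity.

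There is no real obstacle. The only points needing care are checking that every index used stays in the range $[1,q]$ where Lemma \ref{L2.3 } holds, and observing that $h\le i\le q$ excludes the degenerate case $d=1$. That bound $h\le i$ holds because $h$ divides $i$ and $i\ge 2>0$.
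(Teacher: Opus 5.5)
Your proof is correct and follows the paper's argument: both reduce to Lemma \ref{L2.3 }. Part $(1)$ follows because $k_{r+1}-k_r=\frac{h}{q+1}\notin\mathbb{Z}$ (your $d\mid 1$ formulation), and part $(2)$ because $k_r+k_{q+1-r}=h\in\mathbb{Z}$; your identity is the accurate one, since the paper writes $k_{q+1-r}=1-k_r$, a harmless slip.
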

\textbf{Proof.} \textbf{(1)} From $2\le i\le q$, we have $h=\gcd\left( i,q+1 \right)\le q<q+1$, then $\frac{h}{q+1}\notin \mathbb{Z}$. Note that 
$k_{r+1}=\frac{(r+1)h}{q+1}=\frac{h}{q+1}+k_r$, thus $k_r$ and $k_{r+1}$ are not integers simultaneously. Furthermore, by Lemma \ref{L2.3 }, $e_r$ and $e_{r+1}$ are not zeros simultaneously. 

\textbf{(2)} Note that $k_{q+1-r}=\frac{(q+1-r)h}{q+1}=1-k_r$, and so,  $k_r$ and $k_{q+1-r}$ are both integers or not simultaneously. Furthermore, by Lemma \ref{L2.3 }, we know that $e_r$ and $e_{q+1-r}$ are both zeros or not simultaneously. 

\begin{lemma}
	\label{L2.5 }
	For $2\le i\le q$, we have  $e_{j+2}=\begin{array}{c}
		\begin{cases}
			0, & \text{if }  2\mid i;\\
			q-1,& \text{if } 2\nmid i. \\
		\end{cases} 
	\end{array} $
\end{lemma}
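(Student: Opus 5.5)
Write $j+2=\frac{q-3}{2}+2=\frac{q+1}{2}$, which is an integer because $q$ is odd. The plan is to evaluate the geometric sum in $e_{j+2}$ directly. The key fact is that $\omega:=\gamma^{\frac{q+1}{2}(q-1)}=\gamma^{\frac{q^2-1}{2}}$ is the unique element of order $2$ in $\mathbb{F}_{q^2}^{*}=\langle\gamma\rangle$. Since $p$ is odd, this gives $\omega=-1$.

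Substituting $r=j+2$ into the definition,
$$e_{j+2}=(q-1)\sum_{t=0}^{i-1}\gamma^{\frac{q+1}{2}t(q-1)}=(q-1)\sum_{t=0}^{i-1}(-1)^t .$$
The alternating sum $\sum_{t=0}^{i-1}(-1)^t$ equals $0$ when $i$ is even (the terms pair off) and equals $1$ when $i$ is odd. This gives $e_{j+2}=0$ for $2\mid i$ and $e_{j+2}=q-1$ for $2\nmid i$, as claimed.

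As a consistency check, the even case also follows from Lemma \ref{L2.2 }. When $2\mid i$, we have $(j+2)i=\frac{q+1}{2}\,i\equiv 0\pmod{q+1}$. When $i$ is odd this congruence fails, which agrees with $e_{j+2}\neq 0$.

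The only point needing care is justifying $\gamma^{(q^2-1)/2}=-1$. The element $\gamma$ is a generator, so $\gamma^{(q^2-1)/2}\neq 1$, while its square is $1$. Hence it is a root of $x^2-1$ other than $1$, and so it is $-1$, using $-1\neq 1$ in odd characteristic. The rest is routine.
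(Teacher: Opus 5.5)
Your proof is correct and follows essentially the same route as the paper's: both rewrite $j+2=\frac{q+1}{2}$, use $\gamma^{\frac{q^2-1}{2}}=-1$, and evaluate $(q-1)\sum_{t=0}^{i-1}(-1)^t$ by the parity of $i$. You additionally justify why $\gamma^{\frac{q^2-1}{2}}=-1$, which the paper takes for granted, and add a consistency check via Lemma \ref{L2.2 }; neither changes the argument.
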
	
\textbf{Proof.} By $j=\frac{q-3}{2}$, we have $j+2 =\frac{q+1}{2}$, then 
$\gamma^{(j+2)(q-1)} = \gamma^{\frac{q^2-1}{2}} = -1$. Furthermore, $$e_{j+2}=\left( q-1 \right) \sum\limits_{t=0}^{i-1}{\gamma ^{(j+2)t\left( q-1 \right)}}= \left( q-1 \right) \sum\limits_{t=0}^{i-1}{(-1)^t}= \begin{array}{c}
	\begin{cases}
		0, & \text{if }  2\mid i;\\
		q-1,& \text{if } 2\nmid i. \\
	\end{cases} 
\end{array}$$

\begin{lemma}
	\label{L2.6 1}
	Let $2\le i\le q$. If $\gcd\left( i,q+1 \right)= h>1$ and $2\mid i$, then $e_j=e_{j+4}=0$ if and only if $i=\frac{q+1}{2}$. 
\end{lemma}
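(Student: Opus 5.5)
By Lemma \ref{L2.2 }, $e_j=0$ holds exactly when $(q+1)\mid ji$, and $e_{j+4}=0$ holds exactly when $(q+1)\mid (j+4)i$. Here $j=\frac{q-3}{2}$ and $j+4=\frac{q+5}{2}$, so I will translate both conditions into divisibility statements about $i$ and then use $2\le i\le q$ together with $2\mid i$. This requires $1\le j$ and $j+4\le q$ so that Lemma \ref{L2.2 } applies; that holds when $q\ge 5$. The case $q=3$ gives $j=0$ and is handled directly from the definition: $e_0=(q-1)i\neq 0$ because $p\nmid i$ for $i\le q=3$ even, i.e.\ $i=2$. So the equivalence is vacuous there unless $i=\frac{q+1}{2}=2$, and I will check $q=3$, $i=2$ by hand.

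\emph{Sufficiency.} Suppose $i=\frac{q+1}{2}$. Then $ji=\frac{q+1}{2}\cdot\frac{q-3}{2}$. Since $i$ is even, $q+1\equiv 0 \pmod 4$, so $q\equiv 3\pmod 4$. Hence $\frac{q-3}{2}$ is even, and $(q+1)\mid ji$. In the same way, $\frac{q+5}{2}=\frac{q+1}{2}+2$ is even, so $(q+1)\mid (j+4)i$. Lemma \ref{L2.2 } then gives $e_j=e_{j+4}=0$.

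\emph{Necessity.} Write $2ji=(q-3)i$ and $2(j+4)i=(q+5)i$. The hypothesis $(q+1)\mid ji$ gives $(q+1)\mid (q-3)i$, and reducing modulo $q+1$ yields $(q+1)\mid 4i$. Similarly, $(q+5)i\equiv 4i \pmod{q+1}$, so the second condition gives the same thing. Since $q+1$ is even, set $N=\frac{q+1}{2}$; then $N\mid 2i$. Also $i\le q<q+1$, so $4i\in\{q+1,\,2(q+1),\,3(q+1)\}$, i.e.\ $i\in\{\frac{q+1}{4},\frac{q+1}{2},\frac{3(q+1)}{4}\}$.

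The sharper condition $(q+1)\mid ji$ itself, not merely its double, is what eliminates the other two values. Take $i=\frac{q+1}{4}$ and suppose $(q+1)\mid\frac{q-3}{2}\cdot\frac{q+1}{4}$. This means $4\mid\frac{q-3}{2}$, i.e.\ $q\equiv 3\pmod 8$. But $i$ even means $8\mid q+1$, i.e.\ $q\equiv 7\pmod 8$, a contradiction. The same computation excludes $i=\frac{3(q+1)}{4}$, since $3$ is a unit there: one needs $4\mid 3\cdot\frac{q-3}{2}$, which again forces $q\equiv 3\pmod 8$, while $i$ even forces $8\mid q+1$. Hence $i=\frac{q+1}{2}$.

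The main obstacle is this last parity bookkeeping modulo $8$. The division by $2$ in passing from $ji$ to $(q-3)i$ loses information, so the original conditions must be revisited rather than only the reduced congruence $(q+1)\mid 4i$. The hypothesis $\gcd(i,q+1)=h>1$ is automatically satisfied in all these cases and is not needed beyond consistency.
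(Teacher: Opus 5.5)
Your argument is correct for $q\ge 5$, which is the range where $e_j$ and $e_{j+4}$ are defined. It takes a different route from the paper.

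\textbf{The paper's route.} It works with $h=\gcd(i,q+1)$ via Lemma \ref{L2.3 }. Its key input is that $2\mid i$ already forces $e_{j+2}=0$ (Lemma \ref{L2.5 }, since $j+2=\frac{q+1}{2}$).
\begin{itemize}
\item Necessity: from $e_j=e_{j+2}=0$ it gets $\frac{2h}{q+1}\in\mathbb{Z}$. Hence $\frac{q+1}{2}\mid h\mid i$, and so $i=\frac{q+1}{2}$.
\item Sufficiency: $h=\frac{q+1}{2}$ gives $k_{r+2}=k_r+1$. So the vanishing of $e_r$ is $2$-periodic in $r$, and $e_{j+2}=0$ propagates to $e_j$ and $e_{j+4}$.
\end{itemize}

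\textbf{Your route.} You use only the criterion $e_r=0\iff(q+1)\mid ri$. You weaken $(q+1)\mid ji$ to $(q+1)\mid 4i$, list three candidates, and use the evenness of $i$ in a mod-$8$ computation to discard $\frac{q+1}{4}$ and $\frac{3(q+1)}{4}$. Sufficiency is a direct computation using $q\equiv 3\pmod 4$.

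\textbf{What each buys.} Your version is self-contained: it needs only Lemma \ref{L2.2 }, and you correctly note that the gcd hypothesis holds automatically. The paper's version encodes the parity of $i$ as the single divisibility $(q+1)\mid\frac{q+1}{2}\,i$. Subtracting $(q+1)\mid ji$ from it gives $(q+1)\mid 2i$ at once, so $2i=q+1$. That shortcut would let you skip both the candidate list and the mod-$8$ bookkeeping.

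\textbf{Correction on $q=3$.} Drop your aside about checking $q=3$ by hand; the check would produce a counterexample, not a confirmation. There the only admissible value is $i=2=\frac{q+1}{2}$, yet $e_0=(q-1)i=4=1\neq 0$ in $\mathbb{F}_9$. Likewise $e_4=2(1+\gamma^{8})=1\neq 0$. So the equivalence genuinely fails at $q=3$.

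This does not affect your main argument. It only means $q=3$ must be excluded, not verified. The definition of $e_r$ for $1\le r\le q$ already implicitly requires $q\ge 5$ here, and the paper applies the lemma only under its standing assumption $q\ge 7$.
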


\begin{lemma}
	\label{L2.6 } 
	Let $2\le i\le q$. If $\gcd\left( i,q+1 \right)= h>1$, then we have 
	$$ e_{j+1}=e_{j+3}=0\Longleftrightarrow
	 2\nmid i \text{ and } i=\frac{q+1}{2}\Longleftrightarrow
	q \equiv 1 \pmod 4 \text{ and } \frac{q+1}{2}\mid i.$$ 
	
	
	
\end{lemma}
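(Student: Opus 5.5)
The plan is to reduce everything to a single divisibility condition on $i$. I will use Lemma \ref{L2.4 }$(2)$ to tie $e_{j+3}$ to $e_{j+1}$, and then Lemma \ref{L2.2 } to turn the vanishing of $e_{j+1}$ into a congruence modulo $q+1$.

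\textbf{Step 1: removing $e_{j+3}$.} Since $j=\frac{q-3}{2}$, we have $j+1=\frac{q-1}{2}$ and $j+3=\frac{q+3}{2}=q+1-(j+1)$, with $1\le j+1\le q$. Because $h>1$, Lemma \ref{L2.4 }$(2)$ applies with $r=j+1$: $e_{j+1}$ and $e_{j+3}$ vanish together or not at all. Hence $e_{j+1}=e_{j+3}=0$ is equivalent to $e_{j+1}=0$. By Lemma \ref{L2.2 }, this holds if and only if
$$\frac{q-1}{2}\, i\equiv 0 \pmod{q+1}.$$

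\textbf{Step 2: solving the congruence, split by $q \bmod 4$.} The only input is that $\gcd(q-1,q+1)=2$.
\begin{itemize}
\item If $q\equiv 3\pmod 4$, then $\frac{q-1}{2}$ is odd. Any common divisor of $\frac{q-1}{2}$ and $q+1$ divides $2$ and is odd, so $\gcd(\frac{q-1}{2},q+1)=1$. The congruence then forces $q+1\mid i$, which is impossible for $2\le i\le q$. So $e_{j+1}\neq 0$ in this case.
\item If $q\equiv 1\pmod 4$, the congruence is equivalent to $\frac{q+1}{2}\mid \frac{q-1}{4}\,i$. Here $\frac{q+1}{2}$ is odd, and $\gcd(\frac{q-1}{4},\frac{q+1}{2})$ divides $\gcd(q-1,q+1)=2$, so it equals $1$. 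The condition therefore becomes $\frac{q+1}{2}\mid i$. Since $2\le i\le q<q+1$, this means exactly $i=\frac{q+1}{2}$, which is odd because $q\equiv1\pmod 4$.
\end{itemize}

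\textbf{Step 3: assembling the three equivalent statements.} Steps 1 and 2 show that $e_{j+1}=e_{j+3}=0$ holds exactly when $q\equiv 1\pmod 4$ and $\frac{q+1}{2}\mid i$; this is the equivalence of the first and third statements.

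For the middle statement, suppose $i=\frac{q+1}{2}$ is odd. Then $q+1\equiv 2\pmod 4$, so $q\equiv 1\pmod 4$, and trivially $\frac{q+1}{2}\mid i$. Conversely, if $q\equiv1\pmod4$ and $\frac{q+1}{2}\mid i$, then $i=\frac{q+1}{2}$ as noted in Step 2, and this value is odd.

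In this situation $h=\gcd(i,q+1)=\frac{q+1}{2}>1$, so the standing hypothesis is consistent with the conclusion.

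\textbf{Main obstacle.} The only delicate point is the gcd bookkeeping in Step 2, namely dividing out the common factor $2$ correctly when $q\equiv 1\pmod 4$. Everything else follows directly from Lemmas \ref{L2.2 } and \ref{L2.4 }.
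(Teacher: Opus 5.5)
Your proof is correct, and it takes a different route from the paper. The paper proves the cycle $(1)\Rightarrow(2)\Rightarrow(3)\Rightarrow(1)$ entirely inside its $h$-framework.
\begin{itemize}
\item For $(1)\Rightarrow(2)$, it gets the oddness of $i$ indirectly: by Lemma \ref{L2.4 }(1) consecutive $e_r$ cannot both vanish, so $e_{j+2}\neq 0$, and Lemma \ref{L2.5 } then forces $2\nmid i$. It gets $i=\frac{q+1}{2}$ by writing both vanishing conditions via Lemma \ref{L2.3 } as $\frac{(j+1)h}{q+1},\frac{(j+3)h}{q+1}\in\mathbb{Z}$ and subtracting, which gives $q+1\mid 2h$ and hence $\frac{q+1}{2}\mid h\mid i$.
\item For $(3)\Rightarrow(1)$, it uses $h=\frac{q+1}{2}$, so $e_r=0$ exactly when $r$ is even. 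It then again invokes Lemma \ref{L2.5 } to decide that $e_{j+1}$, rather than $e_{j+2}$, is the one that vanishes.
\end{itemize}
You instead collapse condition (1) to the single congruence $\frac{q-1}{2}\,i\equiv 0\pmod{q+1}$, using the symmetry $r\leftrightarrow q+1-r$ and Lemma \ref{L2.2 }. You then solve it outright by a gcd computation split on $q \bmod 4$.

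Your route is more self-contained. It never needs $h$, Lemma \ref{L2.3 }, or the parity criterion for $e_{j+2}$, and it shows explicitly that $e_{j+1}\neq 0$ whenever $q\equiv 3\pmod 4$. The symmetry you cite follows in one line from Lemma \ref{L2.2 }, so your argument also shows the hypothesis $h>1$ is not really needed: when $h=1$ all three statements are simply false. The paper's subtraction trick avoids any case split on $q\bmod 4$ in the forward direction. It also keeps this proof parallel to those of Lemmas \ref{L2.6 1} and \ref{L2.7 }, which use the same machinery.
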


\begin{lemma}
	\label{L2.7 }
	Let $2\le i\le q$. If $\gcd\left( i,q+1 \right)= h>1$ and  $e_{j+2}\neq 0$, then we have 
	$$ e_{j+4}=e_{j}=0\Longleftrightarrow  
	 q\equiv-1\pmod 4\text{ and } i=\frac{q+1}{4}\text{ or }\frac{3(q+1)}{4}\Longleftrightarrow 
	 q\equiv-1\pmod 4\text{ and } \frac{q+1}{4}\mid i.$$
	
	
	
\end{lemma}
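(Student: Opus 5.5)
The plan is to reduce everything to the congruence criterion of Lemma \ref{L2.2 }, namely $e_r=0 \Leftrightarrow ri\equiv 0 \pmod{q+1}$, and then do elementary arithmetic. First I use the hypotheses to pin down $i$ and $q$.

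\begin{itemize}
\item By Lemma \ref{L2.5 }, $e_{j+2}\neq 0$ means exactly that $i$ is odd.
\item For $q=3$ we have $i\in\{2,3\}$. Oddness forces $i=3$, but $\gcd(3,4)=1$ contradicts $h>1$. So we may assume $q\ge 5$.
\item Then $1\le j<j+4\le q$, so $e_j$ and $e_{j+4}$ are both covered by Lemmas \ref{L2.2 }--\ref{L2.4 }.
\end{itemize}

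Next I observe that $q+1-j=q+1-\frac{q-3}{2}=\frac{q+5}{2}=j+4$. Lemma \ref{L2.4 }(2) then says $e_j$ and $e_{j+4}$ vanish simultaneously or not at all. Hence the condition $e_{j+4}=e_j=0$ is equivalent to $e_j=0$ alone. By Lemma \ref{L2.2 }, this is $\frac{q-3}{2}\, i\equiv 0\pmod{q+1}$.

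The key computation rewrites $\frac{q-3}{2}i=\frac{q+1}{2}i-2i$. Since $i$ is odd, $\frac{q+1}{2}i\equiv \frac{q+1}{2}\pmod{q+1}$. So $e_j=0$ if and only if
$$2i\equiv \frac{q+1}{2}\pmod{q+1},$$
that is, $2i=\frac{q+1}{2}+m(q+1)$ for some $m\in\mathbb{Z}$. The left side is even, so $\frac{q+1}{2}$ must be even, i.e. $q\equiv -1\pmod 4$. Dividing by $2$ then gives $i=(2m+1)\frac{q+1}{4}$, so $i$ is an odd multiple of $\frac{q+1}{4}$. Conversely, if $q\equiv-1\pmod4$ and $i=(2m+1)\frac{q+1}{4}$, the displayed congruence holds by reversing the computation. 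The range $2\le i\le q<4\cdot\frac{q+1}{4}$ leaves only $i=\frac{q+1}{4}$ or $\frac{3(q+1)}{4}$. This proves the first equivalence.

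For the second equivalence, the forward direction is immediate. For the converse, suppose $q\equiv-1\pmod 4$ and $\frac{q+1}{4}\mid i$. From $i\le q$ we get $i=t\frac{q+1}{4}$ with $t\in\{1,2,3\}$. The value $t=2$ gives $i=\frac{q+1}{2}$, which is even (since $q\equiv -1\pmod 4$), contradicting the oddness of $i$. So $t\in\{1,3\}$, as required.

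The main point to handle carefully is the parity bookkeeping: the hypothesis $e_{j+2}\neq0$ (so $i$ odd) is what makes $\frac{q+1}{2}i$ collapse to $\frac{q+1}{2}$ and what rules out $t=2$. The small case $q=3$ must be disposed of separately so that $e_j$ with $j\ge 1$ lies in the range where Lemmas \ref{L2.2 } and \ref{L2.4 } apply.
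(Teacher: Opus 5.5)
Your argument is correct, and it reaches the result by a more direct route than the paper's. The paper works with $h=\gcd(i,q+1)$ and the integrality criterion of Lemma~\ref{L2.3 }. For the forward direction it takes differences of $k_j,k_{j+2},k_{j+4}$ to get $q+1\mid 4h$ and $q+1\nmid 2h$. From these it deduces $q\equiv -1\pmod 4$, $\frac{q+1}{4}\mid h$ and $\frac{q+1}{2}\nmid h$, and then passes to $i$ via $h\mid i$, tacitly using that $h$ is a gcd to conclude $\frac{q+1}{2}\nmid i$. For the converse it computes $h=\frac{q+1}{4}$, so that $e_r=0\iff 4\mid r$, and argues that exactly one of $e_j,e_{j+1},e_{j+2},e_{j+3}$ vanishes. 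Since $e_{j+1},e_{j+3}$ vanish together by Lemma~\ref{L2.4 }(2) and $e_{j+2}\neq 0$, that zero must be $e_j$. You instead apply Lemma~\ref{L2.2 } to $i$ itself and use $q+1-j=j+4$ to reduce to $e_j=0$. You then rewrite $j=\frac{q+1}{2}-2$, so that oddness of $i$ turns the condition into the single congruence $2i\equiv\frac{q+1}{2}\pmod{q+1}$, which you solve in both directions at once. This gives a shorter proof with no gcd bookkeeping and no counting step. You also explicitly dispose of $q=3$, where $j=0$ falls outside the range of Lemmas~\ref{L2.2 }--\ref{L2.4 }; the paper passes over this, harmlessly, since the hypotheses are then vacuous and Section~\ref{section 3} assumes $q\ge 7$. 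In return, the paper's version stays inside the uniform $h$/$k_r$ framework shared with Lemmas~\ref{L2.6 1}, \ref{L2.6 } and \ref{L2.8 }, and its converse exhibits the complete zero pattern of the $e_r$ in this case. The parity step excluding $i=\frac{q+1}{2}$ in the last equivalence is the same in both.
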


\begin{lemma}
	\label{L2.8 }
	Let $2\le i\le q$. If $\gcd (i, q+1)=h>1$, then for $r\in\left\{j, j+1, j+2, j+3, j+4\right\}$, $e_{r}\neq 0$ if and only if $2\nmid i$ and $\frac{q+1}{2}\nmid 2i$.
\end{lemma}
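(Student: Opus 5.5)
The plan is to reduce everything to a divisibility condition modulo $d:=\frac{q+1}{h}$ using Lemma \ref{L2.3 }. That lemma says that $e_r=0$ if and only if $\frac{rh}{q+1}\in\mathbb{Z}$, i.e. if and only if $d\mid r$. Since $h\le q<q+1$, we have $d\ge 2$. Recall $j+2=\frac{q+1}{2}$, so the five indices are $r=\frac{q+1}{2}+s$ with $s\in\{-2,-1,0,1,2\}$. The statement therefore becomes: $d\nmid \frac{q+1}{2}+s$ for all $|s|\le 2$ if and only if $2\nmid i$ and $(q+1)\nmid 4i$. (Note that $\frac{q+1}{2}\mid 2i$ is the same as $(q+1)\mid 4i$.)

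\textbf{Necessity of $2\nmid i$.} By Lemma \ref{L2.5 }, $e_{j+2}=0$ whenever $2\mid i$. So assume from now on that $i$ is odd. Then $h=\gcd(i,q+1)$ is odd. Since $q+1=dh$ is even, $d$ must be even. Compute $\frac{q+1}{2}=\frac{d}{2}\cdot h$. Because $h$ is odd, this gives $\frac{q+1}{2}\equiv \frac{d}{2}\pmod d$.

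\textbf{Reduction to $d\in\{2,4\}$.} The condition $d\mid \frac{q+1}{2}+s$ becomes $\frac{d}{2}+s\equiv 0\pmod d$.
\begin{itemize}
\item For $s=0$ this never holds, consistent with $e_{j+2}=q-1\neq0$.
\item For $s=\pm1$ it holds if and only if $\frac d2\equiv \mp1\pmod d$. Since $1\le \frac d2<d$, this forces $d=2$.
\item For $s=\pm2$ it holds if and only if $d=4$. (The case $\frac d2\equiv 2$ with $d=4$ is consistent, and $\frac d2\equiv d-2$ gives $d=4$ again.)
\end{itemize}
Hence, for odd $i$, all five $e_r$ are nonzero if and only if $d\notin\{2,4\}$, i.e. $h\notin\{\frac{q+1}{2},\frac{q+1}{4}\}$.

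\textbf{Translation back to $(q+1)\nmid 4i$.} It remains to show, for odd $i$, that $d\in\{2,4\}\iff (q+1)\mid 4i$.
\begin{itemize}
\item If $d\in\{2,4\}$, then $\frac{q+1}{4}$ or $\frac{q+1}{2}$ equals $h$, which divides $i$. Hence $(q+1)\mid 4i$.
\item Conversely, suppose $(q+1)\mid 4i$. Then $\frac{q+1}{\gcd(q+1,4)}\mid i$. Since this quantity also divides $q+1$, it divides $h$, so $d\mid\gcd(q+1,4)$ and thus $d\le 4$. As $d$ is even, $d\in\{2,4\}$.
\end{itemize}

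The main obstacle is only bookkeeping: keeping the residue computation $\frac{q+1}{2}\equiv\frac d2 \pmod d$ correct, which is where the oddness of $h$ is essential. One should also note that $d\ne1$, which holds because $i\le q$.
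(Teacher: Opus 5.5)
Your argument is correct, but it takes a different route from the paper.

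The paper assembles Lemma \ref{L2.8 } from earlier lemmas:
\begin{itemize}
\item Lemma \ref{L2.5 } handles $e_{j+2}$.
\item Lemma \ref{L2.6 } handles the pair $e_{j+1},e_{j+3}$.
\item Lemma \ref{L2.7 }, which needs $e_{j+2}\neq 0$, handles the pair $e_j,e_{j+4}$.
\end{itemize}
It then splits into $q\equiv 1\pmod 4$ and $q\equiv -1\pmod 4$. In each case it uses the parity of $\frac{q+1}{2}$ to rewrite ``$\frac{q+1}{2}\nmid i$'', respectively ``$\frac{q+1}{4}\nmid i$'', as ``$\frac{q+1}{2}\nmid 2i$''.

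You instead apply Lemma \ref{L2.3 } directly with $d=\frac{q+1}{h}$. The single congruence $\frac{q+1}{2}\equiv\frac d2\pmod d$, valid because $h$ is odd once $i$ is odd, treats all five indices at once. It shows that vanishing occurs exactly when $d=2$ (at $r=j+1,j+3$) or $d=4$ (at $r=j,j+4$). In effect this reproves the content of Lemmas \ref{L2.6 } and \ref{L2.7 }. Your gcd argument for $(q+1)\mid 4i\iff d\in\{2,4\}$ then replaces the case split on $q \bmod 4$.

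Your version is more self-contained and makes the mechanism transparent. The paper's is shorter only because Lemmas \ref{L2.6 }--\ref{L2.7 } are needed separately in the proof of Theorem \ref{T3.3} anyway.

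One cosmetic point: in the case $s=\pm2$, your reduction of $\pm 2$ modulo $d$ tacitly assumes $d>2$. For $d=2$ one checks directly that $\frac d2\pm2=1\pm2$ is odd, so ``iff $d=4$'' still holds.
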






\begin{lemma}
	\label{L2.9 }
	Let $ 2 \le i\le q$ with $i\equiv 0\pmod p$ or $i^2\equiv 1\pmod p$. Then the matrix $\boldsymbol{D}=\begin{pmatrix}
		&i(q-1)&e_{q}\\
		&e_{1} &i(q-1)
	\end{pmatrix}$ over $\mathbb{F}_{q^2}$ has $\mathrm{rank}(\boldsymbol{D})=2$.  
\end{lemma}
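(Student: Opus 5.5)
The plan is to show that $\det(\boldsymbol{D})\neq 0$ in $\mathbb{F}_{q^2}$, that is, $i^2(q-1)^2\neq e_1e_q$. Put $w=\gamma^{q-1}$, which has multiplicative order exactly $q+1$, and let $S=\sum_{t=0}^{i-1}w^t$. Then $e_1=(q-1)S$. Since $w^{q+1}=1$ we have $w^q=w^{-1}$, so $e_q=(q-1)\sum_{t=0}^{i-1}w^{-t}=(q-1)S^q$.

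Hence
$$\det(\boldsymbol{D})=(q-1)^2\bigl(i^2-S^{q+1}\bigr).$$
Because $q-1\equiv -1\pmod p$, the factor $(q-1)^2$ is nonzero. So it suffices to show $S^{q+1}\neq i^2$, where $i^2$ is read in the prime field. We split according to the two hypotheses.

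\emph{Case $i\equiv 0\pmod p$.} Here $i^2=0$ in $\mathbb{F}_{q^2}$, so we need $S\neq 0$, i.e. $e_1\neq 0$. This is exactly Remark \ref{C2.1}(1).

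\emph{Case $i^2\equiv 1\pmod p$.} Here $i^2=1$, so we need $S^{q+1}\neq 1$. Since $w\neq 1$, we can write $S=\frac{w^i-1}{w-1}$, and using $w^q=w^{-1}$ we get
$$S^{q+1}=\frac{(w^i-1)(w^{-i}-1)}{(w-1)(w^{-1}-1)}=\frac{2-w^i-w^{-i}}{2-w-w^{-1}}.$$
The denominator is nonzero. Thus $S^{q+1}=1$ if and only if $w^i+w^{-i}=w+w^{-1}$. We use the factorization
$$w^i+w^{-i}-w-w^{-1}=(w^i-w)\bigl(1-w^{-i-1}\bigr).$$
Because $w$ has order $q+1$, this vanishes if and only if $(q+1)\mid(i-1)$ or $(q+1)\mid(i+1)$. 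For $2\le i\le q$ the first is impossible, and the second forces $i=q$. But $i=q$ gives $i\equiv 0\pmod p$, so $i^2\equiv 0\not\equiv 1\pmod p$, which contradicts the assumption of this case. Hence $S^{q+1}\neq 1$.

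In both cases $\det(\boldsymbol{D})\neq 0$, so $\mathrm{rank}(\boldsymbol{D})=2$.

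The main obstacle is the second case: computing the norm $S^{q+1}$ cleanly and spotting the factorization above. The rest is bookkeeping on the range $2\le i\le q$.
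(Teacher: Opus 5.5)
Your proof is correct and follows essentially the same route as the paper. When $i\equiv 0\pmod p$, both arguments reduce to $e_1e_q\neq 0$ via Remark~\ref{C2.1}(1). When $i^2\equiv 1\pmod p$, both reduce $\det(\boldsymbol{D})=0$ to $w^i+w^{-i}=w+w^{-1}$ with $w=\gamma^{q-1}$, and use the same factorization to force $(q+1)\mid(i\pm 1)$. Your observation that $e_q=(q-1)S^q$, so that $e_1e_q=(q-1)^2S^{q+1}$ is a norm, is only a slightly cleaner packaging of the paper's step of multiplying through by $(\gamma^{q-1}-1)(\gamma^{q(q-1)}-1)$.
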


\begin{lemma}
	\label{L2.10 }
	Let $2 \le i\le q$. If $2 \nmid i$, then the following statements are true. 
	
	$(1)$ If $i\neq \frac{q+1}{2}$, then $e_{j+2}^2=e_{j+3}e_{j+1}$ if and only if $i=q$. 
	
	$(2)$ If $\frac{q+1}{2}\nmid 2i$, then $e_{j+1}^2=e_{j+2}e_{j}$ if and only if $i=q$. 
	
	$(3)$ If $\frac{q+1}{2}\nmid 2i$, then  $e_{j+3}^2=e_{j+4}e_{j+2}$ if and only if $i=q$. 
\end{lemma}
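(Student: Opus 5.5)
The plan is to make every $e_r$ explicit as a quotient of two geometric-type expressions, and then reduce each claimed identity to an equation in two unit-circle elements of $\mathbb{F}_{q^2}$.

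\textbf{Setup.} Put $\omega=\gamma^{q-1}$, an element of order exactly $q+1$. For $1\le r\le q$ we have $\omega^r\neq 1$, so
\[
e_r=(q-1)\frac{\omega^{ri}-1}{\omega^{r}-1}.
\]
Since $j+2=\frac{q+1}{2}$, we get $\omega^{j+2}=-1$. Hence
\[
\omega^{j+1}=-\omega^{-1},\quad \omega^{j+3}=-\omega,\quad \omega^{j}=-\omega^{-2},\quad \omega^{j+4}=-\omega^{2}.
\]
Because $i$ is odd, $(-1)^i=-1$, and Lemma \ref{L2.5 } gives $e_{j+2}=q-1$. This yields the closed forms
\[
e_{j+1}=(q-1)\frac{\omega^{-i}+1}{\omega^{-1}+1},\quad e_{j+3}=(q-1)\frac{\omega^{i}+1}{\omega+1},\quad e_{j}=(q-1)\frac{\omega^{-2i}+1}{\omega^{-2}+1},\quad e_{j+4}=(q-1)\frac{\omega^{2i}+1}{\omega^{2}+1}.
\]

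\textbf{Denominators.} We must check they are nonzero. Since $q+1>2$, we have $\omega\neq -1$. For $\omega^{2}\neq -1$ we need $q+1\nmid 4$, i.e.\ $q\neq 3$. In parts (2) and (3), the hypothesis $\frac{q+1}{2}\nmid 2i$ excludes $q=3$, since then $\frac{q+1}{2}=2$ divides $2i$. So all formulas are valid where they are used. I expect this bookkeeping on when the formulas for $e_r$ are legitimate to be the only real subtlety. The hypothesis $i\ne\frac{q+1}{2}$ in (1) would be used only to guarantee nonvanishing of the relevant $e$'s, if needed.

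\textbf{Part (1).} Since $q-1$ is invertible in characteristic $p$, the identity $e_{j+2}^2=e_{j+1}e_{j+3}$ is equivalent to
\[
\frac{(\omega^{i}+1)(\omega^{-i}+1)}{(\omega+1)(\omega^{-1}+1)}=1,
\]
that is, $\omega^{i}+\omega^{-i}=\omega+\omega^{-1}$. Since $x+x^{-1}=y+y^{-1}$ holds iff $x=y$ or $x=y^{-1}$, this means $\omega^{i-1}=1$ or $\omega^{i+1}=1$. Equivalently $q+1\mid i-1$ or $q+1\mid i+1$, and for $2\le i\le q$ this forces $i=q$. Conversely, $i=q$ satisfies it.

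\textbf{Part (2).} Set $x=\omega^{-i}$ and $y=\omega^{-1}$. The condition $e_{j+1}^2=e_{j+2}e_j$ becomes $(x+1)^2(y^2+1)=(x^2+1)(y+1)^2$. Expanding and cancelling the common terms, this reduces to
\[
2x(y^2+1)=2y(x^2+1),\quad\text{i.e.}\quad 2(x-y)(1-xy)=0.
\]
Since $p$ is odd, this holds iff $x=y$ or $xy=1$, i.e.\ $\omega^{i-1}=1$ or $\omega^{i+1}=1$. As in part (1), this is again equivalent to $i=q$.

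\textbf{Part (3).} This is identical to part (2), with $x=\omega^{i}$ and $y=\omega$ (conjugate-symmetric to part (2)).
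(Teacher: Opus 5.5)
Your proof is correct and follows essentially the same route as the paper. Both arguments use $\gamma^{(j+2)(q-1)}=-1$ and the oddness of $i$ to write each $e_{j+m}$ as $(q-1)(1+\omega^{\pm is})/(1+\omega^{\pm s})$, clear denominators, and factor the resulting identity into $(\omega^{i-1}-1)(\omega^{i+1}-1)=0$, which forces $i=q$. Your explicit check that the denominators are nonzero (in particular, that the hypothesis of (2)--(3) rules out $q=3$, where $\omega^{2}=-1$) is slightly more careful than the paper, which asserts these nonvanishing facts without comment.
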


\begin{lemma}\label{L2.11 }
	Let $2 \le i< q$. If $ 2\nmid i$ and $\frac{q+1}{2}\nmid 2i$, then for $b_{k-1,1}$, $b_{k-1,0}\in \mathbb{F}_{q^2}$, $\varGamma_{2}=0$ if and only if 
	\begin{equation}
		\label{E2.15}
		b_{k-1,1}\left(e_{j+1}-\frac{e_{j+2}}{e_{j+1}}e_j\right) +b_{k-1,1}^{q}\left(e_{j+3}-  \frac{e_{j+4}}{e_{j+3}}e_{j+2}\right)+b_{k-1,0}^{q+1}\left(e_{j+2}-\frac{e_{j+3}}{e_{j+2}}e_{j+1}\right)=0. 
	\end{equation}
	
\end{lemma}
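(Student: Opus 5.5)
The plan is to compute every $e_r$ in (\ref{E2.15}) in closed form and show that the left-hand side of (\ref{E2.15}) equals $\varGamma_2$ times an explicit nonzero element of $\mathbb{F}_{q^2}$. Put $\omega=\gamma^{q-1}$, which has order $q+1$, and $A=\omega^{i}=\gamma^{i(q-1)}$. Then $e_r=(q-1)\frac{A^r-1}{\omega^r-1}$ whenever $\omega^r\neq 1$, by the geometric-sum argument in the proof of Lemma \ref{L2.2 }.

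\textbf{Closed forms.} Since $j+2=\frac{q+1}{2}$, we have $\omega^{j+2}=-1$, as in Lemma \ref{L2.5 }. Because $2\nmid i$, also $A^{j+2}=(-1)^i=-1$. Hence for $s\in\{-2,-1,0,1,2\}$,
$$e_{j+2+s}=(q-1)\frac{-A^{s}-1}{-\omega^{s}-1}=(q-1)\frac{A^{s}+1}{\omega^{s}+1}.$$
Concretely, this gives
$$e_{j+2}=q-1,\qquad e_{j+3}=(q-1)\frac{1+A}{1+\omega},\qquad e_{j+1}=(q-1)\frac{\omega(1+A)}{A(1+\omega)},$$
$$e_{j+4}=(q-1)\frac{1+A^2}{1+\omega^2},\qquad e_{j}=(q-1)\frac{\omega^2(1+A^2)}{A^2(1+\omega^2)}.$$
The denominators in (\ref{E2.15}) are $e_{j+1},e_{j+2},e_{j+3}$. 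The hypotheses $2\nmid i$ and $\frac{q+1}{2}\nmid 2i$ make all of these nonzero. When $\gcd(i,q+1)>1$ this follows from Lemma \ref{L2.8 }, and when $\gcd(i,q+1)=1$ it follows from Remark \ref{C2.1}(2). So (\ref{E2.15}) is well defined, and the formulas above are legitimate; in particular $1+\omega\neq 0$, $1+A\neq 0$, and likewise for $1+\omega^{2}$, $1+A^{2}$.

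\textbf{Simplifying the three coefficients.} I would write each coefficient as a quotient, for instance $e_{j+1}-\frac{e_{j+2}}{e_{j+1}}e_j=\frac{e_{j+1}^2-e_{j+2}e_j}{e_{j+1}}$, and similarly for the other two. Then substitute the closed forms. For the first coefficient,
$$e_{j+1}^2-e_{j+2}e_j=(q-1)^2\frac{\omega^2}{A^2}\Bigl[\frac{(1+A)^2}{(1+\omega)^2}-\frac{1+A^2}{1+\omega^2}\Bigr].$$
The bracket factors as $\frac{2(A-\omega)(1-A\omega)}{(1+\omega)^2(1+\omega^2)}$. The other two coefficients produce the same bracket, multiplied by different powers of $\omega$ and $A$. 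For the $b_{k-1,0}^{q+1}$ coefficient the bracket is $e_{j+2}^2-e_{j+3}e_{j+1}=(q-1)^2\bigl[1-\frac{\omega(1+A)^2}{A(1+\omega)^2}\bigr]$, which gives $(q-1)^2\frac{(A-\omega)(1-A\omega)}{A(1+\omega)^2}$. Thus the common factor $(A-\omega)(1-A\omega)$ appears in all three terms. It is nonzero exactly when $i\not\equiv \pm1 \pmod{q+1}$. Since $i<q$, the only problem is $i=1$, which is excluded by $i\ge 2$; this is consistent with Lemma \ref{L2.10 }.

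\textbf{Matching with $\varGamma_2$.} After dividing out this nonzero common factor, I would multiply through by a common nonzero denominator such as $A(1+\omega)(1+\omega^2)$, adjusted as needed. I would also rewrite $b_{k-1,1}^q=b_{k-1,1}\cdot b_{k-1,1}^{q-1}$. The goal is to reach exactly
$$2b_{k-1,1}(1+\omega^{-1})\bigl(1+b_{k-1,1}^{q-1}A\omega^{-1}\bigr)+b_{k-1,0}^{q+1}(1+A)(1+\omega^{-2}),$$
which is $\varGamma_2$, using $\gamma^{1-q}=\omega^{-1}$ and $\gamma^{(q-1)(i-1)}=A\omega^{-1}$. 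Hence (\ref{E2.15})$\;=c\cdot\varGamma_2$ with $c\neq 0$, and the equivalence follows.

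The main obstacle is this last bookkeeping step. The powers of $\omega$ and $A$ attached to the three terms must line up so that the $b_{k-1,1}$ and $b_{k-1,1}^{q}$ terms combine into $2b_{k-1,1}(1+\omega^{-1})(1+b_{k-1,1}^{q-1}A\omega^{-1})$. If a stray factor such as $1+A$ or $1+\omega^{2}$ survives, I would absorb it into $c$ and check that it is nonzero by the hypotheses above.
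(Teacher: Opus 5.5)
Your proposal is correct and follows essentially the paper's own argument. Both use geometric-sum closed forms for $e_j,\dots,e_{j+4}$ coming from $\gamma^{(j+2)(q-1)}=-1$ and $2\nmid i$, cancel the nonzero factor $(A-\omega)(1-A\omega)$ (which is, up to a unit, the paper's $(\gamma^{(i-1)(q-1)}-1)(\gamma^{(i+1)(q-1)}-1)$), and then identify what remains with $\varGamma_2$. The paper only differs in clearing the denominators $e_{j+1}e_{j+2}e_{j+3}$ before substituting. Your use of Remark~\ref{C2.1}(2) when $\gcd(i,q+1)=1$ is actually more careful than the paper's appeal to Lemma~\ref{L2.8 }, which assumes $\gcd(i,q+1)>1$.

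The bookkeeping step you were worried about closes with no stray factor. Divide the left side of (\ref{E2.15}) by $(q-1)(A-\omega)(1-A\omega)$ and multiply by $A(1+A)(1+\omega)^2(1+\omega^2)$. This gives exactly $2(1+\omega)\bigl(b_{k-1,1}\omega+b_{k-1,1}^{q}A\bigr)+b_{k-1,0}^{q+1}(1+A)(1+\omega^{2})$, which equals $\omega^{2}\varGamma_{2}$.
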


By taking $b_{k-1,0}=0$ in Lemma \ref{L2.11 }, it is easy to obtain the following 
\begin{corollary}
	\label{C2.2}
	Let $2 \le i< q$. If $ 2\nmid i$ and $\frac{q+1}{2}\nmid 2i$, then for $b_{k-1,1}\in \mathbb{F}_{q^2}$, 
	$$b_{k-1,1}\left(e_{j+1}-\frac{e_{j+2}}{e_{j+1}}e_{j}\right)+b_{k-1,1}^{q}\left(e_{j+3}-\frac{e_{j+4}}{e_{j+3}}e_{j+2}\right)=0$$
	if and only if 
	$b_{k-1,1}=0$ or  $b_{k-1,1}^{q-1}\gamma^{(i-1)(q-1)}=-1$. 
\end{corollary}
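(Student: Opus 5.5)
The plan is to specialize Lemma \ref{L2.11 } to $b_{k-1,0}=0$ and then factor $\varGamma_{2}$ directly. The hypotheses of the corollary ($2\le i<q$, $2\nmid i$, $\frac{q+1}{2}\nmid 2i$) are exactly those of Lemma \ref{L2.11 }. So for every $b_{k-1,1}\in\mathbb{F}_{q^2}$, and with $b_{k-1,0}=0$, the displayed equation in the corollary holds if and only if $\varGamma_{2}=0$. This is because the third term of \eqref{E2.15} carries the factor $b_{k-1,0}^{q+1}=0$ and vanishes.

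Next I would substitute $b_{k-1,0}=0$ into the definition of $\varGamma_{2}$. The second summand disappears, which leaves
$$\varGamma_{2}=2\,b_{k-1,1}\left(1+\gamma^{1-q}\right)\left(1+b_{k-1,1}^{q-1}\gamma^{(q-1)(i-1)}\right).$$
For $b_{k-1,1}=0$ we read $b_{k-1,1}^{q-1}=0$; the product vanishes in that case anyway because of the factor $b_{k-1,1}$.

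It then suffices to show that the constant factors $2$ and $1+\gamma^{1-q}$ are nonzero in $\mathbb{F}_{q^2}$. First, $2\neq 0$ since $p$ is odd. Second, suppose $1+\gamma^{1-q}=0$. Then $\gamma^{q-1}=-1$, hence $\gamma^{2(q-1)}=1$. Since $\gamma$ has order $q^2-1$, this forces $q^2-1\mid 2(q-1)$, i.e. $q+1\mid 2$, which is impossible because $q\ge 3$.

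Hence $\varGamma_{2}=0$ if and only if $b_{k-1,1}=0$ or $1+b_{k-1,1}^{q-1}\gamma^{(i-1)(q-1)}=0$, i.e. $b_{k-1,1}^{q-1}\gamma^{(i-1)(q-1)}=-1$. Combined with the first step, this gives the claim. There is no real obstacle here: all the substantive work is contained in Lemma \ref{L2.11 }, and the only point requiring care is checking that $1+\gamma^{1-q}\neq 0$.
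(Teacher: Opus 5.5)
Your proof is correct and follows the paper's route: the paper obtains this corollary simply by setting $b_{k-1,0}=0$ in Lemma~\ref{L2.11 }. You additionally check that the factors $2$ and $1+\gamma^{1-q}$ in the reduced $\varGamma_{2}$ are nonzero, a routine step the paper leaves implicit.
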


\begin{lemma}\label{L2.12 }
	Let $2 \le i < q$. If $ 2\nmid i$ and $\left(\frac{q+1}{2}\right)\nmid 2i$, then for $b_{k-2,0}$, $b_{k-2,1}\in \mathbb{F}_{q^2}^{*}$ and
	$b_{k-1,1}$, $b_{k-1,0}\in \mathbb{F}_{q^2}$,  
	\begin{equation}\label{E2.22}
		\frac{b_{k-1,1}}{b_{k-2,1}}-\frac{b_{k-1,0}}{b_{k-2,0}}+\frac{\left(b_{k-1,1}^{q}-\frac{b_{k-1,0}^qb_{k-2,1}^q}{b_{k-2,0}^q}\right)\left(e_{j+3}-\frac{e_{j+4}}{e_{j+3}}e_{j+2}\right)}{b_{k-2,1}\left(e_{j+1}-\frac{e_{j+2}}{e_{j+1}}e_{j}\right)}=\frac{b_{k-2,1}^{q}\left(e_{j+3}-\frac{e_{j+4}}{e_{j+3}}e_{j+2}\right)}{b_{k-2,0}^{q+1}\left(e_{j+2}-\frac{e_{j+3}}{e_{j+2}}e_{j+1}\right)}
	\end{equation}
	if and only if $\varGamma=0$. 
\end{lemma}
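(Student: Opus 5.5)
The plan is to make everything explicit in terms of the primitive $(q+1)$-th root of unity $\omega=\gamma^{q-1}$ and reduce \eqref{E2.22} to $\varGamma=0$ by clearing denominators. Summing the geometric series gives $e_r=(q-1)\frac{\omega^{ri}-1}{\omega^{r}-1}$ whenever $\omega^r\neq1$. Since $j+2=\frac{q+1}{2}$, we have $\omega^{j+2}=-1$, so $\omega^{j+m}=-\omega^{m-2}$. Because $2\nmid i$, also $\omega^{(j+m)i}=-\omega^{(m-2)i}$. Hence for $m\in\{0,1,2,3,4\}$ we get
$$e_{j+m}=(q-1)\,\frac{1+\omega^{(m-2)i}}{1+\omega^{m-2}}.$$
In particular $e_{j+2}=q-1$, in agreement with Lemma \ref{L2.5 }.

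Set
$$A=e_{j+1}-\frac{e_{j+2}}{e_{j+1}}e_j,\qquad B=e_{j+3}-\frac{e_{j+4}}{e_{j+3}}e_{j+2},\qquad C=e_{j+2}-\frac{e_{j+3}}{e_{j+2}}e_{j+1}.$$
The hypotheses $2\nmid i$ and $\frac{q+1}{2}\nmid 2i$ give $e_j,\dots,e_{j+4}\neq0$ by Lemma \ref{L2.8 }, so the quotients are defined. Since $i<q$, Lemma \ref{L2.10 } gives $A,B,C\neq0$. So \eqref{E2.22} is a legitimate equation, and we may multiply through by nonzero quantities freely.

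First I use the conjugation symmetry $\omega^{q}=\omega^{-1}$. From the formula above, $e_{j+1}=\omega^{1-i}e_{j+3}$ and $e_j=\omega^{2-2i}e_{j+4}$. Substituting gives
$$A=\omega^{1-i}e_{j+3}-\frac{(q-1)\,\omega^{2-2i}e_{j+4}}{\omega^{1-i}e_{j+3}}=\omega^{1-i}B,$$
that is, $B/A=\omega^{i-1}$.

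Now write $\Delta=b_{k-1,1}-\frac{b_{k-2,1}b_{k-1,0}}{b_{k-2,0}}$, so that $b_{k-1,1}^q-\frac{b_{k-1,0}^q b_{k-2,1}^q}{b_{k-2,0}^q}=\Delta^q$. Then the left side of \eqref{E2.22} collapses to
$$\frac{\Delta+\Delta^{q}\omega^{i-1}}{b_{k-2,1}}=\frac{\Delta\left[1+\left(\Delta\gamma^{i-1}\right)^{q-1}\right]}{b_{k-2,1}}.$$
Multiplying \eqref{E2.22} by $b_{k-2,1}\,C/b_{k-2,0}^{q}$, it becomes equivalent to
$$\Delta\left[1+\left(\Delta\gamma^{i-1}\right)^{q-1}\right]C=\left(\frac{b_{k-2,1}}{b_{k-2,0}}\right)^{q+1}B.$$

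The remaining, and main computational, step is to evaluate $C$ and $B$ in closed form. From the formula for $e_{j+m}$,
$$C=(q-1)\left[1-\frac{(1+\omega^{i})(1+\omega^{-i})}{(1+\omega)(1+\omega^{-1})}\right],$$
and $B$ is computed similarly from $e_{j+2},e_{j+3},e_{j+4}$. After putting each over a common denominator and factoring, I expect
$$\frac{C}{B}=\frac{2\omega^{i}(1+\omega)}{(1+\omega^{i})(1+\omega^{2})},$$
possibly up to a factor common to both sides of the equation. Substituting this into the displayed equation and multiplying by the nonzero factor $(1+\omega^i)(1+\omega^2)/C$ yields exactly $\varGamma=0$, with $\omega=\gamma^{q-1}$. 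All factors we divide by are nonzero under the hypotheses ($\omega^i\neq-1$ and $\omega^2\neq-1$ because $\frac{q+1}{2}\nmid 2i$ and $e_{j+4}\neq0$), so each step is reversible and the equivalence holds.

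I expect the main obstacle to be this factorization of $C$ and $B$, in particular checking that the factors $(1-\omega^{i\pm1})$ which arise cancel between $C$ and $B$. If they do not cancel as cleanly as hoped, I would first expand both $C\,(1+\omega)(1+\omega^{-1})$ and $B\,(1+\omega)(1+\omega^{2})$ as polynomials in $\omega$ and $\omega^{i}$, and compare them directly.
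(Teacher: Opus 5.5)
Your reduction is correct, and it takes a different route from the paper's. The paper clears every denominator at once. It multiplies by $b_{k-2,1}b_{k-2,0}^{q+1}AC$, then by products of the $e_{j+m}$ and of the factors $1+\gamma^{\pm(q-1)}$, $1+\gamma^{\pm2(q-1)}$. It then expands into the three polynomials $\Psi,\Upsilon,\Omega$, factors each through $(\gamma^{(i-1)(q-1)}-1)(\gamma^{(i+1)(q-1)}-1)$, and cancels. In that route the merging of the $\Delta$ and $\Delta^q$ terms only appears after a final multiplication by $\gamma^{(2i+1)(q-1)}$.

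You first isolate the conjugation identities $e_{j+1}=\omega^{1-i}e_{j+3}$ and $e_j=\omega^{2-2i}e_{j+4}$; I checked both. These give $A=\omega^{1-i}B$, which explains at once why the left side collapses to $\Delta\bigl[1+(\Delta\gamma^{i-1})^{q-1}\bigr]/b_{k-2,1}$. What remains is a single scalar ratio, rather than the product $AB$ on the right-hand side. This is shorter and more transparent; the paper's expansion is more mechanical but needs no structural observation.

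However, the ratio you defer is the real content of the lemma, and the expression you display for it is upside down. Carrying out the computation gives
\[
C=-(q-1)\,\frac{\omega^{1-i}(\omega^{i+1}-1)(\omega^{i-1}-1)}{(1+\omega)^{2}},\qquad
B=-(q-1)\,\frac{2\omega(\omega^{i+1}-1)(\omega^{i-1}-1)}{(1+\omega)(1+\omega^{2})(1+\omega^{i})}.
\]
These come from the numerator identities
\[
(1+\omega)^2-\omega^{1-i}(1+\omega^i)^2=-\omega^{1-i}(\omega^{i+1}-1)(\omega^{i-1}-1),
\]
\[
(1+\omega^i)^2(1+\omega^2)-(1+\omega^{2i})(1+\omega)^2=-2\omega(\omega^{i+1}-1)(\omega^{i-1}-1).
\]
The common factor $(\omega^{i+1}-1)(\omega^{i-1}-1)$ is nonzero, since $q+1\nmid i\pm1$ for $2\le i<q$, and it cancels exactly:
\[
\frac{B}{C}=\frac{2\omega^{i}(1+\omega)}{(1+\omega^{i})(1+\omega^{2})}.
\]
So your displayed expression is $B/C$, not $C/B$. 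Substituted literally as $C/B$, it yields $2\omega^{i}(1+\omega)\,\Delta\bigl[1+(\Delta\gamma^{i-1})^{q-1}\bigr]=(b_{k-2,1}/b_{k-2,0})^{q+1}(1+\omega^{i})(1+\omega^{2})$, which is not $\varGamma=0$. With the corrected ratio, divide $\Delta\bigl[1+(\Delta\gamma^{i-1})^{q-1}\bigr]C=(b_{k-2,1}/b_{k-2,0})^{q+1}B$ by $C$ and multiply by $(1+\omega^{i})(1+\omega^{2})$. This gives exactly $\varGamma=0$, with no leftover factor. Your closing sentence is in fact consistent with this corrected version.

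Two minor slips:
\begin{itemize}
\item The multiplier should be $b_{k-2,1}C$, not $b_{k-2,1}C/b_{k-2,0}^{q}$.
\item $\omega^{\pm2}\neq-1$ holds because $\omega$ has order $q+1>4$, with $q=3$ vacuous here. It does not follow from $e_{j+4}\neq0$: $\omega^{j+4}=1$ would give $e_{j+4}=i(q-1)$, which need not vanish.
\end{itemize}
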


By taking $b_{k-1,0}=0$ or $b_{k-1,1}=0$ in Lemma \ref{L2.12 }, it is easy to obtain the following Corollaries \ref{C2.3}-\ref{C2.4}. 
\begin{corollary}
	\label{C2.3}
	Let $2 \le i < q$. If $ 2\nmid i$ and $\left(\frac{q+1}{2}\right)\nmid 2i$, then for $b_{k-2,0}$, $b_{k-2,1}\in \mathbb{F}_{q^2}^{*}$ and $b_{k-1,1}\in \mathbb{F}_{q^2}$, 
	$$\frac{b_{k-2,1}^{q}\left(e_{j+3}-\frac{e_{j+4}}{e_{j+3}}e_{j+2}\right)}{b_{k-2,0}^{q+1}\left(e_{j+2}-\frac{e_{j+3}}{e_{j+2}}e_{j+1}\right)}=\frac{b_{k-1,1}}{b_{k-2,1}}+\frac{b_{k-1,1}^{q}\left(e_{j+3}-\frac{e_{j+4}}{e_{j+3}}e_{j+2}\right)}{b_{k-2,1}\left(e_{j+1}-\frac{e_{j+2}}{e_{j+1}}e_{j}\right)} $$
	if and only if 
	$$b_{k-1,1}\left(1+b_{k-1,1}^{q-1}\cdot\gamma^{(i-1)(q-1)} \right)\left(1+\gamma ^{i\left( q-1 \right)}\right)\left(1+\gamma ^{2\left( q-1 \right)}\right)
	=2\left(\frac{b_{k-2,1}}{b_{k-2,0}}\right)^{q+1}\gamma^{i(q-1)}\cdot\left(1+\gamma ^{\left( q-1 \right)}\right).$$
\end{corollary}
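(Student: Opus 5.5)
The plan is to obtain Corollary \ref{C2.3} from Lemma \ref{L2.12 } by specializing to $b_{k-1,0}=0$, and then to simplify the resulting $\varGamma=0$. First, put $b_{k-1,0}=0$ in \eqref{E2.22}. The term $-\frac{b_{k-1,0}}{b_{k-2,0}}$ vanishes. The factor $b_{k-1,1}^{q}-\frac{b_{k-1,0}^qb_{k-2,1}^q}{b_{k-2,0}^q}$ reduces to $b_{k-1,1}^q$. So the left side of \eqref{E2.22} becomes exactly the right-hand side of the displayed equation in Corollary \ref{C2.3}, and the right side is unchanged. The two sides are merely swapped, which does not affect the equivalence. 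The hypotheses of Lemma \ref{L2.12 } ($2\le i<q$, $2\nmid i$, $\frac{q+1}{2}\nmid 2i$, $b_{k-2,0},b_{k-2,1}\neq0$, $b_{k-1,1}$ arbitrary, $b_{k-1,0}=0\in\mathbb{F}_{q^2}$) are all satisfied. Hence the first displayed equation holds if and only if $\varGamma=0$ with $b_{k-1,0}=0$.

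Second, I will compute $\varGamma$ at $b_{k-1,0}=0$. Then $\Delta=b_{k-1,1}-\frac{b_{k-2,1}\cdot 0}{b_{k-2,0}}=b_{k-1,1}$. Also $\left(\Delta\gamma^{i-1}\right)^{q-1}=b_{k-1,1}^{q-1}\gamma^{(i-1)(q-1)}$, and this is unambiguous even if $b_{k-1,1}=0$, since $q-1\ge 2$. Therefore
\[
\varGamma=b_{k-1,1}\left(1+b_{k-1,1}^{q-1}\gamma^{(i-1)(q-1)}\right)\left(1+\gamma^{i(q-1)}\right)\left(1+\gamma^{2(q-1)}\right)-2\left(\frac{b_{k-2,1}}{b_{k-2,0}}\right)^{q+1}\gamma^{i(q-1)}\left(1+\gamma^{q-1}\right).
\]
Setting this equal to zero and moving the second term to the other side gives exactly the second displayed equation of Corollary \ref{C2.3}.

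The only point needing care is the first step. The expression $b_{k-1,1}^q$ appears both with and without the $b_{k-1,0}$ correction, so I will check that the specialization is termwise exact. I will also confirm that no denominator in \eqref{E2.22} involves $b_{k-1,0}$ or $b_{k-1,1}$, so substituting $b_{k-1,0}=0$ introduces no new division by zero. The denominators $e_{j+1},e_{j+2},e_{j+3}$ are nonzero by Lemma \ref{L2.8 } or Remark \ref{C2.1}(2) under the stated hypotheses. The denominators $e_{j+1}-\frac{e_{j+2}}{e_{j+1}}e_j$ and $e_{j+2}-\frac{e_{j+3}}{e_{j+2}}e_{j+1}$ are nonzero by Lemma \ref{L2.10 }, since $i\neq q$. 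These are already implicit in Lemma \ref{L2.12 }, so the corollary follows directly.
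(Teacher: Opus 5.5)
Your proposal is correct and follows the paper's own route: the paper obtains Corollary~\ref{C2.3} by setting $b_{k-1,0}=0$ in Lemma~\ref{L2.12 }. Then $\Delta=b_{k-1,1}$, and $\varGamma=0$ becomes exactly the stated identity. Your extra check that the denominators do not vanish (via Lemma~\ref{L2.8 } or Remark~\ref{C2.1}, and Lemma~\ref{L2.10 } with $i\neq q$ and $i\neq\frac{q+1}{2}$) is sound and makes explicit what the paper leaves implicit.
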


\begin{corollary}
	\label{C2.4}
	Let $2 \le i < q$ and $\Delta_{1}=-\frac{b_{k-2,1}b_{k-1,0}}{b_{k-2,0}}$. If $ 2\nmid i$ and $\left(\frac{q+1}{2}\right)\nmid 2i$, then for $b_{k-2,0}$, $b_{k-2,1}\in \mathbb{F}_{q^2}^{*}$ and $b_{k-1,0}\in \mathbb{F}_{q^2}$, 
	$$\frac{b_{k-2,1}^q\left(e_{j+3}-\frac{e_{j+4}}{e_{j+3}}e_{j+2}\right)}{b_{k-2,0}^{q+1}\left(e_{j+2}-\frac{e_{j+3}}{e_{j+2}}e_{j+1}\right)}+\frac{b_{k-1,0}}{b_{k-2,0}}=\frac{-b_{k-1,0}^qb_{k-2,1}^q\left(e_{j+3}-\frac{e_{j+4}}{e_{j+3}}e_{j+2}\right)}{b_{k-2,0}^qb_{k-2,1}\left(e_{j+1}-\frac{e_{j+2}}{e_{j+1}}e_{j}\right) } $$
	if and only if 
	$$\Delta_{1}\left[1+\left(\Delta_{1}\cdot\gamma^{(i-1)}\right)^{q-1} \right]\left(1+\gamma ^{i\left( q-1 \right)}\right)\left(1+\gamma ^{2\left( q-1 \right)}\right)=2\left(\frac{b_{k-2,1}}{b_{k-2,0}}\right)^{q+1}\gamma^{i(q-1)}\cdot\left(1+\gamma ^{\left( q-1 \right)}\right).$$
\end{corollary}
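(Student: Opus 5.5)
The plan is to obtain this statement as a direct specialization of Lemma \ref{L2.12 }, as the remark preceding Corollaries \ref{C2.3}--\ref{C2.4} indicates, by taking $b_{k-1,1}=0$. The hypotheses of Corollary \ref{C2.4} are those of Lemma \ref{L2.12 } with $b_{k-1,1}=0\in\mathbb{F}_{q^2}$: namely $2\le i<q$, $2\nmid i$, $\frac{q+1}{2}\nmid 2i$, $b_{k-2,0},b_{k-2,1}\in\mathbb{F}_{q^2}^{*}$ and $b_{k-1,0}\in\mathbb{F}_{q^2}$. So the equivalence ``\eqref{E2.22} holds $\Longleftrightarrow$ $\varGamma=0$'' is available with $b_{k-1,1}=0$. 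Note also that all denominators appearing (the quantities $e_{j+1},e_{j+2},e_{j+3}$ and the bracketed combinations) are the same as in Lemma \ref{L2.12 }, so they are nonzero under these hypotheses by Lemma \ref{L2.8 } and Lemma \ref{L2.10 }; no new nondegeneracy check is needed.

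First I would substitute $b_{k-1,1}=0$ into \eqref{E2.22}. The term $\frac{b_{k-1,1}}{b_{k-2,1}}$ vanishes, and in the third summand the factor $b_{k-1,1}^{q}-\frac{b_{k-1,0}^qb_{k-2,1}^q}{b_{k-2,0}^q}$ becomes $-\frac{b_{k-1,0}^qb_{k-2,1}^q}{b_{k-2,0}^q}$. Thus \eqref{E2.22} reads
$$-\frac{b_{k-1,0}}{b_{k-2,0}}-\frac{b_{k-1,0}^qb_{k-2,1}^q\left(e_{j+3}-\frac{e_{j+4}}{e_{j+3}}e_{j+2}\right)}{b_{k-2,0}^qb_{k-2,1}\left(e_{j+1}-\frac{e_{j+2}}{e_{j+1}}e_{j}\right)}=\frac{b_{k-2,1}^{q}\left(e_{j+3}-\frac{e_{j+4}}{e_{j+3}}e_{j+2}\right)}{b_{k-2,0}^{q+1}\left(e_{j+2}-\frac{e_{j+3}}{e_{j+2}}e_{j+1}\right)}.$$
Moving $-\frac{b_{k-1,0}}{b_{k-2,0}}$ to the right-hand side gives exactly the first displayed equation of Corollary \ref{C2.4}, so the two equations are equivalent.

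Second, I would substitute $b_{k-1,1}=0$ into $\varGamma$. This gives $\Delta=-\frac{b_{k-2,1}b_{k-1,0}}{b_{k-2,0}}=\Delta_1$, so
$$\varGamma=\Delta_1\left[1+\left(\Delta_1\cdot\gamma^{(i-1)}\right)^{q-1}\right]\left(1+\gamma^{i(q-1)}\right)\left(1+\gamma^{2(q-1)}\right)-2\left(\frac{b_{k-2,1}}{b_{k-2,0}}\right)^{q+1}\gamma^{i(q-1)}\left(1+\gamma^{q-1}\right).$$
Hence $\varGamma=0$ is precisely the second displayed equation. Chaining these two equivalences through Lemma \ref{L2.12 } finishes the proof.

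There is no real obstacle here. The only points needing care are that setting $b_{k-1,1}=0$ is allowed because Lemma \ref{L2.12 } only requires $b_{k-1,1}\in\mathbb{F}_{q^2}$, and the sign bookkeeping when rearranging \eqref{E2.22}.
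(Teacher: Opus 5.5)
Your proposal is correct and is the paper's own argument: the paper obtains Corollary \ref{C2.4} by setting $b_{k-1,1}=0$ in Lemma \ref{L2.12 }. This turns \eqref{E2.22} into the displayed identity once $-\frac{b_{k-1,0}}{b_{k-2,0}}$ is moved across, and turns $\Delta$ into $\Delta_1$, so that $\varGamma=0$ becomes the second equation. (A minor aside on your side remark: Lemma \ref{L2.8 } assumes $\gcd(i,q+1)>1$, so the coprime case needs Remark \ref{C2.1}(2) for the nonvanishing of the $e_r$; as you note, this is inherited from Lemma \ref{L2.12 } anyway.)
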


\begin{lemma}\label{L2.13 }
	Let $ 2 \le i\le q$. If $ 2\mid i$, then for $b_{k-1,1}\in \mathbb{F}_{q^2}$,  
	\begin{equation}
		\label{E2.31}
		b_{k-1,1}e_{j+1}+b_{k-1,1}^{q}e_{j+3}=0
	\end{equation}
	if and only if $b_{k-1,1}=0$ or 
	$b_{k-1,1}^{q-1}\gamma^{(i-1)(q-1)}=1$. 
\end{lemma}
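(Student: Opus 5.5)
The plan is to compute $e_{j+1}$ and $e_{j+3}$ in closed form, using $2\mid i$, and then reduce the equation \eqref{E2.31} to a norm-type condition on $b_{k-1,1}$.

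\textbf{Closed forms.} Put $\omega=\gamma^{q-1}$; it has multiplicative order $q+1$. Since $j+2=\frac{q+1}{2}$, we have $\omega^{j+2}=-1$. Hence
$$\omega^{j+1}=-\omega^{-1}, \qquad \omega^{j+3}=-\omega.$$
Summing the geometric series gives
$$e_{j+1}=(q-1)\frac{1-(-\omega^{-1})^i}{1+\omega^{-1}}, \qquad e_{j+3}=(q-1)\frac{1-(-\omega)^i}{1+\omega}.$$
Because $i$ is even, these become
$$e_{j+1}=(q-1)\frac{1-\omega^{-i}}{1+\omega^{-1}}=(q-1)\,\omega^{1-i}\,\frac{\omega^i-1}{\omega+1}, \qquad e_{j+3}=(q-1)\frac{1-\omega^i}{1+\omega}.$$
The denominators are nonzero because $\omega\neq -1$: the order of $\omega$ is $q+1>2$. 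It follows that
$$e_{j+3}=-\omega^{i-1}e_{j+1}.$$

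\textbf{Case $e_{j+1}=0$.} This happens exactly when $\omega^i=1$, i.e. $(q+1)\mid i$, which is impossible for $2\le i\le q$ (this is also Lemma \ref{L2.2 }, since $\omega^{j+1}$ has the form $\gamma^{r(q-1)}$). So $e_{j+1}\neq0$.

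\textbf{Reduction.} Dividing \eqref{E2.31} by $e_{j+1}$ turns it into
$$b_{k-1,1}-\omega^{i-1}b_{k-1,1}^{q}=0.$$
If $b_{k-1,1}=0$ this holds trivially. Otherwise divide by $b_{k-1,1}$ to get $1=b_{k-1,1}^{q-1}\gamma^{(i-1)(q-1)}$, which is exactly the stated condition. Each step is reversible, so the equivalence follows.

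\textbf{Main obstacle.} The only delicate point is getting the sign right when simplifying the geometric sums, namely $(-\omega^{\pm1})^i=\omega^{\pm i}$ for even $i$. It is also worth confirming that the resulting ratio $-\omega^{i-1}$ produces $+1$, not $-1$, in the final condition. This sign is what distinguishes the present statement from the odd-$i$ analogue, Corollary \ref{C2.2}, where the condition has $-1$.
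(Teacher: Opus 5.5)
Your proof is correct and follows essentially the same route as the paper. Both use $\gamma^{(j+2)(q-1)}=-1$ and the evenness of $i$ to write $e_{j+1}$ and $e_{j+3}$ as closed geometric sums, and then factor the equation as a nonzero constant times $b_{k-1,1}\bigl(1-b_{k-1,1}^{q-1}\gamma^{(i-1)(q-1)}\bigr)=0$. Your identity $e_{j+3}=-\gamma^{(i-1)(q-1)}e_{j+1}$, with $e_{j+1}\neq 0$, is a tidier version of the paper's step of clearing denominators and multiplying by $\gamma^{i(q-1)}$; there the nonvanishing of $1-\gamma^{i(q-1)}$ plays the same role.
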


\begin{lemma}
	\label{L 1}
	Let $ 2 \le i\le q$. If $ 2\mid i$, then for $b_{k-2,0}\in \mathbb{F}_{q^2}^{*}$ and $b_{k-2,1}, b_{k-1,0}, b_{k-1,1}\in \mathbb{F}_{q^2}$, 
	$$ \left(b_{k-1,1}-\frac{b_{k-2,1}b_{k-1,0}}{b_{k-2,0}}\right)e_{j+1}+ \left(b_{k-1,1}^q-\frac{b_{k-2,1}^qb_{k-1,0}^q}{b_{k-2,0}^q}\right)e_{j+3}= 0$$
	if and only if $b_{k-1,1}b_{k-2,0}=b_{k-2,1}b_{k-1,0}$ or $\left(b_{k-1,1}-\frac{b_{k-2,1}b_{k-1,0}}{b_{k-2,0}}\right)^{q-1}\gamma^{(i-1)(q-1)}=1$. 
\end{lemma}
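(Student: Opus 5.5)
The plan is to reduce the statement directly to Lemma \ref{L2.13 }, applied not to $b_{k-1,1}$ itself but to the auxiliary element
$$\Delta=b_{k-1,1}-\frac{b_{k-2,1}b_{k-1,0}}{b_{k-2,0}}\in\mathbb{F}_{q^2},$$
which is well defined because $b_{k-2,0}\in\mathbb{F}_{q^2}^{*}$. Lemma \ref{L2.13 } is stated for an arbitrary element of $\mathbb{F}_{q^2}$ under the sole hypotheses $2\le i\le q$ and $2\mid i$. These are exactly the hypotheses here, so the lemma may be applied with $b_{k-1,1}$ replaced by $\Delta$.

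The first step is to rewrite the left-hand side of the claimed identity in terms of $\Delta$. Since the Frobenius map $x\mapsto x^q$ is a field automorphism of $\mathbb{F}_{q^2}$, it is additive and multiplicative. Hence
$$\Delta^q=b_{k-1,1}^q-\frac{b_{k-2,1}^q b_{k-1,0}^q}{b_{k-2,0}^q},$$
and the equation in the statement is precisely $\Delta e_{j+1}+\Delta^q e_{j+3}=0$. By Lemma \ref{L2.13 }, this holds if and only if $\Delta=0$ or $\Delta^{q-1}\gamma^{(i-1)(q-1)}=1$.

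The second step translates the two alternatives back into the original parameters. The second alternative is already written in the form required by the statement. For the first, multiply by $b_{k-2,0}\neq 0$: then $\Delta=0$ is equivalent to $b_{k-1,1}b_{k-2,0}=b_{k-2,1}b_{k-1,0}$. This completes the argument.

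There is essentially no obstacle, provided Lemma \ref{L2.13 } (proved in the Appendix) is taken as given. The only care needed is to note that nothing in Lemma \ref{L2.13 } uses any special property of $b_{k-1,1}$ beyond its lying in $\mathbb{F}_{q^2}$, so the substitution is legitimate.

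If one wanted the argument to be self-contained, the core is the computation behind Lemma \ref{L2.13 }. Put $\omega=\gamma^{q-1}$, which has order $q+1$. Then
$$e_r=\frac{(q-1)\left(\omega^{ri}-1\right)}{\omega^{r}-1}.$$
Because $\omega^{(q+1)/2}=-1$ and $2\mid i$, we get $\omega^{(j+1)}=-\omega^{-1}$, $\omega^{(j+3)}=-\omega$, $\omega^{(j+1)i}=\omega^{-i}$ and $\omega^{(j+3)i}=\omega^{i}$. From these, $e_{j+3}/e_{j+1}=-\omega^{i-1}$.

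Here $e_{j+1}\neq0$ still has to be checked, presumably using Lemma \ref{L2.2 }. Granting it, the equation becomes $x\left(1-x^{q-1}\omega^{i-1}\right)=0$ with $x=\Delta$, which yields exactly the two stated alternatives.
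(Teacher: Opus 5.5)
Your proposal is correct and takes the same approach as the paper. The paper's proof just says the lemma follows as in Lemma \ref{L2.13 }; your substitution of $\Delta=b_{k-1,1}-\frac{b_{k-2,1}b_{k-1,0}}{b_{k-2,0}}$ into Lemma \ref{L2.13 } makes that reduction explicit, using that $x\mapsto x^q$ is a field automorphism so the second coefficient is $\Delta^q$. The check $e_{j+1}\neq 0$ you left open also goes through: for even $i$ we have $(j+1)i\equiv -i \pmod{q+1}$, and $(q+1)\nmid i$ since $2\le i\le q$.
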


\begin{lemma}
	\label{L2.14 }
	Let $2 \le i\le q$. If $2\nmid i$ and $i\neq\frac{q+1}{2}$, then for $b_{k-1,0}$, $b_{k-1,1}\in \mathbb{F}_{q^2}$, $\varGamma_{1}=0$ if and only if  
	\begin{equation}
		\label{E2.35}
		b_{k-1,1}e_{j+1}+b_{k-1,0}^{q+1}\left(e_{j+2}-\frac{e_{j+1}}{e_{j+2}}e_{j+3}\right)+b_{k-1,1}^{q}e_{j+3}=0. 
	\end{equation}
	
\end{lemma}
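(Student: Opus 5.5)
The plan is a direct computation: write every $e_r$ in closed form in terms of the single element $\omega:=\gamma^{q-1}$, which has multiplicative order $q+1$, and then clear denominators in \eqref{E2.35} by a nonzero factor to land exactly on $\varGamma_1$.

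\textbf{Step 1 (closed forms).} Since $\omega^r\neq 1$ for $1\le r\le q$, summing the geometric series gives
\[
e_r=(q-1)\frac{\omega^{ri}-1}{\omega^{r}-1}.
\]
From $j=\frac{q-3}{2}$ we get $j+1=\frac{q-1}{2}$ and $j+3=\frac{q+3}{2}$. Together with $\omega^{(q+1)/2}=-1$ this gives $\omega^{j+1}=-\omega^{-1}$ and $\omega^{j+3}=-\omega$. Because $i$ is odd, $(-1)^i=-1$, and we obtain
\[
e_{j+1}=(q-1)\,\omega^{1-i}\frac{1+\omega^{i}}{1+\omega},\qquad e_{j+3}=(q-1)\frac{1+\omega^{i}}{1+\omega}.
\]
By Lemma \ref{L2.5 }, since $2\nmid i$, we also have $e_{j+2}=q-1\neq 0$, so \eqref{E2.35} is well defined. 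The denominators $1+\omega$ are nonzero because $q+1>2$.

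\textbf{Step 2 (simplifying the middle coefficient).} Using Step 1,
\[
e_{j+2}-\frac{e_{j+1}e_{j+3}}{e_{j+2}}=(q-1)\left[1-\omega^{1-i}\frac{(1+\omega^i)^2}{(1+\omega)^2}\right].
\]

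\textbf{Step 3 (clearing denominators).} Divide the left side of \eqref{E2.35} by $q-1$, which is a unit since $p\nmid q-1$. Then multiply by the nonzero factor $\omega^{i-1}(1+\omega)^2$. The two $b_{k-1,1}$ terms combine to
\[
b_{k-1,1}(1+\omega^i)(1+\omega)\left(1+b_{k-1,1}^{q-1}\omega^{i-1}\right).
\]
This uses $b_{k-1,1}^q=b_{k-1,1}\cdot b_{k-1,1}^{q-1}$, which remains valid when $b_{k-1,1}=0$. The $b_{k-1,0}^{q+1}$ term becomes $b_{k-1,0}^{q+1}\bigl[\omega^{i-1}(1+\omega)^2-(1+\omega^i)^2\bigr]$. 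Expanding, the bracket equals
\[
\omega^{i-1}+\omega^{i+1}-1-\omega^{2i}=-(\omega^{i-1}-1)(\omega^{i+1}-1).
\]
Substituting $\omega=\gamma^{q-1}$, the resulting expression is exactly $\varGamma_1$. Hence the left side of \eqref{E2.35} is a nonzero multiple of $\varGamma_1$, and the two vanish simultaneously.

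\textbf{Main obstacle.} The only delicate point is the hypothesis $i\neq\frac{q+1}{2}$, which is needed to ensure $1+\omega^i\neq 0$. Otherwise $e_{j+1}=e_{j+3}=0$ (compare Lemma \ref{L2.6 }), the passage between the two forms degenerates, and the hypothesis could not be dropped. For odd $i$ with $2\le i\le q$, we have $\omega^i=-1$ iff $(q+1)\mid 2i$ and $(q+1)\nmid i$, which forces $i=\frac{q+1}{2}$; this is exactly the case excluded in the statement. With that settled, the rest is the routine algebra of Steps 1–3, where the main care is tracking the sign coming from $(-1)^i=-1$ in Step 1.
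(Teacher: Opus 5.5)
Your proof is correct and is essentially the paper's argument: the same closed forms for $e_{j+1}$ and $e_{j+3}$ from the geometric sum, followed by clearing denominators. Your single multiplier $\omega^{i-1}(1+\omega)^2/(q-1)$ is exactly the product of the paper's successive factors $e_{j+2}$, $(q-1)^{-2}$, $(1+\gamma^{1-q})(1+\gamma^{q-1})$ and $\gamma^{i(q-1)}$.

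One correction to your closing paragraph: Steps 1--3 never divide by $1+\omega^i$, so the hypothesis $i\neq\frac{q+1}{2}$ is never used and can be dropped. For odd $i=\frac{q+1}{2}$, both \eqref{E2.35} and $\varGamma_1=0$ reduce to $b_{k-1,0}=0$, so your claim that the hypothesis is essential is wrong.
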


\section{The Hermitian Hull dimensions of a class of $(\mathcal{L},\mathcal{P})$-TRS codes}
\label{section 3}

Throughout this section, we fix $q\ge7$, $\mathbb{F}_{q^2}^{*}=\left< \gamma \right>$,   $2\le i\le q$ with $i\equiv 0\pmod p$ or $i^2\equiv 1\pmod p$,  $j=\frac{q-3}{2}$, $k=q+j$ and 
$$\boldsymbol{\alpha}=\left(\alpha_1,\alpha_2,\ldots \alpha_{q-1},\alpha_1\gamma,\alpha_2\gamma,\ldots,\alpha_{q-1}\gamma,\ldots \alpha_1\gamma^{i-1},\alpha_2\gamma^{i-1},\ldots ,\alpha_{q-1}\gamma^{i-1}\right),$$
where $\alpha_s=\gamma^{s(q+1)}$  $\left(1\le s\le q-1\right)$. 

In this section, by taking a special class of the vector $\boldsymbol{\alpha}$, we completely determine the corresponding $\mathrm{dim}\left(\mathrm{Hull}_H(\mathcal{C}_{q+j}(\boldsymbol{\alpha}))\right)$.

\subsection{Main results}

In this subsection, we present the value of $\mathrm{dim}\left(\mathrm{Hull}_H(\mathcal{C}_{q+j}(\boldsymbol{\alpha}))\right)$ by $3$ cases, and then obtain two classes of EAQECCs.  

\begin{theorem}
	\label{T3.1}
	If $\gcd(i, q+1)=1$, then 
	$$
	\mathrm{dim}\left(\mathrm{Hull}_H(\mathcal{C}_{q+j}(\boldsymbol{\alpha}))\right) = \begin{array}{c}
		\begin{cases}
			j,&\text{if } i=q\\
			&\quad\text{or } i\neq q \text{ and }  b_{k-2,0}=b_{k-2,1}=b_{k-1,0}=b_{k-1,1}=0\\
			&\quad\text{or } i\neq q ,b_{k-2,1}=b_{k-2,0}=0 , b_{k-1,1}\neq 0 \text{ and } \varGamma_{2}=0;\\
			j-2,&\text{if } i\neq q \text{ and }  b_{k-2,0}=0, b_{k-2,1}\neq 0\\
			&\quad\text{or } i\neq q \text{ and }  b_{k-1,0}=b_{k-1,1}=0, b_{k-2,0}, b_{k-2,1}\neq 0\\ 
			&\quad\text{or } i\neq q, b_{k-2,1}=0, b_{k-2,0},b_{k-1,1}\neq 0 \text{ and }  b_{k-1,1}^{q-1}\gamma^{(i-1)(q-1)}\neq -1\\
			&\quad\text{or } i\neq q , b_{k-2,1},b_{k-2,0}\neq 0,\left(b_{k-1,0},b_{k-1,1}\right)\neq (0,0) \text{ and }  \varGamma\neq 0;\\
			j-1,&\text{otherwise}.
		\end{cases}
	\end{array} . 
	$$ 
\end{theorem}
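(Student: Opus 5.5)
We use Lemma \ref{L2.16 }: $\dim(\mathrm{Hull}_H(\mathcal{C}_{q+j}(\boldsymbol{\alpha})))=k-\mathrm{rank}(\boldsymbol{G}\boldsymbol{G}^\dagger)$ with $k=q+j$. So the plan is to compute $\boldsymbol{G}\boldsymbol{G}^\dagger$ explicitly and determine its rank in each case.

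\textbf{Computing the entries.} Index the rows of $\boldsymbol{G}$ in \eqref{G} by exponents. The $(u,v)$ entry of $\boldsymbol{G}\boldsymbol{G}^\dagger$ is a combination of sums
\[
S(a,b)=\sum_{t=0}^{i-1}\sum_{s=1}^{q-1}(\alpha_s\gamma^t)^{a+qb}.
\]
By Remark \ref{u+v-2}, applied blockwise with $\beta=\gamma^t$, we get $S(a,b)=0$ unless $(q-1)\mid a+b$. When $(q-1)\mid a+b$, say $a+b=r(q-1)$ plus a multiple of $q^2-1$, we get $S(a,b)=(q-1)\sum_t\gamma^{t(a+qb)}$. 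Here $a+qb\equiv (a+b)+(q-1)b$, so this sum is some $e_r$, up to the normalisation of the exponent modulo $q^2-1$.

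For $0\le a,b\le k-3=q+j-3$ we have $a+b\le 2q-5<2(q-1)$. Hence only $a+b\in\{0,q-1\}$ contributes. The index $a+b=0$ gives the $(0,0)$ entry $i(q-1)$. The indices $a+b=q-1$ give an anti-diagonal whose entries are $e_1$ or $e_q$ according as $b\le a$ or not, after reducing $(q-1)b$ modulo $q^2-1$.

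The upper-left $(k-2)\times(k-2)$ block is therefore a sum of a single corner entry and an anti-diagonal. Together with the pairing of row $0$ with row $q-1$, this produces the $2\times2$ matrix $\boldsymbol{D}$ of Lemma \ref{L2.9 }, which has rank $2$. The remaining anti-diagonal entries are nonzero by Remark \ref{C2.1}(2), since $\gcd(i,q+1)=1$ forces every $e_r\neq0$.

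Next we handle the last two rows and columns, which carry the twist exponents $k-2,k-1,k,k+1$, i.e.\ $q+j-2,\dots,q+j+1$. Pairing these with the low-degree rows, the condition $(q-1)\mid a+b$ gives $a+b=q-1$ or $a+b=2(q-1)$. The resulting entries are exactly $e_j,e_{j+1},\dots,e_{j+4}$ multiplied by $b_{k-2,\cdot}$, $b_{k-1,\cdot}$ and their $q$-th powers. Here $e_{j+2}=q-1$ or $0$ by Lemma \ref{L2.5 }.

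\textbf{Rank reduction.} Write $\boldsymbol{G}\boldsymbol{G}^\dagger$ in block form, with the invertible-looking top block of size $k-2$ and a $2\times 2$ Schur-type corner. Using the anti-diagonal pivots, we Gaussian-eliminate the coupling columns; this is where the ratios $e_{j+2}/e_{j+1}$ etc.\ come from. The rank of the upper part should be $k-2-j$, since exactly $j$ anti-diagonal positions have no partner in range. This accounts for the baseline value $j$. The total rank is then $(k-2-j)+\rho$, where $\rho\in\{0,1,2\}$ is the rank of the reduced $2\times2$ corner, so the hull dimension is $j+2-\rho$ minus corrections. Concretely, the reduced corner has entries of the form $b\,(e_{j+1}-\tfrac{e_{j+2}}{e_{j+1}}e_j)$, $b^q(e_{j+3}-\tfrac{e_{j+4}}{e_{j+3}}e_{j+2})$, and $b^{q+1}(e_{j+2}-\cdots)$. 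The assumption $\gcd(i,q+1)=1$ gives the hypotheses $2\nmid i$ and $\tfrac{q+1}{2}\nmid 2i$ of Lemma \ref{L2.8 }/Remark \ref{C2.1}. Then Lemma \ref{L2.10 } shows these bracketed quantities vanish iff $i=q$, so for $i=q$ the corner collapses and the dimension is $j$.

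\textbf{Case analysis for $i\neq q$.} The diagonal entry of row $k-1$ vanishes iff $\varGamma_2=0$ (Lemma \ref{L2.11 }), or iff the condition of Corollary \ref{C2.2} holds. The determinant of the corner vanishes iff $\varGamma=0$ (Lemma \ref{L2.12 }), with the degenerate subcases given by Corollaries \ref{C2.3}–\ref{C2.4}. We then sort by which $b$'s vanish to obtain $\rho=0,1,2$, i.e.\ dimensions $j,j-1,j-2$.

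\textbf{Main obstacle.} The main difficulty is the bookkeeping: getting exactly which anti-diagonal pivots are used and confirming the elimination produces precisely the expressions in Lemmas \ref{L2.11 }–\ref{L2.12 }, including the sign and conjugation pattern.
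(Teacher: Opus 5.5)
You follow the paper's route: compute $\boldsymbol{G}\boldsymbol{G}^\dagger$, reduce its rank to a constant plus $\mathrm{rank}(\boldsymbol{A}_{2\times 2})$, and finish with Lemma \ref{L2.10 }, Lemmas \ref{L2.11 }--\ref{L2.12 } and Corollaries \ref{C2.2}--\ref{C2.4}. Your observation that $\gcd(i,q+1)=1$ with $q\ge 7$ forces $2\nmid i$ and $\frac{q+1}{2}\nmid 2i$ is correct. The gap is the reduction itself, which is the real content (Proposition \ref{P3.2}): you do not carry it out, and what you assert about it is false.

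\begin{itemize}
\item For $0\le a,b\le k-3$ you only have $a+b\le 2(k-3)=3q-9$. This is $\ge 2(q-1)$ once $q\ge 7$, so the upper-left block also carries the anti-diagonal $a+b=2(q-1)$. The entry $i(q-1)$ at position $(q-1,q-1)$ of your own $\boldsymbol{D}$ lies on it.
\item On $a+b=q-1$ the entry at $(a,b)$ is $e_{b+1}$, so all of $e_1,\dots,e_q$ occur, not only $e_1,e_q$.
\item The upper $(k-2)\times(k-2)$ block is singular once $q\ge 9$, and its rank is $q$, not $k-2-j=q-2$.
\item With your count the hull dimension comes out as $j+2-\rho$. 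That contradicts the values $j,j-1,j-2$ you announce at the end, and ``minus corrections'' does not bridge the difference.
\end{itemize}

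What is needed is the pairing that the paper's Steps 1--4 implement. Number rows and columns $1,\dots,k$ as in \eqref{GG H}.

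\begin{itemize}
\item The $q\times q$ block on rows and columns $1,\dots,q$ is $\boldsymbol{D}$ on $\{1,q\}$ plus the anti-diagonal entries $e_{q+1-u}$ at $(u,q+1-u)$ for $2\le u\le q-1$. It is invertible by Lemma \ref{L2.9 } and Remark \ref{C2.1}(2).
\item For $2\le u\le j+1$, row $u+q-1$ has the entry $e_{q+2-u}$ in column $q+1-u$, exactly where row $u$ has its pivot. For $u=j,j+1$ these partner rows are the twisted rows $k-1,k$, with entries $e_{j+5}$, $e_{j+4}$ in columns $j+4$, $j+3$.
\item This elimination kills rows $q+1,\dots,k-2$. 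Row $j+1$ carries $b_{k-2,1}^qe_{j+2}$ and $b_{k-1,1}^qe_{j+2}$ in columns $k-1,k$, so it produces $-b_{k-2,1}^q\frac{e_{j+4}}{e_{j+3}}e_{j+2}$ in $C_3$ and $-b_{k-1,1}^q\frac{e_{j+4}}{e_{j+3}}e_{j+2}$ in $C_4$.
\item Next, clear the $b$-entries of rows $k-1,k$ in columns $j+1,j+2$ using rows $j+3,j+2$ (pivots $e_{j+1},e_{j+2}$). These rows carry $e_j$, resp.\ $b_{k-2,0}^qe_{j+1},b_{k-1,0}^qe_{j+1}$, in the twisted columns, which gives the remaining corrections.
\item The conjugate column operations do not touch the corner.
\end{itemize}

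Equivalently, the Schur complement of that invertible $q\times q$ block is $\mathrm{diag}(\boldsymbol{0}_{j-2},\boldsymbol{A}_{2\times 2})$. Hence $\mathrm{rank}(\boldsymbol{G}\boldsymbol{G}^\dagger)=q+\mathrm{rank}(\boldsymbol{A}_{2\times 2})$, and the hull dimension is $j-\mathrm{rank}(\boldsymbol{A}_{2\times 2})$.

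Without this computation you have not derived $C_1,\dots,C_4$, so Lemmas \ref{L2.11 }--\ref{L2.12 } have nothing to act on. Also, the entry governed by Lemma \ref{L2.11 } is $C_4$ at position $(k,k)$, not an entry of row $k-1$.
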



\begin{theorem}\label{T3.2}
	If $\gcd(i, q+1)=h>1$ and $2\mid i$, then

	$$
	\mathrm{dim}\left(\mathrm{Hull}_{H}{\mathcal{C}_{q+j}(\boldsymbol{\alpha})} \right) = \begin{array}{c}
		\begin{cases}
			j+1,&\text{ if }  i=\frac{q+1}{2}, b_{k-1,0}=b_{k-1,1}=0 \\
			&\quad\text{ or }  i=\frac{q+1}{2},  b_{k-1,0}= 0 \text{ and }  b_{k-1,1}^{q-1}\gamma^{(i-1)(q-1)}=1\\
			&\quad\text{ or } i\neq \frac{q+1}{2} ,  b_{k-2,0}=b_{k-2,1}=b_{k-1,0}=b_{k-1,1}=0\\ 
			&\quad\text{ or } i\neq \frac{q+1}{2} ,  b_{k-2,0}=b_{k-2,1}=b_{k-1,0}=0  \text{ and }  b_{k-1,1}^{q-1}\gamma^{(i-1)(q-1)}=1;\\ 
			
			j,  &\text{ if }  i=\frac{q+1}{2},  b_{k-1,0}=0 ,\quad b_{k-1,1}\neq 0 \text{ and } b_{k-1,1}^{q-1}\gamma^{(i-1)(q-1)}\neq 1\\
			&\quad\text{ or } i\neq \frac{q+1}{2} ,  b_{k-2,0}=b_{k-2,1}=b_{k-1,0}=0, b_{k-1,1}\neq 0 \text{ and }  b_{k-1,1}^{q-1}\gamma^{(i-1)(q-1)}\neq 1;\\ 
			j-2, &\text{ if }  i\neq \frac{q+1}{2} ,  b_{k-2,0}\neq 0,  b_{k-2,1}b_{k-1,0}\neq b_{k-1,1}b_{k-2,0} \\
			&\quad \text{and } \left(b_{k-1,1}-\frac{b_{k-2,1}b_{k-1,0}}{b_{k-2,0}}\right)^{q-1}\gamma^{(i-1)(q-1)}\neq 1;\\
			j-1, &\text{otherwise}. 						
		\end{cases}
	\end{array}  
	$$
\end{theorem}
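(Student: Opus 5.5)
The plan is to use Lemma \ref{L2.16 }: with $k=q+j$ we have $\dim(\mathrm{Hull}_H(\mathcal{C}_{q+j}(\boldsymbol{\alpha})))=k-\mathrm{rank}(\boldsymbol{G}\boldsymbol{G}^\dagger)$, where $\boldsymbol{G}$ is the matrix \eqref{G}. So everything reduces to computing the rank of $\boldsymbol{G}\boldsymbol{G}^\dagger$.

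\textbf{Entries of $\boldsymbol{G}\boldsymbol{G}^\dagger$.} For monomials $x^a$ and $x^b$, split the evaluation points into the $i$ cosets $\{\alpha_s\gamma^t\}_{s}$, $0\le t\le i-1$. Summing over $s$ first, Remark \ref{u+v-2} shows the inner sum vanishes unless $(q-1)\mid a+b$. When it does not vanish, it equals $(q-1)\gamma^{t(a+qb)}$. Writing $a+qb=(a+b)+(q-1)b$, the outer sum over $t$ is exactly some $e_r$, namely $e_r$ with $r\equiv \frac{a+b}{q-1}+b \pmod{q+1}$, or $i(q-1)$ when $r\equiv 0$.

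Since the degrees involved lie in $\{0,\dots,k+1\}=\{0,\dots,q+j+1\}$, the only relevant sums are $a+b\in\{0,\,q-1,\,2(q-1)\}$. (The sum $3(q-1)$ exceeds $2(q+j+1)$ for $q\ge7$, which I would double check.) The untwisted part of $\boldsymbol{G}\boldsymbol{G}^\dagger$ is therefore a sparse "anti-diagonal" matrix. Its entries pair degree $a$ with degree $q-1-a$ or $2(q-1)-a$, and the coefficients are $e_r$ values together with the pair $\{i(q-1),e_1,e_q\}$. The last pair forms the $2\times 2$ block $\boldsymbol{D}$ of Lemma \ref{L2.9 } coming from degrees $0$ and $q-1$.

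The twisted rows $k-2,k-1$ carry degrees $q+j-2,\dots,q+j+1$. Their nonzero inner products are with degrees $j,\dots,j+3$ (through $a+b=2(q-1)$) and among themselves. They produce entries that are linear or Hermitian-bilinear in the $b$'s with coefficients $e_j,\dots,e_{j+4}$.

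\textbf{Elimination.} Apart from $\boldsymbol{D}$, the Vandermonde rows organize into $2\times2$ blocks $\begin{pmatrix}0&e_r\\ e_{q+1-r}&0\end{pmatrix}$ or single entries. By Lemma \ref{L2.4 }(2), each such block has rank $0$ or $2$, and I would count these ranks using Lemma \ref{L2.3 }. The rows of degree $j,\dots,j+3$ interact with the twisted rows. I would use the nonzero entries among them to clear the twisted rows by row and column operations, leaving a Schur complement of size at most $2\times 2$ (or $4\times 4$ before reduction) in the $b$'s.

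Now use that $2\mid i$:
\begin{itemize}
\item Lemma \ref{L2.5 } gives $e_{j+2}=0$.
\item Lemma \ref{L2.6 } gives $e_{j+1},e_{j+3}\ne0$.
\item Lemma \ref{L2.6 1} gives $e_j=e_{j+4}=0$ exactly when $i=\frac{q+1}{2}$.
\end{itemize}
This splits the analysis into the branches $i=\frac{q+1}{2}$ and $i\ne\frac{q+1}{2}$.

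\textbf{Case analysis.} After elimination, the residual quantities are expected to be the following:
\begin{itemize}
\item $b_{k-1,1}e_{j+1}+b_{k-1,1}^qe_{j+3}$ when $b_{k-2,\ast}=0$, handled by Lemma \ref{L2.13 };
\item $\Delta e_{j+1}+\Delta^qe_{j+3}$ with $\Delta=b_{k-1,1}-b_{k-2,1}b_{k-1,0}/b_{k-2,0}$ when $b_{k-2,0}\ne0$, handled by Lemma \ref{L 1};
\item entries of the form $b_{k-1,0}^{q+1}e_{\ast}$ and $b_{k-2,0}$-multiples, which force the extra rank.
\end{itemize}
Adding the rank of the residual block (0, 1 or 2, possibly 3 in the generic case) to the base rank $k-j-1$ of the untwisted part should give the four values $j+1,j,j-1,j-2$ as listed. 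The hull is largest when all twists vanish or the Hermitian-type condition $b_{k-1,1}^{q-1}\gamma^{(i-1)(q-1)}=1$ holds.

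\textbf{Main obstacle.} I expect the hardest step to be pinning down exactly which Vandermonde rows couple to the twisted rows, and computing the Schur complement correctly in each subcase. This is especially delicate when $i=\frac{q+1}{2}$: there $e_j=e_{j+4}=0$ removes pivots, so the elimination must be redone, and the coefficient of $b_{k-2,0}$ may drop out. This explains why $b_{k-2,\ast}$ does not appear in the conditions for $i=\frac{q+1}{2}$. I would first write out the explicit $6\times6$ submatrix on degrees $j,\dots,j+3$ and the two twisted rows for a small example (say $q=7$, $i=4$) to fix the pattern before doing the general elimination.
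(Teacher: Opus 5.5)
Your route is the paper's: entries from Proposition \ref{P3.1}, the reduction of Proposition \ref{P3.3}, the split on $i=\frac{q+1}{2}$ via Lemmas \ref{L2.5 } and \ref{L2.6 1}, and Lemmas \ref{L2.13 } and \ref{L 1} at the end. However, two pieces of your bookkeeping are wrong in ways that would derail the count.

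First, $2(q+j+1)=3q-1>3(q-1)$. In fact $3(q-1)=2k$ is exactly the degree sum that produces the entries among the two twisted rows (degree pairs $(k,k)$ and $(k\pm1,k\mp1)$), i.e.\ $B_1,\dots,B_4$ in \eqref{GG H}. For $2\mid i$, every coupling of $b_{k-2,1}$ or $b_{k-1,1}$ to a Vandermonde row or column carries the factor $e_{j+2}=0$. So these entries are the only place $b_{k-2,1},b_{k-1,1}$ occur, and your residual $b_{k-1,1}e_{j+1}+b_{k-1,1}^qe_{j+3}$ is precisely $B_4$. Discarding $a+b=3(q-1)$, as your parenthetical suggests, would erase every condition of the theorem involving $b_{k-2,1}$ or $b_{k-1,1}$.

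Second, the untwisted part is not a direct sum of blocks $\begin{pmatrix}0&e_r\\ e_{q+1-r}&0\end{pmatrix}$ and isolated entries. For $2\le u\le j-1$, rows $u$ and $q-1+u$ both have their only nonzero entry in column $q+1-u$ (namely $e_{q+1-u}$ and $e_{q+2-u}$), while row $q+1-u$ has $e_u,e_{u-1}$ in columns $u,q-1+u$. Your $2\times2$ block can vanish, e.g.\ for $q=27$, $i=14$, $u=2$, where $e_r=0$ iff $r$ is even. When all four values are nonzero, counting the pieces separately gives $4$. The true rank of each such configuration is always exactly $2$, by Lemma \ref{L2.4 }(1): consecutive $e$'s never vanish together. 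This is what produces the $j-2$ zero rows of Proposition \ref{P3.3} and the $h$-independent base rank $q-1=k-j-1$ you quote. A block-by-block count via Lemma \ref{L2.4 }(2) and Lemma \ref{L2.3 } does not reproduce it. (Also, Lemma \ref{L2.6 } alone only says $e_{j+1},e_{j+3}$ are not both zero; use Lemma \ref{L2.4 }(1) with $e_{j+2}=0$.)

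The case analysis, which is the actual content of the theorem, is only asserted. For $i\neq\frac{q+1}{2}$, the pivots $e_j,e_{j+1},e_{j+3},e_{j+4}$ of the $7\times7$ block clear everything except rows and columns $j+2,k-1,k$, leaving
\[
\boldsymbol{M}=\begin{pmatrix}
0&b_{k-2,0}^{q}e_{j+1}&b_{k-1,0}^{q}e_{j+1}\\
b_{k-2,0}e_{j+3}&0&b_{k-2,1}e_{j+1}\\
b_{k-1,0}e_{j+3}&b_{k-2,1}^{q}e_{j+3}&b_{k-1,1}e_{j+1}+b_{k-1,1}^{q}e_{j+3}
\end{pmatrix},
\]
with $\mathrm{rank}(\boldsymbol{G}\boldsymbol{G}^\dagger)=q-1+\mathrm{rank}(\boldsymbol{M})$.

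If $b_{k-2,0}=0$, then $\mathrm{rank}(\boldsymbol{M})=2$ as soon as $(b_{k-2,1},b_{k-1,0})\neq(0,0)$, whatever $b_{k-1,1}$ is; this is why all those configurations land in $j-1$. Otherwise Lemma \ref{L2.13 } decides between rank $0$ and $1$.

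If $b_{k-2,0}\neq0$, the upper-left $2\times2$ minor is nonzero and
\[
\det\boldsymbol{M}=-b_{k-2,0}^{q+1}e_{j+1}e_{j+3}\left(\Delta e_{j+1}+\Delta^{q}e_{j+3}\right),
\]
so Lemma \ref{L 1} decides between rank $2$ and $3$.

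For $i=\frac{q+1}{2}$ the pivots are instead $e_{j+1},e_{j+3}$ together with $e_{j-1},e_{j+5}$. The latter two come from the monomial $x^{k-2}$ in the $(k-1)$-th row and column, and they absorb every $b_{k-2,\cdot}$ entry. What remains is $\begin{pmatrix}0&b_{k-1,0}^{q}e_{j+1}\\ b_{k-1,0}e_{j+3}&b_{k-1,1}e_{j+1}+b_{k-1,1}^{q}e_{j+3}\end{pmatrix}$, of rank $2$ if $b_{k-1,0}\neq0$ and otherwise decided by Lemma \ref{L2.13 }.

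With these computations and the two corrections above, your plan becomes the paper's proof.
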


\begin{theorem}
	\label{T3.3}
	If $\gcd(i, q+1)=h>1$ and $2\nmid i$, then 
	
	$$
	\mathrm{dim}\left(\mathrm{Hull}(\mathcal{C}_{q+j}(\boldsymbol{\alpha})) \right) = \begin{array}{c}
		\begin{cases}
			j, &\text{ if } i=\frac{q+1}{2} \text{ and }  b_{k-2,1}=b_{k-2,0}=0\\
			&\quad\text{ or } i=\frac{q+1}{4} \text{ or } \frac{3(q+1)}{4} \text{ and } \varGamma_{1}=0 \\
			&\quad\text{ or }\frac{q+1}{2}\nmid 2i \text{ and } b_{k-2,0}=b_{k-2,1}=b_{k-1,0}=b_{k-1,1}=0\\
			&\quad\text{ or } \frac{q+1}{2}\nmid 2i , b_{k-2,1}=b_{k-2,0}=0, b_{k-1,1}\neq 0 \text{ and } \varGamma_{2}=0;\\	
			j-2, &\text{ if } i = \frac{q+1}{2}  \text{ and } b_{k-2,1}\neq 0\\
			&\quad\text{ or } \frac{q+1}{2}\nmid 2i , b_{k-2,0}=0 \text{ and } b_{k-2,1}\neq 0\\
			&\quad\text{ or } \frac{q+1}{2}\nmid 2i, b_{k-1,0}=b_{k-1,1}=0 \text{ and } b_{k-2,0}, b_{k-2,1}\neq 0\\ 
			&\quad\text{ or } \frac{q+1}{2}\nmid 2i, b_{k-2,1}=0, b_{k-2,0},b_{k-1,1}\neq 0 \text{ and } b_{k-1,1}^{q-1}\gamma^{(i-1)(q-1)}\neq -1\\
			&\quad\text{ or }\frac{q+1}{2}\nmid 2i,  b_{k-2,1},b_{k-2,0}\neq 0,\left(b_{k-1,0},b_{k-1,1}\right)\neq (0,0)  \text{ and } \varGamma\neq 0;	\\
			j-1, &\text{otherwise}. 
		\end{cases}
	\end{array}  
	$$ 
	
\end{theorem}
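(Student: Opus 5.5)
The plan is to use Lemma \ref{L2.16 }, so that $\dim(\mathrm{Hull}_H(\mathcal{C}_{q+j}(\boldsymbol{\alpha})))=k-\mathrm{rank}(\boldsymbol{G}\boldsymbol{G}^\dagger)$ with $k=q+j$. Everything then reduces to computing the $k\times k$ Gram matrix of the rows of $\boldsymbol{G}$ in \eqref{G} and finding its rank.

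The entries of $\boldsymbol{G}\boldsymbol{G}^\dagger$ are sums of the form $\sum_{t=0}^{i-1}\sum_{s=1}^{q-1}(\alpha_s\gamma^t)^{(u-1)+q(v-1)}$. By Remark \ref{u+v-2} applied to each coset $\gamma^t\langle\alpha_1\rangle$, such a sum is nonzero only when $(q-1)\mid(u+v-2)$. In that case the exponent is $(u-1)+q(v-1)\equiv r(q-1)\pmod{q^2-1}$ for the appropriate $r$, and the sum equals $i(q-1)$ when $r\equiv0\pmod{q+1}$ and $e_r$ otherwise.

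For the untwisted rows (degrees $0,\dots,k-3$), the pairs of degrees $(a,b)$ with $a+b\equiv 0\pmod{q-1}$ satisfy $a+b\in\{0,q-1,2(q-1)\}$, since $k-3<q-1+j$. The block for rows $0,\dots,k-3$ therefore decomposes into anti-diagonal pieces:
\begin{itemize}
\item the pairs $\{0,q-1\}$ and $\{a,q-1-a\}$, with entries $i(q-1)$ on the diagonal-type positions and $e_1,e_q$ off them;
\item the $2\times2$ blocks $\boldsymbol{D}$, which are nonsingular by Lemma \ref{L2.9 };
\item leftover indices near $j$, whose partner degrees fall into the twisted rows $k-2,k-1$ or beyond.
\end{itemize}
I would perform block row/column eliminations using these invertible $\boldsymbol{D}$ pieces to reduce $\boldsymbol{G}\boldsymbol{G}^\dagger$ to a direct sum of an invertible part and a small core matrix. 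The core should be roughly $5\times5$ or $4\times4$, indexed by the degrees near $j,\dots,j+4$ together with the two twisted rows.

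The entries of the core involve $e_j,e_{j+1},e_{j+2},e_{j+3},e_{j+4}$ and the twist coefficients. This is because the twisted rows contain monomials of degree $k-2,k-1,k,k+1$, i.e. $q+j-2,\dots,q+j+1$, which pair with low degrees $j+1,\dots,j+4$ modulo $q-1$.

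The three theorems then split according to which $e_r$ vanish.
\begin{itemize}
\item \textbf{Theorem \ref{T3.1}} ($\gcd(i,q+1)=1$): by Remark \ref{C2.1}, all $e_r\neq0$. I would Gaussian-eliminate the core by pivoting on $e_{j+1},e_{j+2},e_{j+3}$. The Schur complements produce exactly the combinations $e_{j+1}-\frac{e_{j+2}}{e_{j+1}}e_j$, and so on. When $i=q$, Lemma \ref{L2.10 } makes these vanish and the twists drop out, giving hull dimension $j$.
\item \textbf{Case $i\ne q$}: the rank of the remaining $2\times2$ block in the twisted rows decides between $j-2$, $j-1$ and $j$. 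Its determinant conditions are translated by Lemma \ref{L2.12 } into $\varGamma=0$, by Corollaries \ref{C2.2}–\ref{C2.4} into the special cases where some $b$ vanish, and by Lemma \ref{L2.11 } into $\varGamma_2=0$ when $b_{k-2,\cdot}=0$.
\item \textbf{Theorem \ref{T3.2}} ($i$ even): Lemma \ref{L2.5 } gives $e_{j+2}=0$, and Lemma \ref{L2.6 1} decides whether $e_j=e_{j+4}=0$ as well (exactly when $i=\frac{q+1}{2}$). The core becomes sparser and one additional dimension can appear, giving $j+1$. The determinant conditions reduce to Lemmas \ref{L2.13 } and \ref{L 1}.
\item \textbf{Theorem \ref{T3.3}} ($i$ odd, $h>1$): I would use Lemmas \ref{L2.6 }, \ref{L2.7 } and \ref{L2.8 } to separate three subcases: $i=\frac{q+1}{2}$ (so $e_{j+1}=e_{j+3}=0$), $\frac{q+1}{4}\mid i$ (so $e_j=e_{j+4}=0$, handled with $\varGamma_1$ via Lemma \ref{L2.14 }), and the generic subcase, which mirrors Theorem \ref{T3.1}.
\end{itemize}

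The main obstacle is the bookkeeping. I have to write the core matrix correctly, including the conjugated twist terms $b^q$ and the cross terms between the two twisted rows, such as $b_{k-2,0}b_{k-1,1}^q e_{\cdot}$. I then have to carry out the Schur-complement eliminations so that the resulting scalar conditions match Lemma \ref{L2.12 } and its corollaries exactly. In each subcase I would first fix the pivot order (the nonzero $e_{j+1}$, $e_{j+3}$, and $e_{j+2}$ when odd), and then read off the rank of the final $2\times2$ block in the variables $b_{k-2,\cdot},b_{k-1,\cdot}$.
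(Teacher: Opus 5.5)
Your overall route is the paper's: Lemma~\ref{L2.16 }, evaluation of $\boldsymbol{G}\boldsymbol{G}^\dagger$ via Remark~\ref{u+v-2}, reduction to a small core, and the same three subcases $i=\frac{q+1}{2}$, $i\in\{\frac{q+1}{4},\frac{3(q+1)}{4}\}$, $\frac{q+1}{2}\nmid 2i$. These subcases are exhaustive for odd $i$ with $h>1$.

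\textbf{The main gap: the untwisted part is not invertible.} Only rows and columns $1,q$ (degrees $0,q-1$) form the block $\boldsymbol{D}$ of Lemma~\ref{L2.9 }; there are no further nonsingular ``$2\times 2$ blocks $\boldsymbol{D}$''. For $2\le u\le j-1$:
\begin{itemize}
\item rows $u$ and $q-1+u$ of $\boldsymbol{G}\boldsymbol{G}^\dagger$ are both supported on the single column $q+1-u$, with entries $e_{q+1-u}$ and $e_{q+2-u}$;
\item row $q+1-u$ is supported on columns $u$ and $q-1+u$, with entries $e_u$ and $e_{u-1}$;
\item no other row or column meets these three.
\end{itemize}
So after permuting rows and columns, $\boldsymbol{G}\boldsymbol{G}^\dagger=\boldsymbol{D}\oplus T_2\oplus\cdots\oplus T_{j-1}\oplus\boldsymbol{A}_{7\times7}$. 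Each $T_u$ is a $3\times 3$ block of rank at most $2$, and the core sits on rows and columns $j,\dots,j+4,k-1,k$; it also contains $e_{j-1}$ and $e_{j+5}$.

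The $j-2$ singular blocks $T_u$ are exactly where the leading $j$ in the theorem comes from:
\[
\dim\mathrm{Hull}_H=(q+j)-(q-5)-\mathrm{rank}(\boldsymbol{A}_{7\times7})=j+5-\mathrm{rank}(\boldsymbol{A}_{7\times7}).
\]
Under your decomposition ``invertible part $\oplus$ small core'', the hull dimension would be the nullity of a matrix of size at most $7$. That cannot equal $j-2$, $j-1$ or $j$ once $q$ is large, so the values you quote do not follow from your plan as written.

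\textbf{Second gap: zero pivots when $h>1$.} Some $e_r$ genuinely vanish when $h>1$, and your plan never handles zero pivots.
\begin{itemize}
\item Outside the core: to get $\mathrm{rank}(T_u)=2$ exactly, you need that $e_r$ and $e_{r+1}$ are never both zero (Lemma~\ref{L2.4 }(1)). When $e_{q+1-u}=0$ you must pivot on row $q-1+u$ instead, as in the swaps in the proof of Proposition~\ref{P3.3}. Lemma~\ref{L2.9 } says nothing about these blocks.
\item For $i=\frac{q+1}{2}$: Lemma~\ref{L2.3 } with $h=\frac{q+1}{2}$ gives $e_{j-1}=e_{j+1}=e_{j+3}=e_{j+5}=0$, so your proposed pivots $e_{j+1},e_{j+3}$ do not exist. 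You must pivot on $e_j,e_{j+2},e_{j+4}$. The rank of $\boldsymbol{G}\boldsymbol{G}^\dagger$ is then $q$, $q+1$ or $q+2$ according as $b_{k-2,0}=b_{k-2,1}=0$, or $b_{k-2,1}=0\neq b_{k-2,0}$, or $b_{k-2,1}\neq 0$, independently of $b_{k-1,0},b_{k-1,1}$.
\item For $i\in\{\frac{q+1}{4},\frac{3(q+1)}{4}\}$: the pivots are $e_{j-1},e_{j+1},e_{j+2},e_{j+3},e_{j+5}$. The coefficients $b_{k-2,0},b_{k-2,1}$ drop out entirely, and only the scalar of Lemma~\ref{L2.14 } (i.e.\ $\varGamma_1$) survives.
\end{itemize}
Only in the subcase $\frac{q+1}{2}\nmid 2i$ is your ``mirror Theorem~\ref{T3.1}'' description accurate.
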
	

\begin{remark} 
	For the case $i\neq q$ and  $b_{k-2,0}=b_{k-2,1}=b_{k-1,0}=b_{k-1,1}=0$ in Theorems \ref{T3.1}-\ref{T3.3}, the derived conclusions are just the corresponding results in Theorems $3.1$-$3.2$ of Reference $\cite{RS}$ for $j=\frac{q-3}{2}$.

\end{remark}


Then by combining with Lemma \ref{L2.17} and Theorems  \ref{T3.1}-\ref{T3.3}, we can immediately obtain two classes of EAQECCs as follows.

\begin{theorem}
	\label{thm:4.15}
	Assume that $d$ is the minimum distance for the code  $\mathcal{C}_{q+j}(\boldsymbol{\alpha})$. Then 
	there exists a $q$-ary EAQECC with parameters $\bigl[\bigl[i(q-1),\,q-1+m,\,d,\,(i-2)(q-1)+m\bigr]\bigr]_q$ for $m=0,1,2,3$. 
\end{theorem}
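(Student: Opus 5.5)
The plan is to apply Lemma \ref{L2.17} directly to $\mathcal{C}=\mathcal{C}_{q+j}(\boldsymbol{\alpha})$. This code has length $n=i(q-1)$ and dimension $k=q+j$. The only work is to rewrite the first EAQECC of Lemma \ref{L2.17} in terms of the hull dimensions produced by Theorems \ref{T3.1}--\ref{T3.3}.

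First I check that $\mathcal{C}$ really has dimension $k=q+j$, i.e. that the matrix $\boldsymbol{G}$ of Remark \ref{G k} has full row rank. Its rows are evaluations of polynomials of degree at most $k+1=q+j+1$. Since $n=i(q-1)\ge 2(q-1)>q+j+1$ for $q\ge7$, no nonzero polynomial of degree at most $k+1$ vanishes on all $n$ distinct coordinates of $\boldsymbol{\alpha}$. The polynomials $1,x,\ldots,x^{k-3}$ together with the two twisted ones are linearly independent, because their lowest non-monomial terms have degrees $k-2$ and $k-1$. Hence the rows of $\boldsymbol{G}$ are independent and $\dim\mathcal{C}=k$.

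Write $\ell=\dim(\mathrm{Hull}_H(\mathcal{C}))$. Lemma \ref{L2.17} gives an EAQECC with parameters
\[
\bigl[\bigl[i(q-1),\,k-\ell,\,d,\,i(q-1)-k-\ell\bigr]\bigr]_q .
\]
Since $k=q+j$, setting $m=j+1-\ell$ gives
\[
k-\ell=q+j-\ell=q-1+m,
\]
and
\[
i(q-1)-k-\ell=i(q-1)-2(q-1)+(q-1)-(q+j)+(j+1-m)=(i-2)(q-1)+m .
\]
These are exactly the stated parameters. It remains to see that $m=j+1-\ell$ takes each value in $\{0,1,2,3\}$. This is the case exactly when $\ell$ runs over $\{j+1,j,j-1,j-2\}$.

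By Theorems \ref{T3.1}--\ref{T3.3}, the hull dimension always lies in $\{j-2,j-1,j,j+1\}$, so every EAQECC produced this way has $m\in\{0,1,2,3\}$. To justify the statement for each listed $m$, I exhibit parameter choices realizing each $\ell$:
\begin{itemize}
\item $\ell=j+1$ (so $m=0$): Theorem \ref{T3.2} with all $b=0$ and $i\ne\frac{q+1}{2}$ even with $\gcd(i,q+1)>1$, or with $i=\frac{q+1}{2}$.
\item $\ell=j$: all $b$ zero in Theorem \ref{T3.1}.
\item $\ell=j-2$: $b_{k-2,0}=0$ and $b_{k-2,1}\ne0$ in Theorem \ref{T3.1}.
\item $\ell=j-1$: the ``otherwise'' branch of Theorem \ref{T3.1}.
\end{itemize}

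The main obstacle is interpretive rather than computational. The value of $m$ depends on $i$ and on the $b$'s, so the statement must be read as ``for the appropriate choice of parameters,'' with $d$ the minimum distance of the code in question. The realizability of $m=0$ in particular requires an admissible $i$ satisfying the hypotheses of Theorem \ref{T3.2}, namely $i\equiv0\pmod p$ or $i^2\equiv1\pmod p$, together with $\gcd(i,q+1)>1$ and $2\mid i$. I would handle this by noting that the theorem asserts existence under the standing hypotheses of Section \ref{section 3}, with each $m$ arising from the corresponding case of Theorems \ref{T3.1}--\ref{T3.3}.
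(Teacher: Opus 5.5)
Your core argument is the paper's own proof: Lemma \ref{L2.17}, with $n=i(q-1)$, $k=q+j$ and $m=j+1-\dim(\mathrm{Hull}_H(\mathcal{C}_{q+j}(\boldsymbol{\alpha})))$ read off from Theorems \ref{T3.1}--\ref{T3.3}. Your full-rank check of $\boldsymbol{G}$ is a correct addition that the paper leaves implicit. The paper never checks that each $m$ is actually attained, so that part of your proposal is extra. It needs two small repairs.

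First, your middle expression $i(q-1)-2(q-1)+(q-1)-(q+j)+(j+1-m)$ equals $(i-2)(q-1)-m$, not $(i-2)(q-1)+m$. Write it as $i(q-1)-(q+j)-(j+1-m)$ and use $q+2j+1=2(q-1)$.

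Second, your Theorem \ref{T3.1} witnesses for $\ell=j-2$ and $\ell=j-1$ need an admissible $i\neq q$ with $\gcd(i,q+1)=1$. None exists when $q=p$ is prime: the admissible $i$ are then only $p-1$ and $q$, and $i=q$ forces $\ell=j$.

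Taking $i=p-1$ fixes this and also settles your worry about $m=0$. This $i$ is even, satisfies $i^2\equiv 1\pmod p$, and differs from $\frac{q+1}{2}$ when $q\ge 7$. Since $0<i-1<q+1$, we have $\gamma^{(i-1)(q-1)}\neq 1$. Theorem \ref{T3.2} then gives:
\begin{itemize}
\item $\ell=j+1$ when all $b$'s are zero;
\item $\ell=j$ when only $b_{k-1,1}=1$ is nonzero;
\item $\ell=j-1$ when only $b_{k-2,0}=1$ is nonzero;
\item $\ell=j-2$ when $b_{k-2,0}=b_{k-1,1}=1$ and $b_{k-2,1}=b_{k-1,0}=0$.
\end{itemize}
So a single admissible $i$ realizes all four values $m=0,1,2,3$.
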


\begin{theorem}
	\label{thm:4.16}
	Assume that $d^{\perp_H}$ is the minimum distance for the code $\mathcal{C}_{q+j}(\boldsymbol{\alpha})$. Then 
	there exists a $q$-ary EAQECC with parameters $\bigl[\bigl[i(q-1),\,(i-2)(q-1)+m,\,d^{\perp_H},\,q-1+m\bigr]\bigr]_q$ for $m=0,1,2,3$. 
	
\end{theorem}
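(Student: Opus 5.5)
The plan is to read the four EAQECCs directly off the second family in Lemma \ref{L2.17}, using the hull dimensions computed in Theorems \ref{T3.1}-\ref{T3.3}. The arithmetic is as follows. Here $n=i(q-1)$ and $k=q+j$ with $j=\frac{q-3}{2}$, so $2j=q-3$. Write $h_H=\dim(\mathrm{Hull}_H(\mathcal{C}_{q+j}(\boldsymbol{\alpha})))$. The second code of Lemma \ref{L2.17} has parameters $[[n,\,n-k-h_H,\,d^{\perp_H},\,k-h_H]]_q$. Setting $h_H=j+1-m$ gives
$$k-h_H=q+j-(j+1-m)=q-1+m,$$
$$n-k-h_H=i(q-1)-q-2j-1+m=i(q-1)-2q+2+m=(i-2)(q-1)+m.$$
So the statement for a given $m\in\{0,1,2,3\}$ reduces to exhibiting an admissible $i$ and coefficients $b_{k-2,0},b_{k-2,1},b_{k-1,0},b_{k-1,1}$ with $h_H=j+1-m$, that is, $h_H\in\{j+1,j,j-1,j-2\}$. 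Throughout, $d^{\perp_H}$ is understood as the minimum distance of the Hermitian dual $\mathcal{C}_{q+j}(\boldsymbol{\alpha})^{\perp_H}$, which has dimension $n-k$; this is the reading under which Lemma \ref{L2.17} applies.

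The value $j+1$ occurs only in Theorem \ref{T3.2}, so I will work in that regime and take $i=q-1$. This choice is admissible for the standing hypotheses of Section \ref{section 3}:
\begin{enumerate}[(a)]
\item $2\le q-1\le q$ holds.
\item $(q-1)^2\equiv 1\pmod p$ holds.
\item $i=q-1$ is even because $q$ is odd.
\item $\gcd(q-1,q+1)=2>1$.
\item $q-1\neq\frac{q+1}{2}$ because $q\ge7$.
\end{enumerate}
The length is then $(q-1)^2$, and Theorem \ref{T3.2} applies in its branch $i\neq\frac{q+1}{2}$.

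Next I choose the coefficients case by case.
\begin{enumerate}[(i)]
\item For $m=0$, take all four $b$'s equal to $0$. Theorem \ref{T3.2} gives $h_H=j+1$.
\item For $m=1$, take $b_{k-2,0}=b_{k-2,1}=b_{k-1,0}=0$ and $b_{k-1,1}\neq0$ with $b_{k-1,1}^{q-1}\gamma^{(i-1)(q-1)}\neq1$. Such an element exists: $x\mapsto x^{q-1}$ maps $\mathbb{F}_{q^2}^*$ onto the $(q+1)$-th roots of unity, and $q+1>1$, so $b_{k-1,1}^{q-1}$ can avoid the single value $\gamma^{-(i-1)(q-1)}$. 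This gives $h_H=j$.
\item For $m=3$, take $b_{k-2,0}=1$, $b_{k-2,1}=0$, $b_{k-1,0}=0$, and $b_{k-1,1}=\delta\neq0$ with $\delta^{q-1}\gamma^{(i-1)(q-1)}\neq1$, which exists by the same counting. Then $b_{k-2,1}b_{k-1,0}=0\neq b_{k-1,1}b_{k-2,0}$, and the quantity $b_{k-1,1}-\frac{b_{k-2,1}b_{k-1,0}}{b_{k-2,0}}$ equals $\delta$. Theorem \ref{T3.2} gives $h_H=j-2$.
\item For $m=2$, take $b_{k-2,0}=0$ and $b_{k-2,1}\neq0$. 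This configuration matches none of the listed $j+1$, $j$ or $j-2$ branches. Those branches with $i\neq\frac{q+1}{2}$ require either $b_{k-2,1}=0$ or $b_{k-2,0}\neq0$. Hence it falls under ``otherwise'' and $h_H=j-1$.
\end{enumerate}
Plugging $h_H=j+1-m$ into the second family of Lemma \ref{L2.17} then yields the stated $[[i(q-1),(i-2)(q-1)+m,d^{\perp_H},q-1+m]]_q$ codes.

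The main obstacle is the case check in (iv): one must confirm carefully that the chosen $b$'s fall into exactly the intended branch of Theorem \ref{T3.2} and into no other. The only other delicate points are the arithmetic identity $2j=q-3$ and the existence of elements avoiding a prescribed $(q-1)$-th power value, both of which are routine. If one prefers a general admissible $i$ rather than $i=q-1$, the same argument goes through verbatim for any even $i$ with $\gcd(i,q+1)>1$ and $i\neq\frac{q+1}{2}$.
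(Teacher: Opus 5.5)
Your proof is correct, and its core is the same as the paper's. The paper's entire argument is the remark that Lemma \ref{L2.17} combined with Theorems \ref{T3.1}--\ref{T3.3} yields the codes. Your substitution $\dim(\mathrm{Hull}_H)=j+1-m$, which turns $[[n,n-k-h,d^{\perp_H},k-h]]_q$ into $[[i(q-1),(i-2)(q-1)+m,d^{\perp_H},q-1+m]]_q$ via $2j=q-3$, is exactly the computation that remark leaves implicit.

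The difference is in how ``for $m=0,1,2,3$'' is read. The paper intends it case-wise: for any admissible $i$ and any $b$'s, $m=j+1-\dim(\mathrm{Hull}_H)$ is fixed by whichever branch of Theorems \ref{T3.1}--\ref{T3.3} applies. All that reading requires is that every branch gives a hull dimension in $\{j-2,j-1,j,j+1\}$. That is worth stating, since your write-up only argues in the direction ``given $m$, find an instance.''

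You instead show that each $m$ is actually attained, which the paper never checks. This necessarily costs a restriction on $i$, because for a fixed $i$ not all four values occur:
\begin{itemize}
\item $i=q$ forces $m=1$ by Theorem \ref{T3.1};
\item $m=0$ is impossible unless $2\mid i$ and $\gcd(i,q+1)>1$;
\item even $i=\frac{q+1}{2}$ never gives $m=3$.
\end{itemize}
Your choice $i=q-1$ lies in the generic branch $i\neq\frac{q+1}{2}$ of Theorem \ref{T3.2}, so it is the right one, and your coefficient choices (i)--(iv) do land in the intended branches.

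Net effect: you obtain a sharper existence statement, but only for lengths $i(q-1)$ with $i$ even, $\gcd(i,q+1)>1$ and $i\neq\frac{q+1}{2}$. The paper's reading covers every admissible $i$ but says nothing about which values of $m$ actually occur.

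Your interpretation of $d^{\perp_H}$ as the minimum distance of the Hermitian dual is the correct one; the statement's wording is a slip.
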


\subsection{Three Crucial Propositions}

In this subsection, we present three important propositions and their proofs. 


\begin{proposition}
	\label{P3.1}
	Let $\boldsymbol{G}$ be the generator matrix of $\mathcal{C}_{q+j}(\boldsymbol{\alpha})$, then 
	\begin{equation}
		\label{GG H}
		\boldsymbol{G}\boldsymbol{G}^{\dagger}=\left( \begin{matrix}
			i(q-1)&		&		&		&		&		&		&		&		e_q&		&		&		\\
			\vdots&		&		&		&		&		&		&	\iddots	&		&		&		&		\\
			0&		&		&		&		&		&		e_{j+4}&		&		&		&		&		\\
			0&		&		&		&		&		e_{j+3}&		\cdots&		\cdots&		\cdots&			\cdots&		b_{k-2,1}^{q}e_{j+2}&		b_{k-1,1}^{q}e_{j+2}\\
			0&		&		&		&		e_{j+2}&		\cdots&		\cdots&		\cdots&		\cdots&		\cdots&		b_{k-2,0}^{q}e_{j+1}&		b_{k-1,0}^{q}e_{j+1}\\
			0&		&		&		e_{j+1}&		\vdots&		&		&		&		&		&		&		e_j\\
			0&		&		e_j&		\vdots&		\vdots&		&		&		&		&		&		e_{j-1}&		\\
			\vdots&		\iddots	&		&		\vdots&		\vdots&		&		&		&		&			\iddots	&		&		\\
			e_1&		&		&		\vdots&		\vdots&		&		&		&		i(q-1)&		&		&		\\
			\vdots&		&		&		\vdots&		\vdots&		&		&		\iddots&		&		&		&		&		\\
			0&		&		&		b_{k-2,1}e_{j+2}&		b_{k-2,0}e_{j+3}&		&		e_{j+5}&			&		&		&		B_1&		B_2\\
			0&		&		&		b_{k-1,1}e_{j+2}&		b_{k-1,0}e_{j+3}&		e_{j+4}&		&		&		&		&		B_3&		B_4\\
		\end{matrix} \right) , 
	\end{equation}
	where
	$
	B_1=b_{k-2,0}^{q+1}e_{j+2}$, $B_2=b_{k-2,1}e_{j+1}+b_{k-2,0}b_{k-1,0}^{q}e_{j+2}$, 
	$B_3=b_{k-2,0}^{q}b_{k-1,0}e_{j+2}+b_{k-2,1}^{q}e_{j+3}$, and  $$B_4=b_{k-1,1}e_{j+1}+b_{k-1,0}^{q+1}e_{j+2}+b_{k-1,1}^{q}e_{j+3}.$$ 
\end{proposition}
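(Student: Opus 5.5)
The plan is to compute every entry of $\boldsymbol{G}\boldsymbol{G}^{\dagger}$ directly from the generator matrix \eqref{G} with $k=q+j$, reducing each entry to a monomial power sum that Remark \ref{u+v-2} evaluates. Let $g_u(x)$ be the polynomial of the $u$-th row of $\boldsymbol{G}$. Then $g_u(x)=x^{u-1}$ for $1\le u\le k-2$, and
$$g_{k-1}(x)=x^{k-2}+b_{k-2,0}x^{k}+b_{k-2,1}x^{k+1},\qquad g_{k}(x)=x^{k-1}+b_{k-1,0}x^{k}+b_{k-1,1}x^{k+1}.$$
The $(u,v)$ entry of $\boldsymbol{G}\boldsymbol{G}^{\dagger}$ is $\sum_{t=0}^{i-1}\sum_{s=1}^{q-1} g_u(\alpha_s\gamma^t)\,g_v(\alpha_s\gamma^t)^q$. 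This expression is linear in $g_u$ and $q$-semilinear in $g_v$, so the twist coefficients from row $v$ appear raised to the power $q$. Hence everything reduces to the basic sums
$$S(a,b)=\sum_{t=0}^{i-1}\sum_{s=1}^{q-1}(\alpha_s\gamma^t)^{(a-1)+q(b-1)},\qquad 1\le a,b\le k+2.$$

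The first step is to evaluate $S(a,b)$. Apply Remark \ref{u+v-2} with $\beta=\gamma^t$: the inner sum vanishes unless $(q-1)\mid(a+b-2)$. When $a+b-2=m(q-1)$, use
$$(a-1)+q(b-1)=(a+b-2)+(q-1)(b-1)=(q-1)(m+b-1),$$
which gives
$$S(a,b)=(q-1)\sum_{t=0}^{i-1}\gamma^{t(q-1)(m+b-1)}.$$
Because $\gamma^{(q-1)(q+1)}=1$, this equals $e_r$ with $r\equiv m+b-1\pmod{q+1}$. In the case $r\equiv0$ every term equals $1$, so the value is $i(q-1)$.

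The second step treats the untwisted block $1\le u,v\le k-2$. Here $0\le u+v-2\le 2k-6=3q-9<3(q-1)$, so only $m\in\{0,1,2\}$ occur.
\begin{itemize}
\item $m=0$ forces $u=v=1$, which gives the entry $i(q-1)$.
\item $m=1$ is the antidiagonal $u+v=q+1$, which gives $e_v$. This produces the entries $e_q,\dots,e_{j+4},e_{j+3},e_{j+2},e_{j+1},e_j,\dots,e_1$ down the first column and the top row.
\item $m=2$ is the second antidiagonal $u+v=2q$, which gives $e_{v+1}$. Together with the reduction $e_{q+1}\equiv e_0=i(q-1)$, this produces the lower antidiagonal starting at $i(q-1)$ in row $q$ with $e_j,e_{j-1},\dots$ above it.
\end{itemize}
In each case I will check that the index positions match the displayed matrix \eqref{GG H}.

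The third step handles the two twisted rows and columns, and I expect it to be the main bookkeeping obstacle. For $u\le k-2$ and $v\in\{k-1,k\}$, expand $g_v^q$ into its three monomials. The exponents $k-2,k-1,k,k+1$ equal $q+j-2,\dots,q+j+1$, so only a couple of values of $u$ make $u+v'-2$ a multiple of $q-1$. These are the small $u$, giving the $b^q e_{j+2}$, $b^qe_{j+1}$ entries in rows $j+3$ and $j+2$, and the $u$ near the antidiagonal, giving the $e_{j+5}$, $e_{j+4}$ entries. The corner $2\times 2$ block $B_1,\dots,B_4$ comes from pairing the three monomials of $g_u$ with the three of $g_v^q$. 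With $a,b\in\{k-1,\dots,k+2\}$ and $a+b-2$ near $2k=2q+q-3$, the only admissible value is $a+b-2=3(q-1)$, that is $a+b=2k+1$. So I will list the surviving pairs:
\begin{itemize}
\item $(k-1,k+2)$ and $(k+2,k-1)$, which carry the coefficients $b_{\cdot,1}$ and $b_{\cdot,1}^q$;
\item $(k+1,k)$ and $(k,k+1)$, which carry $b_{\cdot,0}^{(q)}$ terms including products such as $b_{k-2,0}b_{k-1,0}^q$.
\end{itemize}
For each pair I compute $r\equiv 3+b-1\pmod{q+1}$, which yields $e_{j+1},e_{j+2},e_{j+3}$, and I verify the exact forms of $B_1,\dots,B_4$. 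The $(k-1,k-1)$ entry $B_1=b_{k-2,0}^{q+1}e_{j+2}$ needs particular care: its cross terms with $b_{k-2,1}$ must drop out, which happens because $a+b$ equals $2k$ or $2k+2$, not a multiple of $q-1$ after the shift.

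The final step is to confirm that every other entry vanishes, since no other $a+b-2$ in range is divisible by $q-1$. This uses $q\ge7$ to keep the antidiagonals separated.
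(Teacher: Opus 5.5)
Your method is the paper's. Expanding each entry as $\sum_{t}\sum_{s} g_u\,g_v^{q}$ and evaluating monomial sums by Remark \ref{u+v-2} with $r\equiv m+b-1 \pmod{q+1}$ is exactly the computation of $\sum_t \boldsymbol{G}_{\gamma^t}\boldsymbol{G}_{\gamma^t}^{\dagger}$ via the $c_r$'s in Appendix B, and your first two steps are correct.

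The corner block, which you single out as the main obstacle, uses the wrong admissibility condition. Since $2k=3(q-1)$, the condition $a+b-2=3(q-1)$ means $a+b=2k+2$, not $a+b=2k+1$.
\begin{itemize}
\item Every pair you list, namely $(k-1,k+2)$, $(k+2,k-1)$, $(k,k+1)$ and $(k+1,k)$, has $a+b-2=3q-4\equiv -1 \pmod{q-1}$. Each therefore contributes $0$.
\item The pairs that actually survive, $(k,k+2)$, $(k+1,k+1)$ and $(k+2,k)$, do not appear in your list.
\item In particular $(k+1,k+1)$ is missing. It is the only source of $B_1$ and of the products $b_{k-2,0}b_{k-1,0}^q$.
\item Your stated rule would put $b_{k-2,1}$-terms into $B_1$. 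This contradicts both the statement and your own remark that they must drop out.
\end{itemize}
So this step, as written, does not produce $B_1,\dots,B_4$.

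The repair is straightforward. With $m=3$ one has $r\equiv b+2\pmod{q+1}$, so $b=k,k+1,k+2$ give $e_{j+1},e_{j+2},e_{j+3}$ respectively.
\begin{itemize}
\item The pair $(k+2,k)$ matches the $x^{k+1}$-twist of row $u$ against the leading $x^{k-1}$ of $g_k$. It occurs only for $v=k$ and gives $b_{k-2,1}e_{j+1}$ or $b_{k-1,1}e_{j+1}$.
\item The pair $(k,k+2)$ occurs only for $u=k$ and gives $b_{k-2,1}^qe_{j+3}$ or $b_{k-1,1}^qe_{j+3}$.
\item The pair $(k+1,k+1)$ always occurs and gives $b_{\cdot,0}b_{\cdot,0}^{q}e_{j+2}$.
\end{itemize}
This yields exactly $B_1,\dots,B_4$. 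The coefficient $b_{k-2,1}$ is absent from $B_1$ simply because $g_{k-1}$ has no $x^{k-1}$ term.

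The row bookkeeping in your third step also needs fixing.
\begin{itemize}
\item The entries $b_{k-2,1}^qe_{j+2}$ and $b_{k-1,1}^qe_{j+2}$ lie in row $j+1$, the row whose antidiagonal entry is $e_{j+3}$, not in row $j+3$.
\item In columns $k-1$ and $k$, the untwisted monomials give $e_{j-1}$ at $(j+4,k-1)$ and $e_j$ at $(j+3,k)$, both on the antidiagonal $u+v=2q$.
\item The entries $e_{j+5}$ and $e_{j+4}$ sit at $(k-1,j+4)$ and $(k,j+3)$, in the twisted rows. You must compute them separately, or obtain them by Hermitian symmetry using $e_r^q=e_{q+1-r}$.
\end{itemize}
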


\textbf{Proof.} 
By Remark \ref{G k}, we have  
$$\boldsymbol{G}=\left(\boldsymbol{G}_1:\boldsymbol{G}_{\gamma}:\cdots:\boldsymbol{G}_{\gamma^{i-1}}\right),$$
where 
\begin{equation}
	\label{M3.1} 
	\boldsymbol{G}_{\beta}=\left( \begin{matrix}
		1&				\cdots&		1\\
		\alpha _1\beta&			\cdots&		\alpha _{q-1}\beta\\
		\vdots&			&		\vdots\\
		\left( \alpha _1\beta \right) ^{k-3}&				\cdots&		\left( \alpha _{q-1}\beta \right) ^{k-3}\\
		\left( \alpha _1\beta \right) ^{k-2}+\sum\limits_{\ell=0}^1{b_{k-2,\ell}\left( \alpha _1\beta \right) ^{k+\ell}}&		\cdots&		\left( \alpha _{q-1}\beta \right) ^{k-2}+\sum\limits_{\ell=0}^1{b_{k-2,\ell}\left( \alpha _{q-1}\beta \right) ^{k+\ell}}\\
		\left( \alpha _1\beta \right) ^{k-1}+\sum\limits_{\ell=0}^1{b_{k-1,\ell}\left( \alpha _1\beta \right) ^{k+\ell}}&		\cdots&		\left( \alpha _{q-1}\beta \right) ^{k-1}+\sum\limits_{\ell=0}^1{b_{k-1,\ell}\left( \alpha _{q-1}\beta \right) ^{k+\ell}}\\
	\end{matrix} \right) 
\end{equation}
with $\beta \in \left\{1,\gamma,\gamma^2,\ldots ,\gamma^{i-1}\right\}$. Hence, 
\begin{equation}
	\label{M3.2}
	\boldsymbol{G}\boldsymbol{G}^\dagger=\left(\boldsymbol{G}_1:\boldsymbol{G}_{\gamma}:\cdots:\boldsymbol{G}_{\gamma^{i-1}}\right)\left( \begin{array}{c}
		\boldsymbol{G}_1^\dagger\\
		\boldsymbol{G}_{\gamma}^\dagger\\
		\vdots\\
		\boldsymbol{G}_{\gamma ^{i-1}}^\dagger\\
	\end{array} \right) =\sum\limits_{t=0}^{i-1}\boldsymbol{G}_{\gamma^t}\boldsymbol{G}_{\gamma^t}^\dagger.
\end{equation}

Now for any $\beta \in \left\{1,\gamma,\gamma^2,\ldots ,\gamma^{i-1}\right\}$, by directly calculating, we have 
\begin{equation}
	\label{M3.3}
	\boldsymbol{G}_{\beta}\boldsymbol{G}_{\beta}^\dagger=\left(a_{uv}\right)_{k\times k},
\end{equation}
where 
\begin{equation}
	\label{a uv}
	a_{uv}=\begin{array}{c}
		\begin{cases}
			\sum\limits_{s=1}^{q-1}\left(\alpha_s\beta\right)^{(u-1)+q(v-1)}, & \text{if } u,v\in \left\{1, 2, \ldots , k-2 \right\};\\
			\sum\limits_{s=1}^{q-1}\left[\left( \alpha _s\beta \right) ^{k-2+\left( v-1 \right) q}+b_{k-2,0}\left( \alpha _s\beta \right) ^{k+\left( v-1 \right) q}+b_{k-2,1}\left( \alpha _s\beta \right) ^{k+1+\left( v-1 \right) q}\right],& \text{if } u=k-1,v\in \left\{1, 2, \ldots , k-2 \right\};\\
			\sum\limits_{s=1}^{q-1}\left[\left( \alpha _s\beta \right) ^{k-1+\left( v-1 \right) q}+b_{k-1,0}\left( \alpha _s\beta \right) ^{k+\left( v-1 \right) q}+b_{k-1,1}\left( \alpha _s\beta \right) ^{k+1+\left( v-1 \right) q}\right], & \text{if } u=k, v\in \left\{1, 2, \ldots , k-2 \right\};\\
			\sum\limits_{s=1}^{q-1}\left[\left( \alpha _s\beta \right) ^{q\left( k-2 \right) +u-1}+b_{k-2,0}^{q}\left( \alpha _s\beta \right) ^{qk+u-1}+b_{k-2,1}^{q}\left( \alpha _s\beta \right) ^{q\left( k+1 \right) +u-1}\right], & \text{if } v=k-1,u\in \left\{1, 2, \ldots , k-2 \right\};\\
			\sum\limits_{s=1}^{q-1}\left[\left( \alpha _s\beta \right) ^{q\left( k-1 \right) +u-1}+b_{k-1,0}^{q}\left( \alpha _s\beta \right) ^{qk+u-1}+b_{k-1,1}^{q}\left( \alpha _s\beta \right) ^{q\left( k+1 \right) +u-1}\right], & \text{if } v=k-1,u\in \left\{1, 2, \ldots , k-2 \right\};\\
			\sum\limits_{s=1}^{q-1}A_{11}A_{12},&\text{if } u=v=k-1;\\
			\sum\limits_{s=1}^{q-1}A_{11}A_{22},&\text{if } u=k-1,v=k;\\ \sum\limits_{s=1}^{q-1}A_{21}A_{12},&\text{if } u=k,v=k-1;\\ \sum\limits_{s=1}^{q-1}A_{21}A_{22},&\text{if } u=v=k, 
		\end{cases}
	\end{array}
\end{equation}
with $A_{11}=\left( \alpha _s\beta \right) ^{k-2}+b_{k-2,0}\left( \alpha _s\beta \right) ^k+b_{k-2,1}\left( \alpha _s\beta \right) ^{k+1}$, $A_{12}=\left( \alpha _s\beta \right) ^{q\left( k-2 \right)}+b_{k-2,0}^{q}\left( \alpha _s\beta \right) ^{qk}+b_{k-2,1}^{q}\left( \alpha _s\beta \right) ^{q\left( k+1 \right)}$, \\ $A_{21}=\left( \alpha _s\beta \right) ^{k-1}+b_{k-1,0}\left( \alpha _s\beta \right) ^k+b_{k-1,1}\left( \alpha _s\beta \right) ^{k+1}$ and $A_{22}=\left( \alpha _s\beta \right) ^{q\left( k-1 \right)}+b_{k-1,0}^{q}\left( \alpha _s\beta \right) ^{qk}+b_{k-1,1}^{q}\left( \alpha _s\beta \right) ^{q\left( k+1 \right)}.$ 


Next, according to the ranges of $u$ and $v$, we divide the following $6$ cases to determine the values of $a_{uv}$. By combining with Remark \ref{u+v-2} and the range of $u+v-2$, we deduce that only the following $a_{uv}$ are non-zero, and all other entries are zero. The detailed proofs are given in  the Appendix \ref{appendex B}.

\textbf{case 1}  For $u,v\in \{1,2,\dots,k-2\}$, we have $a_{11}=a_{qq}=q-1$ and 
$$
\begin{aligned}
	&a_{u(q-u+1)}=(q-1)\beta^{(q-1)(q-u+1)}=c_{q-u+1},\quad(1\le u \le q);\\
	&a_{u(2q-u)}=(q-1)\beta^{(q-1)(q-u)}=c_{q-u},\quad(j+5\le u\le q-1);\\
	&a_{u(2q-u)}=(q-1)\beta^{(q-1)(2q-u+1)}=c_{2q-u+1},\quad(q+1\le u \le k-2).\\
\end{aligned}
$$	

\textbf{case 2} For $u=k-1, v\in \left\{1, 2, \ldots , k-2 \right\}$, we have $ a_{( k-1 ) (j+1)}= b_{k-2,1}(q-1)\beta^{(q-1)(j+2)}=b_{k-2,1}c_{j+2}$ and 
$$
	 a_{( k-1 )(j+2)}= b_{k-2,0}(q-1)\beta^{(q-1)(j+3)}=b_{k-2,0}c_{j+3}, 
	 a_{\left( k-1 \right) (j+4)}= (q-1)\beta^{(q-1)(j+5)}=c_{j+5}. 
$$

\textbf{case 3} For $u=k, v\in \left\{1, 2, \ldots , k-2 \right\}$, we have $a_{k (j+1)}=b_{k-1,1}(q-1)\beta^{(q-1)(j+2)}=b_{k-1,1}c_{j+2}$ and  

$$
	 a_{k (j+2)}=b_{k-1,0}(q-1)\beta^{(q-1)(j+3)}=b_{k-1,0}c_{j+3}, 
	 a_{k (j+3)}=(q-1)\beta^{(q-1)(j+4)}=c_{j+4}.
$$

\textbf{case 4} For $u\in {1, 2, \ldots , k-2}, v=k-1$, we have $a_{(j+1)\left( k-1 \right) }= b_{k-2,1}^q(q-1)\beta^{(q-1)(j+2)}=b_{k-2,1}^qc_{j+2}$ and  

$$
a_{(j+2)\left( k-1 \right) }= b_{k-2,0}^q(q-1)\beta^{(q-1)(j+1)}=b_{k-2,0}^qc_{j+1},  a_{(j+4)\left( k-1 \right) }= (q-1)\beta^{(q-1)(j-1)}=c_{j-1}. 
$$

\textbf{case 5} For $u\in \left\{1, 2, \ldots , k-2 \right\}, v=k$, we have $a_{(j+1) k}= b_{k-1,1}^q(q-1)\beta^{(q-1)(j+2)}=b_{k-1,1}^qc_{j+2}$ and 

$$ a_{(j+2) k}= b_{k-1,0}^q(q-1)\beta^{(q-1)(j+1)}=b_{k-1,0}^qc_{j+1}, 
 a_{(j+3)k}=(q-1)\beta^{(q-1)j}=c_{j}. 
$$

\textbf{case 6} For  $u,v \in \{k-1,k\}$, we have $a_{(k-1)(k-1)}=\left( q-1 \right) b_{k-2,0}^{q+1}\beta ^{(q-1)(j+2)}=b_{k-2,0}^{q+1}c_{j+2}$ and 
$$
\begin{aligned}
	&a_{(k-1)k}=\left( q-1 \right) \left[ b_{k-2,0}b_{k-1,0}^{q}\beta ^{(q-1)(j+2)}+b_{k-2,1}\beta ^{(q-1)(j+1)} \right]= b_{k-2,0}b_{k-1,0}^{q}c_{j+2}+b_{k-2,1}c_{j+1};\\
	&a_{k(k-1)}=\left( q-1 \right) \left[ b_{k-2,1}^{q}\beta ^{(q-1)(j+3)}+b_{k-2,0}^{q}b_{k-1,0}\beta ^{(q-1)(j+2)} \right]=b_{k-2,1}^{q}c_{j+3}+b_{k-2,0}^{q}b_{k-1,0}c_{j+2};\\
	&a_{kk}=\left( q-1 \right) \left[ b_{k-1,1}^{q}\beta ^{(q-1)(j+3)}+b_{k-1,0}^{q+1}\beta ^{(q-1)(j+2)}+b_{k-1,1}\beta ^{(q-1)(j+1)} \right]=b_{k-1,1}^{q}c_{j+3}+b_{k-1,0}^{q+1}c_{j+2}+b_{k-1,1}c_{j+1}.
\end{aligned}
$$

Thus 	
\begin{equation}
	\label{GG beta}
	\boldsymbol{G}_{\beta}\boldsymbol{G}_{\beta}^\dagger
	=\left( \begin{matrix}
		q-1&		&		&		&		&		&		&		&		&		&		c_q&		&		&		&		\\
		0&		&		&		&		&		&		&		&		&		c_{q-1}&		&		&		&		&		\\
		\vdots&		&		&		&		&		&		&		&	\iddots	&		&		&		&		&		&		\\
		0&		&		&		&		&		&		&		c_{j+4}&		&		&		&		&		&		&		\\
		0&		&		&		&		&		&		c_{j+3}&		\cdot&		\cdots&		\cdot&		\cdot&		\cdot&		\cdots&		D_5&		D_6\\
		0&		&		&		&		&		c_{j+2}&		\cdot&		\cdot&		\cdots&		\cdot&		\cdot&		\cdot&		\cdots&		D_7&	D_8\\
		0&		&		&		&		c_{j+1}&		\cdot&		\cdot&		\cdot&		\cdots&		\cdot&		\cdot&		\cdot&		\cdots&		0&		c_j\\
		0&		&		&		c_j&		\cdot&		\cdot&		\cdot&		\cdot&		\cdots&		\cdot&		\cdot&		\cdot&		\cdots&		c_{j-1}&		\cdot\\
		\vdots&		&	\iddots	&		&		\vdots&		\vdots&		\vdots&		\vdots&		&		&		&		&	\iddots	&		\vdots&		\vdots\\
		0&		c_2&		\cdots&		\cdot&		\cdot&		\cdot&		\cdot&		\cdot&		\cdots&		\cdot&		\cdot&		c_1&		\cdots&		\cdot&		\cdot\\
		c_1&		\cdot&		\cdots&		\cdot&		\cdot&		\cdot&		\cdot&		\cdot&		\cdots&		\cdot&		q-1&		\cdot&		\cdots&		\cdot&		\cdot\\
		0&		\cdot&		\cdots&		\cdot&		\cdot&		\cdot&		\cdot&		\cdot&		\cdots&		c_q&		\cdot&		\cdot&		\cdots&		\cdot&		\cdot\\
		\vdots&		&		&		&		\vdots&		\vdots&		\vdots&		\vdots&	\iddots	&		&		&		&		&		\vdots&		\vdots\\
		0&		\cdot&		\cdots&		\cdot&		D_1&		D_2&		0&		c_{j+5}&		\cdots&		\cdot&		\cdot&		\cdot&		\cdots&		A_1&		A_2\\
		0&		\cdot&		\cdots&		\cdot&		D_3&		D_4&		c_{j+4}&		\cdot&		\cdots&		\cdot&		\cdot&		\cdot&		\cdots&		A_3&		A_4\\
	\end{matrix} \right),
\end{equation}
where $D_1=b_{k-2,1}c_{j+2}$, 
$D_2=b_{k-2,0}c_{j+3}$, 
$D_3=b_{k-1,1}c_{j+2}$, 
$D_4=b_{k-1,0}c_{j+3}$, 
$D_5=b_{k-2,1}^{q}c_{j+2}$, 
$D_6=b_{k-1,1}^{q}c_{j+2}$, \\
$D_7=b_{k-2,0}^{q}c_{j+1}$,  
$D_8=b_{k-1,0}^{q}c_{j+1}$, $A_1=b_{k-2,0}^{q+1}c_{j+2}$,  $A_2=b_{k-2,0}b_{k-1,0}^{q}c_{j+2}+b_{k-2,1}c_{j+1}$, 
$A_3=b_{k-2,1}^{q}c_{j+3}+b_{k-2,0}^{q}b_{k-1,0}c_{j+2}$ and $A_4=b_{k-1,1}^{q}c_{j+3}+b_{k-1,0}^{q+1}c_{j+2}+b_{k-1,1}c_{j+1}$. 

Furthermore, by substituting the Equation \eqref{GG beta} into the Equation \eqref{M3.2} and directly calculating, Proposition \ref{P3.1} is immediately. \\
\quad

From the following Proposition \ref{P3.2},  to determine  $\operatorname{rank}(\boldsymbol{GG}^\dagger)$ for $\gcd(i,q+1)=1$, it's enough to compute  $\operatorname{rank}\left(\boldsymbol{A}_{2\times 2}\right)$. 
\begin{proposition}
	\label{P3.2}
	Let $\boldsymbol{G}$ be the generator matrix of $\mathcal{C}_{q+j}(\boldsymbol{a})$. If $\gcd(i,q+1)=1$, then $\mathrm{rank}\left(\boldsymbol{GG}^\dagger\right)=q+\mathrm{rank}\left(\boldsymbol{A}_{2\times 2}\right)$, 
	where 	$$
	\boldsymbol{A}_{2\times 2}=\left(\begin{matrix}
		C_1 & C_2\\
		C_3 & C_4\\
	\end{matrix}\right),
	$$
	with 
	$$
	\begin{aligned}
		&C_1=b_{k-2,0}^{q+1}\left(e_{j+2}-\frac{e_{j+3}}{e_{j+2}}e_{j+1}\right),\\ &C_2=b_{k-2,1}\left(e_{j+1}-\frac{e_{j+2}}{e_{j+1}}e_{j}\right) +b_{k-2,0}b_{k-1,0}^{q}\left(e_{j+2}-\frac{e_{j+3}}{e_{j+2}}e_{j+1}\right), \\
		&C_3=b_{k-2,0}^{q}b_{k-1,0}\left(e_{j+2}-\frac{e_{j+3}}{e_{j+2}}e_{j+1}\right)+b_{k-2,1}^{q}\left(e_{j+3}-\frac{e_{j+4}}{e_{j+3}}e_{j+2}\right)\\
	\end{aligned}
	$$
	and
	$$C_4=b_{k-1,1}\left(e_{j+1}-\frac{e_{j+2}}{e_{j+1}}e_{j}\right)+b_{k-1,0}^{q+1}\left(e_{j+2}-\frac{e_{j+3}}{e_{j+2}}e_{j+1}\right)+b_{k-1,1}^{q}\left(e_{j+3}-\frac{e_{j+4}}{e_{j+3}}e_{j+2}\right).$$
\end{proposition}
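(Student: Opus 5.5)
We start from the explicit form of $\boldsymbol{GG}^\dagger$ in Proposition \ref{P3.1} and reduce it by row and column operations, which preserve rank. Since $\gcd(i,q+1)=1$, Remark \ref{C2.1}(2) gives $e_r\neq 0$ for every $1\le r\le q$. In particular every entry on the two anti-diagonal ``staircases'' of $\boldsymbol{GG}^\dagger$ is nonzero, and so are the pivots $e_j,e_{j+1},e_{j+2},e_{j+3},e_{j+4}$ that appear in the last two rows and columns. The upper-left $(k-2)\times(k-2)$ block, indexed by the monomials $x^0,\dots,x^{k-3}$, is the block studied for GRS codes in \cite{RS}. Pairing row $u$ with column $q-u+1$ and row $u$ with column $2q-u$ shows that it has full structure apart from the entries with $u+v-2\equiv 0 \pmod{q-1}$.

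\textbf{Step 1.} Handle the pairs of rows and columns $\{1,q\}$, whose $2\times 2$ block is $\begin{pmatrix} i(q-1) & e_q\\ e_1 & i(q-1)\end{pmatrix}$. By Lemma \ref{L2.9 } (using $i\equiv 0$ or $i^2\equiv 1\pmod p$) this block has rank $2$. We use it to clear the remaining entries in rows and columns $1$ and $q$, which contributes rank $2$.

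\textbf{Step 2.} Every other row $u\le k-2$ with $u\notin\{1,q,j+1,j+2,j+3,j+4\}$ has a single nonzero entry $e_{r}$ in a column that is not touched by the last two rows or columns. Each such row therefore contributes rank $1$. Counting these rows together with Step 1 should account for $q-4$ units of rank.

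\textbf{Step 3.} Eliminate using the four rows and columns $j+1,\dots,j+4$. Row $j+3$ has pivot $e_{j+2}$ in column $j+2$ and row $j+2$ has pivot $e_{j+1}$ in column $j+3$. The entries of row $j+3$ in the last two columns are $b^q_{k-2,1}e_{j+2}$ and $b_{k-1,1}^qe_{j+2}$. Symmetrically, rows $j+1$ and $j+4$ interact with entries $e_j$ and $e_{j+4}$ in the last columns and with $e_{j+5}$, $e_{j+4}$ in the last rows.

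We clear the entries $b_{k-2,1}e_{j+2},\,b_{k-2,0}e_{j+3},\dots$ in rows $k-1,k$ by subtracting multiples of rows $j+1,\dots,j+4$, and similarly for the columns. Each elimination replaces $B_1,\dots,B_4$ by corrected quantities. For example, removing $b_{k-2,0}e_{j+3}$ from row $k-1$ via the pivot $e_{j+2}$ subtracts $\frac{e_{j+3}}{e_{j+2}}e_{j+1}b_{k-2,0}^{q+1}$ from $B_1$, which produces $C_1$. In the same way $C_2,C_3,C_4$ acquire the terms $e_{j+1}-\frac{e_{j+2}}{e_{j+1}}e_j$ and $e_{j+3}-\frac{e_{j+4}}{e_{j+3}}e_{j+2}$. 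These four pivots contribute rank $4$, giving $q$ in total, and what remains is exactly the block $\boldsymbol{A}_{2\times2}$ in the last two rows and columns.

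\textbf{Main obstacle.} The main difficulty is Step 3. We must get the order of eliminations right so that no fill-in appears elsewhere, and we must check that each correction uses the intended pivot. The index bookkeeping near $j+1,\dots,j+5$ is delicate: it involves which entries lie on the same anti-diagonal and whether $e_{j+5}$ in row $k-1$ is cleared by row $j+4$ (whose pivot is $e_{j+4}$) without disturbing the other columns. I would write out the $6\times 6$ (plus cross-terms) submatrix on indices $\{j+1,\dots,j+4,k-1,k\}$ explicitly and perform the Schur complement with respect to the $4\times4$ anti-diagonal block. This block is invertible because $e_{j+1}e_{j+2}e_{j+3}e_{j+4}\neq0$, and its Schur complement yields $\boldsymbol{A}_{2\times2}$ directly.
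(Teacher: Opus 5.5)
\textbf{Gap in Steps 2 and 3: the pivots are misplaced.} Your overall plan is the paper's. All $e_r\neq0$, the $\{1,q\}$ block has rank $2$ by Lemma~\ref{L2.9 }, and after eliminating with the anti-diagonal pivots only a $2\times2$ Schur complement in rows and columns $k-1,k$ should survive. But the index bookkeeping, which is the whole content of this proof, is wrong, and it breaks both Step~2 and Step~3.

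For $u\le q$, row $u$ carries $e_{q+1-u}$ in column $q+1-u$. So rows $j,j+1,j+2,j+3,j+4$ carry $e_{j+4},e_{j+3},e_{j+2},e_{j+1},e_j$ in columns $j+4,j+3,j+2,j+1,j$. The entries $b_{k-2,1}^qe_{j+2}$ and $b_{k-1,1}^qe_{j+2}$ lie in row $j+1$, not row $j+3$. Row $j+3$ has $(0,e_j)$ in the last two columns, and row $j+4$ has $(e_{j-1},0)$.

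Consequently the principal $4\times4$ block on $\{j+1,\dots,j+4\}$ has zero last row and zero last column. It is singular, so the Schur complement you propose does not exist. It is not even isolated, since row $j+4$ and column $j+4$ have further entries at index $j$.

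The correct pivot block is the $5\times5$ anti-diagonal block on $\{j,\dots,j+4\}$, i.e. the upper-left part of $\boldsymbol{A}_{7\times7}$ in Proposition~\ref{P3.3}. Its Schur complement is exactly $\boldsymbol{A}_{2\times2}$. In that elimination:
\begin{itemize}
\item Row $j$ (pivot $e_{j+4}$) clears $e_{j+5}$ from row $k-1$ without changing $B_1,B_2$, and column $j$ clears $e_{j-1}$.
\item Row $j+1$ (pivot $e_{j+3}$) clears $e_{j+4}$ from row $k$. This is what produces $e_{j+3}-\frac{e_{j+4}}{e_{j+3}}e_{j+2}$ in $C_3$ and $C_4$.
\item Row $j+3$, through its entry $e_j$ in column $k$, produces $e_{j+1}-\frac{e_{j+2}}{e_{j+1}}e_j$ in $C_2$ and $C_4$.
\end{itemize}

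\textbf{Step~2 is also false as stated.}
\begin{itemize}
\item For $2\le u\le j-1$, rows $u$ and $q-1+u$ both have their only nonzero entry in column $q+1-u$, namely $e_{q+1-u}$ and $e_{q+2-u}$. They are proportional and together contribute rank $1$, not $2$.
\item For $j+5\le u\le q-1$, row $u$ has two nonzero entries, in columns $q+1-u$ and $2q-u$, and these two columns are proportional.
\item Row $j$'s single entry sits in column $j+4$, which row $k-1$ does meet.
\end{itemize}
This is why the paper first replaces row $q-1+u$ by row $q-1+u$ minus $\frac{e_{q+2-u}}{e_{q+1-u}}$ times row $u$, and does the same for columns.

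The correct count is $2+2(j-2)=q-5$ from the indices outside $\{j,\dots,j+4,k-1,k\}$. The $7\times7$ block then adds $5+\operatorname{rank}(\boldsymbol{A}_{2\times2})$. Your figure $q-4$ was read off from the target rather than derived, and the rules you actually state overshoot: for $q=9$ they give $2+4+4=10$ instead of $9$.
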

\textbf{Proof.} By $\gcd (i, q+1)=1$ and Corollary \ref{C2.1}, we have $e_r \neq 0$ for  $1 \le r \le q$. For convenience, we denote  $\boldsymbol{a}_{u,1}$ be the $u$-th row of the matrix $\boldsymbol{GG}^{\dagger}$ given by  \eqref{GG H}, where $1 \leq u\leq k$. Now, for the matrix $\boldsymbol{GG}^{\dagger}$ given by \eqref{GG H}, we perform the following $4$ steps of elementary transformations.

\textbf{Step 1.} Replace $\boldsymbol{a}_{u+q-1,1}$ with $\boldsymbol{a}_{u+q-1,1}-\frac{e_{q+2-u}}{e_{q+1-u}}\boldsymbol{a}_{u,1}$ for each $u=2,3, \ldots , j+1$, we obtain the
matrix $\boldsymbol{A}^{(1)}$ and denote the vector $\boldsymbol{a}_{u,1}^{(1)}$ be the $u$-th row of $\boldsymbol{A}^{(1)}$, where $1 \leq u\leq k$;

\textbf{Step 2.} Replace $\boldsymbol{a}_{k-1,1}^{(1)}$ with $\boldsymbol{a}_{k-1,1}^{(1)}-\frac{b_{k-2,0}e_{j+3}}{e_{j+2}}\boldsymbol{a}_{j+2,1}^{(1)}-\frac{b_{k-2,1}e_{j+2}}{e_{j+1}}\boldsymbol{a}_{j+3,1}^{(1)}$, and replace $\boldsymbol{a}_{k,1}^{(1)}$ with $\boldsymbol{a}_{k,1}^{(1)}-\frac{b_{k-1,0}e_{j+3}}{e_{j+2}}\boldsymbol{a}_{j+2,1}^{(1)}-\frac{b_{k-1,1}e_{j+2}}{e_{j+1}}\boldsymbol{a}_{j+3,1}^{(1)}$, we obtain the
matrix $\boldsymbol{A}^{(2)}$ and denote the vector $\boldsymbol{b}_{1,v}^{(2)}$ be the $v$-th column of $\boldsymbol{A}^{(2)}$, where $1 \leq v\leq k$;

\textbf{Step 3.} Replace $\boldsymbol{b}_{1,v+q-1}^{(2)}$ with $\boldsymbol{b}_{1,v+q-1}^{(2)}-\frac{e_{v-1}}{e_{v}}\boldsymbol{b}_{1,v}^{(2)}$ for each $v=2,3, \ldots , j+1$, we obtain the matrix $\boldsymbol{A}^{(3)}$ and denote the vector $\boldsymbol{b}_{1,v}^{(3)}$ be the $v$-th column of $\boldsymbol{A}^{(3)}$, where $1 \leq v\leq k$;

\textbf{Step 4.} Replace $\boldsymbol{b}_{1,k-1}^{(3)}$ with $\boldsymbol{b}_{1,k-1}^{(3)}-\frac{b_{k-2,0}^qe_{j+1}}{e_{j+2}}\boldsymbol{b}_{1,j+2}^{(3)}-\frac{b_{k-2,1}^qe_{j+2}}{e_{j+3}}\boldsymbol{b}_{1,j+3}^{(3)}$, and replace $\boldsymbol{b}_{1,k}^{(3)}$ with $\boldsymbol{b}_{1,k}^{(3)}-\frac{b_{k-1,0}^qe_{j+1}}{e_{j+2}}\boldsymbol{b}_{1,j+2}^{(3)}-\frac{b_{k-1,1}^qe_{j+2}}{e_{j+3}}\boldsymbol{b}_{1,j+3}^{(3)}$, we obtain the
matrix



$$
\boldsymbol{A}^{(4)}=\left( \begin{matrix}
	i(q-1)&		&		&		&		&		&		&		&		e_q&		&		&		\\
	\vdots&		&		&		&		&		&		&	\iddots	&		&		&		&		\\
	0&		&		&		&		&		&		e_{j+4}&		&		&		&		&		\\
	0&		&		&		&		&		e_{j+3}&		\cdots&		\cdots&		\cdots&			\cdots&		0&		0\\
	0&		&		&		&		e_{j+2}&		\cdots&		\cdots&		\cdots&		\cdots&		\cdots&		0&		0\\
	0&		&		&		e_{j+1}&		\vdots&		&		&		&		&		&		&		0\\
	0&		&		e_j&		\vdots&		\vdots&		&		&		&		&		&		0&		\\
	\vdots&		\iddots	&		&		\vdots&		\vdots&		&		&		&		&			\iddots	&		&		\\
	e_1&		&		&		\vdots&		\vdots&		&		&		&		i(q-1)&		&		&		\\
	\vdots&		&		&		\vdots&		\vdots&		&		&		\iddots&		&		&		&		&		\\
	0&		&		&	0&		0&		&		0&			&		&		&		C_1 &		C_2 \\
	0&		&		&		0&		0&		0&		&		&		&		&		C_3 &		C_4 \\
	
\end{matrix} \right) 
$$
with
$$
\begin{aligned}
	&C_1=b_{k-2,0}^{q+1}\left(e_{j+2}-\frac{e_{j+3}}{e_{j+2}}e_{j+1}\right),\\ &C_2=b_{k-2,1}\left(e_{j+1}-\frac{e_{j+2}}{e_{j+1}}e_{j}\right) +b_{k-2,0}b_{k-1,0}^{q}\left(e_{j+2}-\frac{e_{j+3}}{e_{j+2}}e_{j+1}\right), \\
	&C_3=b_{k-2,0}^{q}b_{k-1,0}\left(e_{j+2}-\frac{e_{j+3}}{e_{j+2}}e_{j+1}\right)+b_{k-2,1}^{q}\left(e_{j+3}-\frac{e_{j+4}}{e_{j+3}}e_{j+2}\right)\\
\end{aligned}
$$
and
$$C_4=b_{k-1,1}\left(e_{j+1}-\frac{e_{j+2}}{e_{j+1}}e_{j}\right)+b_{k-1,0}^{q+1}\left(e_{j+2}-\frac{e_{j+3}}{e_{j+2}}e_{j+1}\right)+b_{k-1,1}^{q}\left(e_{j+3}-\frac{e_{j+4}}{e_{j+3}}e_{j+2}\right).$$
Now set 
$$
\boldsymbol{A}_{2\times 2}=\left(\begin{matrix}
	C_1 & C_2\\
	C_3 & C_4\\
\end{matrix}\right),
$$
and by combining with Lemma \ref{L2.9 },  we can get   $$\mathrm{rank}(\boldsymbol{G}\boldsymbol{G}^\dagger)=\mathrm{rank}(\boldsymbol{A}^{(4)})=q+\mathrm{rank}\left(\boldsymbol{A}_{2\times 2}\right).$$

From the following Proposition \ref{P3.3},  to determine  $\operatorname{rank}(\boldsymbol{GG}^\dagger)$ for $\gcd(i,q+1)=h>1$, it's enough to compute  $\operatorname{rank}\left(\boldsymbol{A}_{7\times 7}\right)$.

\begin{proposition}
	\label{P3.3}
	Let $\boldsymbol{G}$ be the generator matrix of $\mathcal{C}_{q+j}(\boldsymbol{a})$, if $\gcd(i,q+1)=h>1$, then $\mathrm{rank}\left( \boldsymbol{GG}^\dagger \right)=q-5+\mathrm{rank}\left(\boldsymbol{A}_{7\times 7}\right)$, 
	where 
	\begin{equation}
		\label{M3.20}
		\boldsymbol{A}_{7\times 7}=\left(\begin{matrix}
			&		&		&		&		e_{j+4}&		0	&		0	\\
			&		&		&		e_{j+3}&		&	b_{k-2,1}^{q}e_{j+2}&		b_{k-1,1}^{q}e_{j+2}	\\
			&		&		e_{j+2}&		&		&			b_{k-2,0}^{q}e_{j+1}&		b_{k-1,0}^{q}e_{j+1}	\\
			&		e_{j+1}&		&		&		&	0	&		e_j	\\
			e_j&		&		&		&		&	e_{j-1}	&	0	\\
			0	&	b_{k-2,1}e_{j+2}&		b_{k-2,0}e_{j+3}&	0	&	e_{j+5}	&	B_1	&	B_2	\\
			0	&		b_{k-1,1}e_{j+2}&		b_{k-1,0}e_{j+3}&		e_{j+4}&0	&	B_3	&	B_4	\\
		\end{matrix}\right)\ .
	\end{equation}
\end{proposition}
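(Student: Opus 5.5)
The plan is to exploit the sparsity pattern of $\boldsymbol{G}\boldsymbol{G}^{\dagger}$ in \eqref{GG H}. By Remark \ref{u+v-2} and Proposition \ref{P3.1}, an entry in position $(u,v)$ of the untwisted part can be nonzero only when $(q-1)\mid(u+v-2)$. So after a suitable simultaneous permutation of rows and columns, $\boldsymbol{G}\boldsymbol{G}^{\dagger}$ splits into small blocks plus one block containing everything that touches the twisted rows and columns $k-1,k$.

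\emph{Step 1: locate the twisted block.} By cases 2--6 of Proposition \ref{P3.1}, rows and columns $k-1,k$ meet only the indices corresponding to $e_j,e_{j+1},e_{j+2},e_{j+3},e_{j+4}$ (together with $e_{j-1},e_{j+5}$ in the last two rows and columns). I will take the five antidiagonal rows carrying $e_{j+4},e_{j+3},e_{j+2},e_{j+1},e_j$, their five partner columns, and rows/columns $k-1,k$. The resulting submatrix is exactly $\boldsymbol{A}_{7\times7}$ in \eqref{M3.20}. The key check is that this $7\times 7$ submatrix has no nonzero entries outside it in its rows and columns, apart from entries that can be cleared without touching $\boldsymbol{A}_{7\times 7}$. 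The point to verify is that $e_{j-1}$ and $e_{j+5}$ sit in positions whose other row/column entries are zero.

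\emph{Step 2: the complementary blocks contribute $q-5$ for every value of $\gcd(i,q+1)$.} These blocks are the remaining $k-7$ rows and columns, where $k-7=q+j-7=\tfrac{3q-17}{2}$. There are three kinds.
\begin{itemize}[leftmargin=*]
\item The block on indices $\{1,q\}$ is $\boldsymbol{D}=\begin{pmatrix} i(q-1)&e_q\\ e_1& i(q-1)\end{pmatrix}$, which has rank $2$ by Lemma \ref{L2.9 }.
\item For each $u$ with $2\le u\le j-1$, rows $u$ and $u+q-1$ are supported only on column $q+1-u$, with entries $e_{q+1-u}$ and $e_{q+2-u}$.
\item Symmetrically, for $j+5\le u\le q-1$, row $u$ has entries $e_{q+1-u}$ and $e_{q-u}$ in two columns, and those columns carry no other nonzero entries.
\end{itemize}
So each $u$ in the second or third family gives a $2\times1$ or $1\times2$ block $(e_r,e_{r+1})$. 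By Lemma \ref{L2.4 }(1), $e_r$ and $e_{r+1}$ are never simultaneously zero, so each such block has rank exactly $1$. (When $\gcd(i,q+1)=1$ one could pivot on either entry, as in Steps 1 and 3 of Proposition \ref{P3.2}. Here I will pivot on whichever entry is nonzero, which is precisely why the argument of Proposition \ref{P3.2} must be modified.) Summing $2$ from $\boldsymbol{D}$ and $1$ per pair, then checking that the number of pairs is $q-7$, gives $q-5$.

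\emph{Step 3: combine and identify the main obstacle.} The rank of a block-diagonal matrix, up to permutation, is the sum of the block ranks. This gives $\mathrm{rank}(\boldsymbol{G}\boldsymbol{G}^{\dagger})=q-5+\mathrm{rank}(\boldsymbol{A}_{7\times7})$. The main obstacle is the index bookkeeping. I must confirm that every index in $\{1,\dots,k\}$ lands in exactly one block, and that the boundary indices near $j-1,\dots,j+5$ and $q+1,\dots,k-2$ do not link a small block to $\boldsymbol{A}_{7\times7}$. Concretely, I must check that the column carrying $e_{j-1}$ (respectively the row carrying $e_{j+5}$) has no other nonzero entry, using the explicit ranges in cases 1--5. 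If such a link did occur, I would first try to remove it by an elimination pivoting on the nonzero member of the adjacent pair $(e_r,e_{r+1})$, guaranteed by Lemma \ref{L2.4 }, before splitting off the block.
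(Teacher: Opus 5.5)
Your proposal is correct and is essentially the paper's argument. The paper also exploits the $(q-1)\mid(u+v-2)$ sparsity and uses Lemma~\ref{L2.4 }(1) to pivot on (or swap to) the nonzero member of each pair $(e_r,e_{r+1})$ for $u=2,\dots,j-1$ and their partner rows and columns. It then permutes to $\boldsymbol{B}_{q-5\times q-5}\oplus\boldsymbol{0}_{j-2\times j-2}\oplus\boldsymbol{A}_{7\times7}$, with $\mathrm{rank}(\boldsymbol{B})=q-5$ from Lemma~\ref{L2.9 }, which is exactly your block count.

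The boundary issue you flag as the main obstacle is harmless. The entries $e_{j-1}$ and $e_{j+5}$ sit at positions $(j+4,k-1)$ and $(k-1,j+4)$, both inside the index set $\{j,\dots,j+4,k-1,k\}$. No row or column of that set has a nonzero entry outside it, so the splitting is exact and no clearing step is needed.
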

\textbf{Proof.} Firstly, we denote the matrix  $\boldsymbol{GG}^\dagger$ given in the Equation \eqref{GG H} by $\left(a_{uv}^{1}\right)_{k\times k}$, then by Lemma \ref{L2.4 }, we know that $e_r$ and $e_{r+1}$ are not zeros simultaneously, where $1\le r\le q-1$. Furthermore,  $a_{u(q+1-u)}^{1}$ and $a_{(q-1+u)(q+1-u)}^{1}$ are not zeros simultaneously, $a_{(q+1-v)v}^{1}$ and $a_{(q+1-v)(q-1+v)}^{1}$ are also not zeros simultaneously, where $2\le u, v\le j-1$. 

Secondly, for convenience, we denote the vectors $\boldsymbol{a}_{u,2}$ and $\boldsymbol{b}_{2,v}$ be the $u$-th row and the $v$-th column of the matrix  $\boldsymbol{GG}^\dagger$ given by the Equation \eqref{GG H}, respectively, where $1 \le u,v\le k$. Then for the matrix  $\boldsymbol{GG}^\dagger$ given by  \eqref{GG H},  we perform the following $2$ steps of elementary transformations.

\textbf{Step 1.} For each $u=2,3,\ldots,j-1$, if $a_{u(q+1-u)}^{1} = 0$, then swap $\boldsymbol{a}_{u,2}$ and $\boldsymbol{a}_{q-1+u,2}$; if $a_{u(q+1-u)}^{1} \neq 0$, then replace $\boldsymbol{a}_{q-1+u,2}$ with $\boldsymbol{a}_{q-1+u,2}-\frac{e_{q+2-u}}{e_{q+1-u}}\boldsymbol{a}_{u,2}$; 

\textbf{Step 2.} For each $v=2,3,\ldots,j-1$, if $a_{(q+1-v)v}^{1} = 0$, then swap $\boldsymbol{b}_{2,v}$ and $\boldsymbol{b}_{2,q-1+v}$; if $a_{(q+1-v)v}^{1} \neq 0$, then replace $\boldsymbol{b}_{2,q-1+v}$ with $\boldsymbol{b}_{2,q-1+v}-\frac{e_{v-1}}{e_{v}}\boldsymbol{b}_{2,v}$, and so, we obtain the matrix $\boldsymbol{A}_{1}^{(1)}$, and  denote $\boldsymbol{A}_{1}^{(1)}=\left(a_{uv}^{(1)}\right)_{k\times k}$ and $e_{u_{1}}^{1}=a_{q+1-u_{1},u_{1}}^{(1)}\in \mathbb{F}_{q^2}^{*}$, where $u_{1}\in \left\{2,3,\ldots,j-1,j+5,\ldots,q-1\right\}$. Thus, 
\begin{equation}
	\begin{aligned}
		&\boldsymbol{A}_{1}^{(1)}\\
		&=\left( \begin{matrix}
			i(q-1)& &  &	 &  &	&		&		&		&		&		&		&		e_q&	&	&	&	&		\\
			&     &  &	&	&	&		&		&		&		&		&	e_{q-1}^{1}	&	&	&	&	&	&	\\
			&	 & &   &	&	&		&		&		&		&	\iddots	&		&	&	&   	&   &	&		\\
			&	& 	&   &	&	&		&		&		&	e_{j+5}^{1}&	&		&		&	&	&	&    &	\\
			&	&	&   &	&	&		&		&	e_{j+4}	&		&		&		&		&	&	&       &	&	\\
			&	&	&   &	&	&		&e_{j+3}&	\cdots &	\cdots &\cdots &\cdots &\cdots & \cdots &\cdots &\cdots & D_5^{1}& D_6^{1}\\
			&	&	&   &	&	&e_{j+2}&\cdots &\cdots	 &\cdots	 &\cdots	 &	\cdots &	\cdots &\cdots  &\cdots &\cdots  &D_7^{1}&	D_8^{1}\\
			&	&	&   &	&e_{j+1}& \vdots&	&		&		&		&		&		&	& & &  &	e_j\\
			&	&	&   &e_j& \vdots& \vdots&		&		&		&		&		&	   & &  &  &	e_{j-1}&	\\
			&  &	&e_{j-1}^{1}&   &\vdots&\vdots&		&		&		&		&		&	 &  &  &  0&	&	\\
			&	& \iddots	&	&	&\vdots&\vdots&		&	    &   	&	    &    	&	& &  \iddots&   &  & &		\\
			&e_2^{1}&	&		& &\vdots &	\vdots	&		&		&       &	    &   	&	 & 0     &   &   &	&	\\
			e_1& &	&& &	 \vdots &	\vdots	&		&	&		&		&       &   i(q-1) &      &   &   &	&	\\
			&	&	&	& &\vdots&	\vdots	&	&  & 		&		&	0	&	 &     &       & &	&	\\
			&	&	&	& &\vdots&\vdots	&	&  & 	&	\iddots	&		&	 &     &       & &	&	\\
			&	&	&	& &\vdots&\vdots	&		&  & 0		&		&		&	 &     &       &	& &	\\
			&	&	&	& &D_1^{1}&	D_2^{1}&	&e_{j+5} &	&	&	&   & &   &  &	B_1&		B_2\\
			&	&	&	& &D_3^{1}&	D_4^{1} &e_{j+4}	&	&	&	&	& &	& &      &	B_3&		B_4\\
		\end{matrix} \right),
	\end{aligned} 
\end{equation}
where $D_1^{1}=b_{k-2,1}e_{j+2}$, 
$D_2^{1}=b_{k-2,0}e_{j+3}$, 
$D_3^{1}=b_{k-1,1}e_{j+2}$, 
$D_4^{1}=b_{k-1,0}e_{j+3}$, 
$D_5^{1}=b_{k-2,1}^{q}e_{j+2}$, 
$D_6^{1}=b_{k-1,1}^{q}e_{j+2}$, 
$D_7^{1}=b_{k-2,0}^{q}e_{j+1}$, 
$D_8^{1}=b_{k-1,0}^{q}e_{j+1}$, 
$B_1=b_{k-2,0}^{q+1}e_{j+2}$, $B_2=b_{k-2,1}e_{j+1}+b_{k-2,0}b_{k-1,0}^{q}e_{j+2}$, 
$B_3=b_{k-2,0}^{q}b_{k-1,0}e_{j+2}+b_{k-2,1}^{q}e_{j+3}$ and  $B_4=b_{k-1,1}e_{j+1}+b_{k-1,0}^{q+1}e_{j+2}+b_{k-1,1}^{q}e_{j+3}$.

Next, let $$
\boldsymbol{g}_i =\left(0,\dots,0,\underset{\text{the $i$-th component}}{1},0,\dots,0\right)^\mathrm{T}\in\mathbb{F}_{q^2}^{k},
$$ 

$$
\boldsymbol{P}=\big(\boldsymbol{g}_{1}\ \ldots\ \boldsymbol{g}_{j-1}\quad\boldsymbol{g}_{k-6}\quad
\boldsymbol{g}_{k-5}\quad\boldsymbol{g}_{k-4}\quad
\boldsymbol{g}_{k-3}\quad\boldsymbol{g}_{k-2}\quad
\boldsymbol{g}_{j}\ \ldots\ \boldsymbol{g}_{k-7}\quad \boldsymbol{g}_{k-1}\quad\boldsymbol{g}_{k}\big),  
$$
and
$$
\boldsymbol{Q}=\big(\boldsymbol{g}_{1}\ \ldots\ \boldsymbol{g}_{j-1}\quad\boldsymbol{g}_{j+5}\ \ldots\ \boldsymbol{g}_{k-2}\quad\boldsymbol{g}_{j}\quad \boldsymbol{g}_{j+1}\quad\boldsymbol{g}_{j+2}\quad \boldsymbol{g}_{j+3}\quad\boldsymbol{g}_{j+4}\quad \boldsymbol{g}_{k-1}\quad\boldsymbol{g}_{k}\big). 
$$

It's easy to know that the matrices $\boldsymbol{P}$ and $\boldsymbol{Q}$ are both non-singular, and then by directly calculating, we have

$$
\boldsymbol{P}\boldsymbol{A}_{1}^{(1)}\boldsymbol{Q}=\begin{pmatrix}
	\boldsymbol{B}_{q-5\times q-5}
	&\boldsymbol{0}_{q-5\times j-2} &\boldsymbol{0}_{q-5\times 7}\\
	\boldsymbol{0}_{j-2\times q-5} &\boldsymbol{0}_{j-2\times j-2} &\boldsymbol{0}_{j-2\times 7}\\
	\boldsymbol{0}_{7\times q-5} &\boldsymbol{0}_{7\times j-2} & \boldsymbol{A}_{7\times 7}
\end{pmatrix},
$$
where $$\boldsymbol{B}_{q-5\times q-5}=\begin{pmatrix}
	i(q-1) & & & & &  & & e_q\\
	& & & & & & e_{q-1}^{1}& \\
	& & & & & \iddots& & \\
	& & & & e_{j+5}^{1}& & & \\
	& & & e_{j-1}^{1}& & & & \\
	& & \iddots& & & & & \\
	& e_{2}^{1}& & & & & & \\
	e_{1}& & & & & & & i(q-1)\\
\end{pmatrix}$$
and
$$\boldsymbol{A}_{7\times 7}=\begin{pmatrix}
	0	&	0	&	0	&	0	&		e_{j+4}&			0&		0	\\
	0	&	0	&	0	&		e_{j+3}&		0&	b_{k-2,1}^{q}e_{j+2}&		b_{k-1,1}^{q}e_{j+2}	\\
	0	&	0	&		e_{j+2}&		0&		0&			b_{k-2,0}^{q}e_{j+1}&		b_{k-1,0}^{q}e_{j+1}	\\
	0	&		e_{j+1}&		0&		0&		0&	0	&		e_j	\\
	e_j&		0&		0&		0&		0&	e_{j-1}	&	0	\\
	0	&	b_{k-2,1}e_{j+2}&		b_{k-2,0}e_{j+3}&	0	&	e_{j+5}	&	B_1	&	B_2	\\
	0	&		b_{k-1,1}e_{j+2}&		b_{k-1,0}e_{j+3}&		e_{j+4}&0	&	B_3	&	B_4	\\
\end{pmatrix}$$
with 
$
B_1=b_{k-2,0}^{q+1}e_{j+2}$, $ B_2=b_{k-2,1}e_{j+1}+b_{k-2,0}b_{k-1,0}^{q}e_{j+2}
$, 
$
B_3=b_{k-2,0}^{q}b_{k-1,0}e_{j+2}+b_{k-2,1}^{q}e_{j+3}$ and \\ $B_4=b_{k-1,1}e_{j+1}+b_{k-1,0}^{q+1}e_{j+2}+b_{k-1,1}^{q}e_{j+3}
$. 

By combining with Lemma \ref{L2.9 }, we have  $\mathrm{rank}\left(\boldsymbol{B}_{q-5\times q-5}\right)=q-5$, furthermore,  $$\mathrm{rank}\left(\boldsymbol{GG}^\dagger\right)=\mathrm{rank}\left(\boldsymbol{P}\boldsymbol{A}_{1}^{(1)}\boldsymbol{Q}\right)=q-5+\mathrm{rank}\left(\boldsymbol{A}_{7\times 7}\right).$$

\subsection{The proofs of Theorems \ref{T3.1}-\ref{T3.3}}

By Lemma 2.5, we only need to focus on computing  $\operatorname{rank}\left(\boldsymbol{GG}^\dagger\right)$ to determine $\dim\left(\operatorname{Hull}_H\left(\mathcal{\mathcal{C}}\right)\right)$.

\textbf{The proof of Theorem \ref{T3.1}} 

By $\gcd(i,q+1)=1$ and Proposition \ref{P3.2}, we only need to compute  $\mathrm{rank}\left(\boldsymbol{A}_{2\times 2}\right)$. 
Next, depending on  $i=q$ or not, we divide the following 2 cases to determine the value of $\mathrm{rank}\left(\boldsymbol{A}_{2\times 2}\right)$.

\textbf{case 1.} If $i=q$, then by Lemma \ref{L2.10 }, we know that $e_{j+1}^2=e_{j+2}e_{j}$, $e_{j+2}^2=e_{j+3}e_{j+1}$ and $e_{j+3}^2=e_{j+4}e_{j+2}$, i.e., $$e_{j+1}-\frac{e_{j+2}}{e_{j+1}}e_{j}=e_{j+2}-\frac{e_{j+3}}{e_{j+2}}e_{j+1}=e_{j+3}-\frac{e_{j+4}}{e_{j+3}}e_{j+2}=0.$$ Furthermore, we can get  $\mathrm{rank}\left(\boldsymbol{A}_{2\times 2}\right)=0$, i.e.,  $\mathrm{rank}\left(\boldsymbol{G}\boldsymbol{G}^\dagger\right)=q$. 

\textbf{case 2.} If $i\neq q$, then by Lemma \ref{L2.10 }, we know that $e_{j+1}-\frac{e_{j+2}}{e_{j+1}}e_{j}$,  $e_{j+2}-\frac{e_{j+3}}{e_{j+2}}e_{j+1}$  and  $e_{j+3}-\frac{e_{j+4}}{e_{j+3}}e_{j+2}$ are all non-zero. 
Next, according to the number of zero entries among  $b_{k-2,0},b_{k-2,1},b_{k-1,0},b_{k-1,1}$, we divide the following $5$ cases to determine the value of  $\mathrm{rank}\left(\boldsymbol{A}_{2\times 2}\right)$. 

\textbf{case 2.1.} If $b_{k-2,0}$, $b_{k-2,1}$, $b_{k-1,0}$ and $b_{k-1,1}$ are all zero, it is easy to get $\mathrm{rank}\left(\boldsymbol{A}_{2\times2}\right)=0$, i.e., $\mathrm{rank}\left( \boldsymbol{G}\boldsymbol{G}^\dagger \right)=q$. 

\textbf{case 2.2.} If only one of $b_{k-2,0},b_{k-2,1},b_{k-1,0},b_{k-1,1}$ is non-zero, then we have the following $4$ cases. 

$(1)$ If $b_{k-2,0}\in \mathbb{F}_{q^2}^{*}$, then we have 
$$
\boldsymbol{A}_{2\times 2}=\left(\begin{matrix}
	b_{k-2,0}^{q+1}\left(e_{j+2}-\frac{e_{j+3}}{e_{j+2}}e_{j+1}\right) & 0\\
	0 & 0\\
\end{matrix}\right).
$$
It is easy to get $\mathrm{rank}\left(\boldsymbol{A}_{2\times 2}\right)=1$, i.e., $\mathrm{rank}\left( \boldsymbol{GG}^\dagger \right)=q+1$. 

$(2)$ If $b_{k-2,1}\in \mathbb{F}_{q^2}^{*}$, then we have

$$
\boldsymbol{A}_{2\times 2}=\left(\begin{matrix}
	0 & b_{k-2,1}\left(e_{j+1}-\frac{e_{j+2}}{e_{j+1}}e_{j}\right)\\
	b_{k-2,1}^{q}\left(e_{j+3}-\frac{e_{j+4}}{e_{j+3}}e_{j+2}\right)& 0\\
\end{matrix}\right). 
$$
It is easy to get $\mathrm{rank}\left(\boldsymbol{A}_{2\times 2}\right)=2$, i.e., $\mathrm{rank}\left( \boldsymbol{GG}^\dagger \right)=q+2$. 

	$(3)$ If $b_{k-1,0}\in \mathbb{F}_{q^2}^{*}$, then we have 
	
	$$
	\boldsymbol{A}_{2\times 2}=\left(\begin{matrix}
		0 & 0\\
		0 & b_{k-1,0}^{q+1}\left(e_{j+2}-\frac{e_{j+3}}{e_{j+2}}e_{j+1}\right)\\
	\end{matrix}\right). 
	$$
	It is easy to get $\mathrm{rank}\left(\boldsymbol{A}_{2\times 2}\right)=1$, i.e., $\mathrm{rank}\left( \boldsymbol{GG}^\dagger \right)=q+1$. 
	
	

	$(4)$ If $b_{k-1,1}\in \mathbb{F}_{q^2}^{*}$, then we have
	
	$$
	\boldsymbol{A}_{2\times 2}=\left(\begin{matrix}
		0 & 0\\
		0 & b_{k-1,1}\left(e_{j+1}-\frac{e_{j+2}}{e_{j+1}}e_{j}\right)+b_{k-1,1}^{q}\left(e_{j+3}-\frac{e_{j+4}}{e_{j+3}}e_{j+2}\right)\\
	\end{matrix}\right). 
	$$
	Thus,  $\mathrm{rank}\left(\boldsymbol{A}_{2\times 2}\right)=0$ if and only if $b_{k-1,1}\left(e_{j+1}-\frac{e_{j+2}}{e_{j+1}}e_{j}\right)+b_{k-1,1}^{q}\left(e_{j+3}-\frac{e_{j+4}}{e_{j+3}}e_{j+2}\right)=0$. And by Corollary \ref{C2.2}, we know that $\mathrm{rank}\left(\boldsymbol{A}_{2\times 2}\right)=0$ if and only if $b_{k-1,1}^{q-1}\gamma^{(i-1)(q-1)}=-1$. Thus 
	$$
	\mathrm{rank}\left( \boldsymbol{GG}^\dagger \right) = \begin{array}{c}
		\begin{cases}
			q, &\text{if } b_{k-1,1}^{q-1}\gamma^{(i-1)(q-1)}=-1 ;\\
			q+1, &\text{if }b_{k-1,1}^{q-1}\gamma^{(i-1)(q-1)}\neq -1.
		\end{cases}
	\end{array} 
	$$

	\textbf{case 2.3.} If two of $b_{k-2,0},b_{k-2,1},b_{k-1,0},b_{k-1,1}$ are exactly zero, then we have the following $6$ cases. 
	
	(1) If $b_{k-2,0}=b_{k-2,1}=0$, then we have 
	
	$$
	\boldsymbol{A}_{2\times 2}=\left(\begin{matrix}
		0 & 0\\
		0 & b_{k-1,1}\left(e_{j+1}-\frac{e_{j+2}}{e_{j+1}}e_{j}\right)+b_{k-1,0}^{q+1}\left(e_{j+2}-\frac{e_{j+3}}{e_{j+2}}e_{j+1}\right)+b_{k-1,1}^{q}\left(e_{j+3}-\frac{e_{j+4}}{e_{j+3}}e_{j+2}\right)\\
	\end{matrix}\right).
	$$
	Thus,  $\mathrm{rank}\left(\boldsymbol{A}_{2\times 2}\right)=0$ if and only if  $b_{k-1,1}\left(e_{j+1}-\frac{e_{j+2}}{e_{j+1}}e_{j}\right)+b_{k-1,0}^{q+1}\left(e_{j+2}-\frac{e_{j+3}}{e_{j+2}}e_{j+1}\right)+b_{k-1,1}^{q}\left(e_{j+3}-\frac{e_{j+4}}{e_{j+3}}e_{j+2}\right)=0$. And by Lemma \ref{L2.11 }, we know that $\mathrm{rank}\left(\boldsymbol{A}_{2\times 2}\right)=0$ if and only if $\varGamma_{2}=0$.  
		Thus 
		$$
		\mathrm{rank}\left( \boldsymbol{GG}^\dagger \right) = \begin{array}{c}
			\begin{cases}
				q, &\text{if } \varGamma_{2}=0;\\
				q+1, &\text{if }\varGamma_{2}\neq 0.
			\end{cases}
		\end{array}
		$$
		
		$(2)$ If $b_{k-2,0}=b_{k-1,0}=0$, then we have 
		
		$$
		\boldsymbol{A}_{2\times 2}=\left(\begin{matrix}
			0 & b_{k-2,1}\left(e_{j+1}-\frac{e_{j+2}}{e_{j+1}}e_{j}\right)\\
			b_{k-2,1}^{q}\left(e_{j+3}-\frac{e_{j+4}}{e_{j+3}}e_{j+2}\right) & b_{k-1,1}\left(e_{j+1}-\frac{e_{j+2}}{e_{j+1}}e_{j}\right)+b_{k-1,1}^{q}\left(e_{j+3}-\frac{e_{j+4}}{e_{j+3}}e_{j+2}\right)\\
		\end{matrix}\right).
		$$
		And so,  $\mathrm{rank}\left(\boldsymbol{A}_{2\times 2}\right)=2$, i.e., $\mathrm{rank}\left( \boldsymbol{GG}^\dagger \right)=q+2$.


		$(3)$ If $b_{k-2,0}=b_{k-1,1}=0$, then we have 
		
		$$
		\boldsymbol{A}_{2\times 2}=\left(\begin{matrix}
			0 & b_{k-2,1}\left(e_{j+1}-\frac{e_{j+2}}{e_{j+1}}e_{j}\right)\\
			b_{k-2,1}^{q}\left(e_{j+3}-\frac{e_{j+4}}{e_{j+3}}e_{j+2}\right) & b_{k-1,0}^{q+1}\left(e_{j+2}-\frac{e_{j+3}}{e_{j+2}}e_{j+1}\right)\\
		\end{matrix}\right).
		$$
		And so,  $\mathrm{rank}\left(\boldsymbol{A}_{2\times 2}\right)=2$, i.e., $\mathrm{rank}\left( \boldsymbol{GG}^\dagger \right)=q+2$.


		$(4)$ If $b_{k-1,0}=b_{k-2,1}=0$, then we have 
		
		$$
		\boldsymbol{A}_{2\times 2}=\left(\begin{matrix}
			b_{k-2,0}^{q+1}\left(e_{j+2}-\frac{e_{j+3}}{e_{j+2}}e_{j+1}\right) & 0\\
			0 & b_{k-1,1}\left(e_{j+1}-\frac{e_{j+2}}{e_{j+1}}e_{j}\right)+b_{k-1,1}^{q}\left(e_{j+3}-\frac{e_{j+4}}{e_{j+3}}e_{j+2}\right)\\
		\end{matrix}\right).
		$$
		Thus,  $\mathrm{rank}\left(\boldsymbol{A}_{2\times 2}\right)=1$ if and only if $b_{k-1,1}\left(e_{j+1}-\frac{e_{j+2}}{e_{j+1}}e_{j}\right)+b_{k-1,1}^{q}\left(e_{j+3}-\frac{e_{j+4}}{e_{j+3}}e_{j+2}\right)=0$. Now by Corollary \ref{C2.2}, we know that  $\mathrm{rank}\left(\boldsymbol{A}_{2\times 2}\right)=1$ if and only if $b_{k-1,1}^{q-1}\gamma^{(i-1)(q-1)}=-1$. Thus 
		$$
		\mathrm{rank}\left( \boldsymbol{GG}^\dagger \right) = \begin{array}{c}
			\begin{cases}
				q+1,&\text{if } b_{k-1,1}^{q-1}\gamma^{(i-1)(q-1)}=-1;\\
				q+2,&\text{if }  b_{k-1,1}^{q-1}\gamma^{(i-1)(q-1)}\neq -1. 
			\end{cases}
		\end{array} 
		$$


		$(5)$ If $b_{k-2,1}=b_{k-1,1}=0$, then we have
		
		$$
		\boldsymbol{A}_{2\times 2}=\left(\begin{matrix}
			b_{k-2,0}^{q+1}\left(e_{j+2}-\frac{e_{j+3}}{e_{j+2}}e_{j+1}\right) &b_{k-2,0}b_{k-1,0}^{q}\left(e_{j+2}-\frac{e_{j+3}}{e_{j+2}}e_{j+1}\right)\\
			b_{k-2,0}^{q}b_{k-1,0}\left(e_{j+2}-\frac{e_{j+3}}{e_{j+2}}e_{j+1}\right) &b_{k-1,0}^{q+1}\left(e_{j+2}-\frac{e_{j+3}}{e_{j+2}}e_{j+1}\right)\\
		\end{matrix}\right).
		$$
		Note that $$\frac{	b_{k-2,0}^{q}b_{k-1,0}\left(e_{j+2}-\frac{e_{j+3}}{e_{j+2}}e_{j+1}\right)}{b_{k-2,0}^{q+1}\left(e_{j+2}-\frac{e_{j+3}}{e_{j+2}}e_{j+1}\right)}=\frac{b_{k-1,0}^{q+1}\left(e_{j+2}-\frac{e_{j+3}}{e_{j+2}}e_{j+1}\right)}{b_{k-2,0}b_{k-1,0}^{q}\left(e_{j+2}-\frac{e_{j+3}}{e_{j+2}}e_{j+1}\right)}=\frac{b_{k-1,0}}{b_{k-2,0}},$$ and so, $\mathrm{rank}\left(\boldsymbol{A}_{2\times 2}\right)=1$, i.e., $\mathrm{rank}\left( \boldsymbol{GG}^\dagger \right)=q+1$.


		$(6)$ If $b_{k-1,0}=b_{k-1,1}=0$, then we have
		$$
		\boldsymbol{A}_{2\times 2}=\left(\begin{matrix}
			b_{k-2,0}^{q+1}\left(e_{j+2}-\frac{e_{j+3}}{e_{j+2}}e_{j+1}\right) & b_{k-2,1}\left(e_{j+1}-\frac{e_{j+2}}{e_{j+1}}e_{j}\right)\\
			b_{k-2,1}^{q}\left(e_{j+3}-\frac{e_{j+4}}{e_{j+3}}e_{j+2}\right) & 0\\
		\end{matrix}\right).
		$$
		And so,  $\mathrm{rank}\left(\boldsymbol{A}_{2\times 2}\right)=2$, i.e., $\mathrm{rank}\left( \boldsymbol{GG}^\dagger \right)=q+2$. 
		
		\textbf{case 2.4.} If only one of $b_{k-2,0},b_{k-2,1},b_{k-1,0},b_{k-1,1}$ is zero, then we have the following $4$ cases. 
		
		$(1)$ If $b_{k-2,0} = 0$, then we have
		
		$$
		\boldsymbol{A}_{2\times 2}=\left(\begin{matrix}
			0 & b_{k-2,1}\left(e_{j+1}-\frac{e_{j+2}}{e_{j+1}}e_{j}\right)\\
			b_{k-2,1}^{q}\left(e_{j+3}-\frac{e_{j+4}}{e_{j+3}}e_{j+2}\right) & b_{k-1,1}\left(e_{j+1}-\frac{e_{j+2}}{e_{j+1}}e_{j}\right)+b_{k-1,0}^{q+1}\left(e_{j+2}-\frac{e_{j+3}}{e_{j+2}}e_{j+1}\right)+b_{k-1,1}^{q}\left(e_{j+3}-\frac{e_{j+4}}{e_{j+3}}e_{j+2}\right)\\
		\end{matrix}\right).
		$$
		And so,  $\mathrm{rank}\left(\boldsymbol{A}_{2\times 2}\right)=2$, i.e., $\mathrm{rank}\left( \boldsymbol{GG}^\dagger \right)=q+2$. 
		
		$(2)$ If $b_{k-2,1} = 0$, then we have
		
		\begin{equation}
			\label{M3.17}
			\boldsymbol{A}_{2\times 2}=\left(\begin{matrix}
				b_{k-2,0}^{q+1}\left(e_{j+2}-\frac{e_{j+3}}{e_{j+2}}e_{j+1}\right)& b_{k-2,0}b_{k-1,0}^{q}\left(e_{j+2}-\frac{e_{j+3}}{e_{j+2}}e_{j+1}\right)\\
				b_{k-2,0}^{q}b_{k-1,0}\left(e_{j+2}-\frac{e_{j+3}}{e_{j+2}}e_{j+1}\right) & b_{k-1,1}\left(e_{j+1}-\frac{e_{j+2}}{e_{j+1}}e_{j}\right)+b_{k-1,0}^{q+1}\left(e_{j+2}-\frac{e_{j+3}}{e_{j+2}}e_{j+1}\right)+b_{k-1,1}^{q}\left(e_{j+3}-\frac{e_{j+4}}{e_{j+3}}e_{j+2}\right)\\
			\end{matrix}\right). 
		\end{equation}

		Next, for convenience, we denote  $\boldsymbol{a}_{u_1,3}$ be the $u_1$-th row of  the matrix  $\boldsymbol{A}_{2\times 2}$ given by the Equation \eqref{M3.17}, where $u_1=1,2$.  Then for the matrix  $\boldsymbol{A}_{2\times 2}$ given by  \eqref{M3.17}, we perform the following $2$ steps of elementary transformations.
		
		\textbf{Step 1.} Replace $\boldsymbol{a}_{2,3}$ with $\boldsymbol{a}_{2,3}-\frac{b_{k-1,0}}{b_{k-2,0}}\boldsymbol{a}_{1,3}$, we obtain the
		matrix $\boldsymbol{A}_{2\times 2}^{(1)}$ and denote the vector $\boldsymbol{b}_{3,v_1}^{(1)}$ be the $v_1$-th column of $\boldsymbol{A}_{2\times 2}^{(1)}$, where $v_1=1,2$;
		
		\textbf{Step 2.} Replace $\boldsymbol{b}_{3,2}^{(1)}$ with $\boldsymbol{b}_{3,2}^{(1)}-\left(\frac{b_{k-1,0}}{b_{k-2,0}}\right)^q\boldsymbol{b}_{3,1}^{(1)}$, we obtain the
		matrix 
		$$
		\boldsymbol{A}_{2\times 2}^1=\left(\begin{matrix}
			b_{k-2,0}^{q+1}\left(e_{j+2}-\frac{e_{j+3}}{e_{j+2}}e_{j+1}\right)& 0\\
			0 & b_{k-1,1}\left(e_{j+1}-\frac{e_{j+2}}{e_{j+1}}e_{j}\right)+b_{k-1,1}^{q}\left(e_{j+3}-\frac{e_{j+4}}{e_{j+3}}e_{j+2}\right)\\
		\end{matrix}\right). 
		$$
		Hence,  $\mathrm{rank}\left(\boldsymbol{A}_{2\times 2}\right)=\mathrm{rank}(\boldsymbol{A}_{2\times 2}^1)=1$ if and only if $$b_{k-1,1}\left(e_{j+1}-\frac{e_{j+2}}{e_{j+1}}e_{j}\right)+b_{k-1,1}^{q}\left(e_{j+3}-\frac{e_{j+4}}{e_{j+3}}e_{j+2}\right)=0 .$$
		Now by Corollary \ref{C2.2}, we know that $\mathrm{rank}\left(\boldsymbol{A}_{2\times 2}\right)=1$ if and only if $b_{k-1,1}^{q-1}\gamma^{(i-1)(q-1)}=-1$.  
		Thus 
		$$
		\mathrm{rank}\left( \boldsymbol{GG}^\dagger \right) =\begin{array}{c}
			\begin{cases}
				q+1,&\text{if } b_{k-1,1}^{q-1}\gamma^{(i-1)(q-1)}=-1;\\
				q+2,& \text{if  } b_{k-1,1}^{q-1}\gamma^{(i-1)(q-1)}\neq -1.  
			\end{cases}
		\end{array} 
		$$


		$(3)$ If $b_{k-1,0} = 0$, then we have
		
		$$
		\boldsymbol{A}_{2\times 2}=\left(\begin{matrix}
			b_{k-2,0}^{q+1}\left(e_{j+2}-\frac{e_{j+3}}{e_{j+2}}e_{j+1}\right)& b_{k-2,1}\left(e_{j+1}-\frac{e_{j+2}}{e_{j+1}}e_{j}\right)\\
			b_{k-2,1}^{q}\left(e_{j+3}-\frac{e_{j+4}}{e_{j+3}}e_{j+2}\right) & b_{k-1,1}\left(e_{j+1}-\frac{e_{j+2}}{e_{j+1}}e_{j}\right)+b_{k-1,1}^{q}\left(e_{j+3}-\frac{e_{j+4}}{e_{j+3}}e_{j+2}\right)\\
		\end{matrix}\right). 
		$$
		Hence,  $\mathrm{rank}\left(\boldsymbol{A}_{2\times 2}\right)=1$ if and only if $$\frac{b_{k-2,1}^{q}\left(e_{j+3}-\frac{e_{j+4}}{e_{j+3}}e_{j+2}\right)}{b_{k-2,0}^{q+1}\left(e_{j+2}-\frac{e_{j+3}}{e_{j+2}}e_{j+1}\right)}=\frac{b_{k-1,1}}{b_{k-2,1}}+\frac{b_{k-1,1}^{q}\left(e_{j+3}-\frac{e_{j+4}}{e_{j+3}}e_{j+2}\right)}{b_{k-2,1}\left(e_{j+1}-\frac{e_{j+2}}{e_{j+1}}e_{j}\right)}  .$$
		Now by Corollary \ref{C2.3}, we know that  $\mathrm{rank}\left(\boldsymbol{A}_{2\times 2}\right)=1$ if and only if  $$b_{k-1,1}\left(1+b_{k-1,1}^{q-1}\cdot\gamma^{(i-1)(q-1)} \right)\left(1+\gamma ^{i\left( q-1 \right)}\right)\left(1+\gamma ^{2\left( q-1 \right)}\right)
		=2\left(\frac{b_{k-2,1}}{b_{k-2,0}}\right)^{q+1}\gamma^{i(q-1)}\cdot\left(1+\gamma ^{\left( q-1 \right)}\right). $$
		Thus 
		$$
		\mathrm{rank}\left( \boldsymbol{GG}^\dagger \right) =\begin{array}{c}
			\begin{cases}
				q+1, &\text{if }  b_{k-1,1}\left(1+b_{k-1,1}^{q-1}\cdot\gamma^{(i-1)(q-1)} \right)\left(1+\gamma ^{i\left( q-1 \right)}\right)\left(1+\gamma ^{2\left( q-1 \right)}\right)\\
				&\quad=2\left(\frac{b_{k-2,1}}{b_{k-2,0}}\right)^{q+1}\gamma^{i(q-1)}\cdot\left(1+\gamma ^{\left( q-1 \right)}\right);\\
				q+2,&\text{if }  b_{k-1,1}\left(1+b_{k-1,1}^{q-1}\cdot\gamma^{(i-1)(q-1)} \right)\left(1+\gamma ^{i\left( q-1 \right)}\right)\left(1+\gamma ^{2\left( q-1 \right)}\right)\\
				&\quad\neq 2\left(\frac{b_{k-2,1}}{b_{k-2,0}}\right)^{q+1}\gamma^{i(q-1)}\cdot\left(1+\gamma ^{\left( q-1 \right)}\right).
			\end{cases}
		\end{array}  
		$$


			$(4)$ If $b_{k-1,1} = 0$, then we have
			
			\begin{equation}
				\label{M3.18}
				\begin{aligned}
					&\boldsymbol{A}_{2\times 2}\\
					&=\left(\begin{matrix}
						b_{k-2,0}^{q+1}\left(e_{j+2}-\frac{e_{j+3}}{e_{j+2}}e_{j+1}\right)& b_{k-2,1}\left(e_{j+1}-\frac{e_{j+2}}{e_{j+1}}e_{j}\right) +b_{k-2,0}b_{k-1,0}^{q}\left(e_{j+2}-\frac{e_{j+3}}{e_{j+2}}e_{j+1}\right)\\
						b_{k-2,0}^{q}b_{k-1,0}\left(e_{j+2}-\frac{e_{j+3}}{e_{j+2}}e_{j+1}\right)+b_{k-2,1}^{q}\left(e_{j+3}-\frac{e_{j+4}}{e_{j+3}}e_{j+2}\right) & b_{k-1,0}^{q+1}\left(e_{j+2}-\frac{e_{j+3}}{e_{j+2}}e_{j+1}\right)\\
					\end{matrix}\right).
				\end{aligned}
			\end{equation}
			Next, for convenience, we denote  $\boldsymbol{b}_{4,v_1}$ be the $v_1$-th column of the matrix $\boldsymbol{A}_{2\times 2}$ given by the Equation \eqref{M3.18}, where $v_1=1,2,$  then replace $\boldsymbol{b}_{4,2}$ with $\boldsymbol{b}_{4,2}-\left(\frac{b_{k-1,0}}{b_{k-2,0}}\right)^q\boldsymbol{b}_{4,1}$, we obtain the matrix  
			$$
			\boldsymbol{A}_{2\times 2}^2=\left(\begin{matrix}
				b_{k-2,0}^{q+1}\left(e_{j+2}-\frac{e_{j+3}}{e_{j+2}}e_{j+1}\right)& b_{k-2,1}\left(e_{j+1}-\frac{e_{j+2}}{e_{j+1}}e_{j}\right) \\
				b_{k-2,0}^{q}b_{k-1,0}\left(e_{j+2}-\frac{e_{j+3}}{e_{j+2}}e_{j+1}\right)+b_{k-2,1}^{q}\left(e_{j+3}-\frac{e_{j+4}}{e_{j+3}}e_{j+2}\right) & -\left(\frac{b
					_{10}b_{k-2,1}}{b_{k-2,0}}\right)^q\left(e_{j+3}-\frac{e_{j+4}}{e_{j+3}}e_{j+2}\right)\\
			\end{matrix}\right). 
			$$
			Hence,   $\mathrm{rank}\left(\boldsymbol{A}_{2\times 2}\right)=\mathrm{rank}(\boldsymbol{A}_{2\times 2}^2)=1$ if and only if
			
			$$
			\frac{b_{k-2,1}^q\left(e_{j+3}-\frac{e_{j+4}}{e_{j+3}}e_{j+2}\right)}{b_{k-2,0}^{q+1}\left(e_{j+2}-\frac{e_{j+3}}{e_{j+2}}e_{j+1}\right)}+\frac{b_{k-1,0}}{b_{k-2,0}}=\frac{-b_{k-1,0}^qb_{k-2,1}^q\left(e_{j+3}-\frac{e_{j+4}}{e_{j+3}}e_{j+2}\right)}{b_{k-2,0}^qb_{k-2,1}\left(e_{j+1}-\frac{e_{j+2}}{e_{j+1}}e_{j}\right) }. 
			$$
			Now by Corollary \ref{C2.4}, we know that $\mathrm{rank}\left(\boldsymbol{A}_{2\times 2}\right)=1$ if and only if
			
			$$\Delta_{1}\left[1+\left(\Delta_{1}\cdot\gamma^{(i-1)}\right)^{q-1} \right]\left(1+\gamma ^{i\left( q-1 \right)}\right)\left(1+\gamma ^{2\left( q-1 \right)}\right)=2\left(\frac{b_{k-2,1}}{b_{k-2,0}}\right)^{q+1}\gamma^{i(q-1)}\cdot\left(1+\gamma ^{\left( q-1 \right)}\right),$$
			where $\Delta_{1}=-\frac{b_{k-2,1}b_{k-1,0}}{b_{k-2,0}}$. 
			Thus 
			
			$$
			\mathrm{rank}\left( \boldsymbol{GG}^\dagger \right) =\begin{array}{c}
				\begin{cases}
					q+1, &\text{if }  \Delta_{1}\left[1+\left(\Delta_{1}\cdot\gamma^{(i-1)}\right)^{q-1} \right]\left(1+\gamma ^{i\left( q-1 \right)}\right)\left(1+\gamma ^{2\left( q-1 \right)}\right)\\
					&\quad=2\left(\frac{b_{k-2,1}}{b_{k-2,0}}\right)^{q+1}\gamma^{i(q-1)}\cdot\left(1+\gamma ^{\left( q-1 \right)}\right);\\
					q+2,&\text{if }  \Delta_{1}\left[1+\left(\Delta_{1}\cdot\gamma^{(i-1)}\right)^{q-1} \right]\left(1+\gamma ^{i\left( q-1 \right)}\right)\left(1+\gamma ^{2\left( q-1 \right)}\right)\\
					&\quad\neq 2\left(\frac{b_{k-2,1}}{b_{k-2,0}}\right)^{q+1}\gamma^{i(q-1)}\cdot\left(1+\gamma ^{\left( q-1 \right)}\right),
				\end{cases}
			\end{array} 
			$$	
			where $\Delta_{1}=-\frac{b_{k-2,1}b_{k-1,0}}{b_{k-2,0}}$.

			\textbf{case 2.5.} If $b_{k-2,0}$, $b_{k-2,1}$, $b_{k-1,0}$ and $b_{k-1,1}$ are all non-zero, then we have  
			\begin{equation}
				\label{M3.19}
				\boldsymbol{A}_{2\times 2}=\left(\begin{matrix}
					C_1 & C_2\\
					C_3 & C_4\\
				\end{matrix}\right),
			\end{equation}
			where $$
			\begin{aligned}
				&C_1=b_{k-2,0}^{q+1}\left(e_{j+2}-\frac{e_{j+3}}{e_{j+2}}e_{j+1}\right),\\ &C_2=b_{k-2,1}\left(e_{j+1}-\frac{e_{j+2}}{e_{j+1}}e_{j}\right) +b_{k-2,0}b_{k-1,0}^{q}\left(e_{j+2}-\frac{e_{j+3}}{e_{j+2}}e_{j+1}\right), \\
				&C_3=b_{k-2,0}^{q}b_{k-1,0}\left(e_{j+2}-\frac{e_{j+3}}{e_{j+2}}e_{j+1}\right)+b_{k-2,1}^{q}\left(e_{j+3}-\frac{e_{j+4}}{e_{j+3}}e_{j+2}\right)\\
			\end{aligned}
			$$
			and
			$$C_4=b_{k-1,1}\left(e_{j+1}-\frac{e_{j+2}}{e_{j+1}}e_{j}\right)+b_{k-1,0}^{q+1}\left(e_{j+2}-\frac{e_{j+3}}{e_{j+2}}e_{j+1}\right)+b_{k-1,1}^{q}\left(e_{j+3}-\frac{e_{j+4}}{e_{j+3}}e_{j+2}\right).$$
			
			Next, for convenience, we denote  $\boldsymbol{a}_{u_1,5}$ be the $u_1$-th row of the matrix  $\boldsymbol{A}_{2\times 2}$ given by the Equation \eqref{M3.19}, where $u_1=1,2$.  Then for the matrix  $\boldsymbol{A}_{2\times 2}$ given by \eqref{M3.19}, we  perform the following $2$ steps of elementary transformations.
			
			\textbf{Step 1.} Replace $\boldsymbol{a}_{2,5}$ with $\boldsymbol{a}_{2,5}-\frac{b_{k-1,0}}{b_{k-2,0}}\boldsymbol{a}_{1,5}$, we obtain the
			matrix $\boldsymbol{A}_{2\times 2}^{(2)}$ and denote the vector $\boldsymbol{b}_{5,v_1}^{(2)}$ be the $v_1$-th column of $\boldsymbol{A}_{2\times 2}^{(2)}$, where $v_1=1,2$;
			
			\textbf{Step 2.} Replace $\boldsymbol{b}_{5,2}^{(2)}$ with $\boldsymbol{b}_{5,2}^{(2)}-\left(\frac{b_{k-1,0}}{b_{k-2,0}}\right)^q\boldsymbol{b}_{5,1}^{(2)}$, we obtain the
			matrix 
			$$
			\boldsymbol{A}_{2\times 2}^{3}=\left(\begin{matrix}
				b_{k-2,0}^{q+1}\left(e_{j+2}-\frac{e_{j+3}}{e_{j+2}}e_{j+1}\right) & b_{k-2,1}\left(e_{j+1}-\frac{e_{j+2}}{e_{j+1}}e_{j}\right)\\
				b_{k-2,1}^{q}\left(e_{j+3}-\frac{e_{j+4}}{e_{j+3}}e_{j+2}\right) & C_4^{1}\\
			\end{matrix}\right),
			$$
			where
			$$
			C_4^{1}=\left(b_{k-1,1}-\frac{b_{k-2,1}b_{k-1,0}}{b_{k-2,0}}\right)\left(e_{j+1}-\frac{e_{j+2}}{e_{j+1}}e_{j}\right)+\left(b_{k-1,1}^{q}-\frac{b_{k-1,0}^qb_{k-2,1}^q}{b_{k-2,0}^q}\right)\left(e_{j+3}-\frac{e_{j+4}}{e_{j+3}}e_{j+2}\right).
			$$
			Hence, $\mathrm{rank}(A_{2\times 2})=\mathrm{rank}(A_{2\times 2}^3)=1$ if and only if
			
			$$
			\frac{b_{k-1,1}}{b_{k-2,1}}-\frac{b_{k-1,0}}{b_{k-2,0}}+\frac{\left(b_{k-1,1}^{q}-\frac{b_{k-1,0}^qb_{k-2,1}^q}{b_{k-2,0}^q}\right)\left(e_{j+3}-\frac{e_{j+4}}{e_{j+3}}e_{j+2}\right)}{b_{k-2,1}\left(e_{j+1}-\frac{e_{j+2}}{e_{j+1}}e_{j}\right)}=\frac{b_{k-2,1}^{q}\left(e_{j+3}-\frac{e_{j+4}}{e_{j+3}}e_{j+2}\right)}{b_{k-2,0}^{q+1}\left(e_{j+2}-\frac{e_{j+3}}{e_{j+2}}e_{j+1}\right)}. 
			$$
			And by Lemma \ref{L2.12 }, we know that $\mathrm{rank}(A_{2\times 2})=1$ if and only if $\varGamma=0$. 
			Thus 
			$$
			\mathrm{rank}\left( \boldsymbol{GG}^\dagger \right) =\begin{array}{c}
				\begin{cases}
					q+1,& \text{if } \varGamma=0;	\\
					q+2,&\text{if }  \varGamma\neq 0.
				\end{cases}
			\end{array} 
			$$	
			
			Now by combining with the discussions of the above $4$ cases, we conclude that
			$$
			\mathrm{rank}\left( \boldsymbol{GG}^\dagger \right) =\begin{array}{c}
				\begin{cases}
					q, & \text{if } i=q\\
					&\quad \text{or } i\neq q ,b_{k-2,1}=b_{k-2,0}=0 , b_{k-1,1}\neq 0 \text{ and }\\ 
					&\quad  2b_{k-1,1}\left(1+\gamma^{1-q}\right)\left(1+b_{k-1,1}^{q-1}\gamma^{(q-1)(i-1)}\right)=-b_{k-1,0}^{q+1}\left(1+\gamma^{i(q-1)}\right)\left(1+\gamma^{2(1-q)}\right);\\
					q+2,&\text{if } i\neq q  \text{ and  } b_{k-2,0}=0, b_{k-2,1}\neq 0\\
					&\quad \text{or } i\neq q \text{ and  } b_{k-1,0}=b_{k-1,1}=0, b_{k-2,0}, b_{k-2,1}\neq 0\\ 
					&\quad \text{or } i\neq q, b_{k-2,1}=0, b_{k-2,0},b_{k-1,1}\neq 0 \text{ and  } b_{k-1,1}^{q-1}\gamma^{(i-1)(q-1)}\neq -1\\
					&\quad \text{or } i\neq q , b_{k-2,1},b_{k-2,0}\neq 0,\left(b_{k-1,0},b_{k-1,1}\right)\neq (0,0) \text{ and  } \varGamma\neq 0;\\
					q+1, &\text{otherwise}.
				\end{cases}
			\end{array}  
			$$ 
			
			And so, by Lemma \ref{L2.16 }, we complete the proof of Theorem \ref{T3.1}. 
			
			\quad
			
			\textbf{The proofs of Theorems \ref{T3.2}-\ref{T3.3}}
			
			By $\gcd(i,q+1)=h>1$ and Proposition \ref{P3.3}, we only need to compute  $\mathrm{rank}\left(\boldsymbol{A}_{7\times 7}\right)$.

				
		
		\textbf{The proof of Theorem \ref{T3.2} } 
		
		By $2\mid i$ and Lemma \ref{L2.5 },  we have $e_{j+2}=0$. Then by Lemma \ref{L2.4 }, we have $e_{j+1}\in \mathbb{F}_{q^2}^{*}$ and $e_{j+3}\in \mathbb{F}_{q^2}^{*}$. Thus,  
		
		\begin{equation}
			\label{M3.21}
			\boldsymbol{A}_{7\times 7}=\left(\begin{matrix}
				&		&		&		&		e_{j+4}&		0	&		0	\\
				&		&		&		e_{j+3}&		&	0&		0	\\
				&		&		0&		&		&			b_{k-2,0}^{q}e_{j+1}&		b_{k-1,0}^{q}e_{j+1}	\\
				&		e_{j+1}&		&		&		&	0	&		e_j	\\
				e_j&		&		&		&		&	e_{j-1}	&	0	\\
				0	&	0&		b_{k-2,0}e_{j+3}&	0	&	e_{j+5}	&	0	&	b_{k-2,1}e_{j+1}	\\
				0	&		0&		b_{k-1,0}e_{j+3}&		e_{j+4}&0	&	b_{k-2,1}^{q}e_{j+3}	&	b_{k-1,1}e_{j+1}+b_{k-1,1}^{q}e_{j+3}	\\
			\end{matrix}\right)\ .
		\end{equation}
		Next, depending on  $i=\frac{q+1}{2}$ or not, we  divide the following 2 cases to determine the value of $\mathrm{rank}\left(\boldsymbol{A}_{7\times 7}\right)$. 
		
		\textbf{case 1.} If $i=\frac{q+1}{2}$, then by Lemma  \ref{L2.6 1} and Lemma \ref{L2.4 }, we have $e_{j+4}=e_{j}=0$, $e_{j-1}\in \mathbb{F}_{q^2}^{*}$ and  $e_{j+5}\in \mathbb{F}_{q^2}^{*}$. 
		Thus,   
		
		\begin{equation}
			\label{A11}
			\boldsymbol{A}_{7\times 7}=\left(\begin{matrix}
				&		&		&		&		0&		0	&		0	\\
				&		&		&		e_{j+3}&		&	0&		0	\\
				&		&		0&		&		&			b_{k-2,0}^{q}e_{j+1}&		b_{k-1,0}^{q}e_{j+1}	\\
				&		e_{j+1}&		&		&		&	0	&		0\\
				0&		&		&		&		&	e_{j-1}	&	0	\\
				0	&	0&		b_{k-2,0}e_{j+3}&	0	&	e_{j+5}	&	0&	b_{k-2,1}e_{j+1}	\\
				0	&		0&		b_{k-1,0}e_{j+3}&		0&0	&	b_{k-2,1}^{q}e_{j+3}	&	b_{k-1,1}e_{j+1}+b_{k-1,1}^{q}e_{j+3}\\
			\end{matrix}\right)\ . 
		\end{equation}
		Now, for convenience, we denote the vector $\boldsymbol{a}_{u_2,6}$  be the $u_2$-th row of the matrix  $\boldsymbol{A}_{7\times 7}$ given by the Equation \eqref{A11}, where $1 \le u_2\le 7$. Next, for the matrix $\boldsymbol{A}_{7\times 7}$ given by  \eqref{A11}, we perform the following $2$ steps of elementary transformations.

		\textbf{Step 1.} Replace $\boldsymbol{a}_{3,6}$ with $\boldsymbol{a}_{3,6}-\frac{b_{k-2,0}^{q}e_{j+1}}{e_{j-1}}\boldsymbol{a}_{5,6}$, and replace $\boldsymbol{a}_{7,6}$ with $\boldsymbol{a}_{7,6}-\frac{b_{k-2,1}^{q}e_{j+3}}{e_{j-1}}\boldsymbol{a}_{5,6}$,  we obtain the
		matrix $\boldsymbol{A}_{7\times 7}^{(1)}$ and denote the vector $\boldsymbol{b}_{6,v_{2}}^{(1)}$ be the $v_{2}$-th column of $\boldsymbol{A}_{7\times 7}^{(1)}$, where $1\le v_{2}\le 7$; 
		
		\textbf{Step 2.} Replace $\boldsymbol{b}_{6,3}^{(1)}$ with $\boldsymbol{b}_{6,3}^{(1)}-\frac{b_{k-2,0}e_{j+3}}{e_{j+5}}\boldsymbol{b}_{6,5}^{(1)}$, and replace $\boldsymbol{b}_{6,7}^{(1)}$ with $\boldsymbol{b}_{6,7}-\frac{b_{k-2,1}e_{j+1}}{e_{j+5}}\boldsymbol{b}_{6,5}^{(1)}$, we obtain the matrix 
		\begin{equation}
			\boldsymbol{A}_{7\times 7}^1=\left(\begin{matrix}
				&		&		&		&		0&		0	&		0	\\
				&		&		&		e_{j+3}&		&	0&		0	\\
				&		&		0&		&		&			0&		b_{k-1,0}^{q}e_{j+1}	\\
				&		e_{j+1}&		&		&		&	0	&		0\\
				0&		&		&		&		&	e_{j-1}	&	0	\\
				0	&	0&	0&	0	&	e_{j+5}	&	0& 0	\\
				0	&		0&		b_{k-1,0}e_{j+3}&		0&0	&	0	&	b_{k-1,1}e_{j+1}+b_{k-1,1}^{q}e_{j+3}\\
			\end{matrix}\right)\ , 
		\end{equation}
		and  $\mathrm{rank}\left(\boldsymbol{A}_{7\times 7}\right)=\mathrm{rank}\left(\boldsymbol{A}_{7\times 7}^{1}\right)$. Next, we divide the following $2$ cases basing on $b_{k-1,0}=0$ or not. 
		
		(1) If $b_{k-1,0}=0$, then we have
		
		\begin{equation}
			\boldsymbol{A}_{7\times 7}^{1}=\left(\begin{matrix}
				&		&		&		&		0&		0	&		0	\\
				&		&		&		e_{j+3}&		&	0&		0	\\
				&		&		0&		&		&			0&		0	\\
				&		e_{j+1}&		&		&		&	0	&		0\\
				0&		&		&		&		&	e_{j-1}	&	0	\\
				0	&	0&	0&	0	&	e_{j+5}	&	0& 0	\\
				0	&		0&		0&		0&0	&	0	&	b_{k-1,1}e_{j+1}+b_{k-1,1}^{q}e_{j+3}\\
			\end{matrix}\right)\ . 
		\end{equation}
		Thus,   $\mathrm{rank}\left(\boldsymbol{A}_{7\times 7}\right)=\mathrm{rank}\left(\boldsymbol{A}_{7\times 7}^{1}\right)=4$ if and only if
		$b_{k-1,1}e_{j+1}+b_{k-1,1}^qe_{j+3}=0$. And by Lemma \ref{L2.13 },  we know that  $\mathrm{rank}\left(\boldsymbol{A}_{7\times 7}\right)=4$ if and only if $b_{k-1,1}=0$ or $b_{k-1,1}^{q-1}\gamma^{(i-1)(q-1)}=1$. 
		Thus 
		
		$$\mathrm{rank}\left( \boldsymbol{GG}^\dagger \right)= \begin{array}{c}
			\begin{cases}
				q-1, &\text{if } b_{k-1,1}=0  \text{ or }  b_{k-1,1}^{q-1}\gamma^{(i-1)(q-1)}=1;\\
				q,& \text{if }  b_{k-1,1}\neq 0 \text{ and }  b_{k-1,1}^{q-1}\gamma^{(i-1)(q-1)}\neq 1.\\ 
			\end{cases}
		\end{array}  $$
		
		(2) If $b_{k-1,0}\neq 0$, then  $\mathrm{rank}(\boldsymbol{A}_{7\times 7}^{1})=6$, i.e. $\mathrm{rank}\left( \boldsymbol{GG}^\dagger \right)=q+1$. 
		
		And so, by combining with (1) and (2), we know that for $i=\frac{q+1}{2}$, 
		
		$$\mathrm{rank}\left( \boldsymbol{GG}^\dagger \right)= \begin{array}{c}
			\begin{cases}
				q-1, &\text{if }  b_{k-1,0}=b_{k-1,1}= 0\\
				&\quad \text{or }  b_{k-1,0}= 0 \text{ and }   b_{k-1,1}^{q-1}\gamma^{(i-1)(q-1)}=1;\\
				q,&\text{if }  b_{k-1,0}=0 , b_{k-1,1}\neq 0 \text{ and }  b_{k-1,1}^{q-1}\gamma^{(i-1)(q-1)}\neq 1;\\ 
				q+1,& \text{if }  b_{k-1,0}\neq 0.
			\end{cases}
		\end{array} $$
		
		\textbf{case 2. } If $i\neq\frac{q+1}{2}$, then by Lemma  \ref{L2.6 1}, we have  $e_{j}\in \mathbb{F}_{q^2}^{*}$ and $e_{j+4}\in \mathbb{F}_{q^2}^{*}$. 
		
		For convenience, we denote the vectors $\boldsymbol{a}_{u_2,7}$ and $\boldsymbol{b}_{7,v_2}$ be the $u_2$-th row and the $v_2$-th column of the matrix $\boldsymbol{A}_{7\times 7}$ given by the Equation \eqref{M3.21}, respectively, where $1 \le u_2,v_2\le 7$. Next, for the matrix $\boldsymbol{A}_{7\times 7}$ given by  \eqref{M3.21}, we perform the following $2$ steps of elementary transformations.
		
		\textbf{Step 1.} Replace $\boldsymbol{a}_{6,7}$ with $\boldsymbol{a}_{6,7}-\frac{e_{j+5}}{e_{j+4}}\boldsymbol{a}_{1,7}$, and replace $\boldsymbol{a}_{7,7}$ with $\boldsymbol{a}_{7,7}-\frac{e_{j+4}}{e_{j+3}}\boldsymbol{a}_{2,7}$; 
		
		\textbf{Step 2.} Replace $\boldsymbol{b}_{7,6}$ with $\boldsymbol{b}_{7,6}-\frac{e_{j-1}}{e_{j}}\boldsymbol{b}_{7,1}$, and replace $\boldsymbol{a}_{7,7}$ with $\boldsymbol{b}_{7,7}-\frac{e_{j}}{e_{j+1}}\boldsymbol{b}_{7,2}$, we obtain the matrix 
		\begin{equation}
			\label{A2}
			\boldsymbol{A}_{7\times 7}^2=\left(\begin{matrix}
				&		&		&		&		e_{j+4}&		0	&		0	\\
				&		&		&		e_{j+3}&		&	0&		0	\\
				&		&		0&		&		&			b_{k-2,0}^{q}e_{j+1}&		b_{k-1,0}^{q}e_{j+1}	\\
				&		e_{j+1}&		&		&		&	0	&	0\\
				e_j&		&		&		&		&	0	&	0	\\
				0	&	0&		b_{k-2,0}e_{j+3}&	0	&	0	&	0	&	b_{k-2,1}e_{j+1}	\\
				0	&		0&		b_{k-1,0}e_{j+3}&		0&0	&	b_{k-2,1}^{q}e_{j+3}	&b_{k-1,1}e_{j+1}+b_{k-1,1}^{q}e_{j+3}	\\
			\end{matrix}\right)\ ,  
		\end{equation}
		and  $\mathrm{rank}\left(\boldsymbol{A}_{7\times 7}\right)=\mathrm{rank}\left(\boldsymbol{A}_{7\times 7}^{2}\right)$. 
		
		Next, we divide the following $5$ cases basing on the values of $b_{k-2,0}$, $b_{k-2,1}$, $b_{k-1,0}$ and $b_{k-1,1}$ are zero or not.  
		
		(1) If $b_{k-2,0}=b_{k-1,0}=b_{k-2,1}=0$, then we have 
		$$
		\boldsymbol{A}_{7\times 7}^{2}=\left(\begin{matrix}
			&		&		&		&		e_{j+4}&		0	&		0	\\
			&		&		&		e_{j+3}&		&	0&		0	\\
			&		&		0&		&		&			0&	0	\\
			&		e_{j+1}&		&		&		&	0	&	0\\
			e_j&		&		&		&		&	0	&	0	\\
			0	&	0&		0&	0	&	0	&	0	&	0	\\
			0	&		0&	0&		0&0	&	0	&b_{k-1,1}e_{j+1}+b_{k-1,1}^{q}e_{j+3}	\\
		\end{matrix}\right)\ . 
		$$
		And so,  $\mathrm{rank}\left(\boldsymbol{A}_{7\times 7}\right)=\mathrm{rank}\left(\boldsymbol{A}_{7\times 7}^{2}\right)=4$ if and only if  $b_{k-1,1}e_{j+1}+b_{k-1,1}^qe_{j+3}=0$. Furthermore,  by Lemma \ref{L2.13 }, we know that $\mathrm{rank}\left(\boldsymbol{A}_{7\times 7}\right)=4$ if and only if $b_{k-1,1}=0$ or  $b_{k-1,1}^{q-1}\gamma^{(i-1)(q-1)}=1$. 
		Thus
		
		$$
		\mathrm{rank}\left( \boldsymbol{GG}^\dagger \right) =\begin{array}{c}
			\begin{cases}
				q-1 , &\text{if }  b_{k-1,1}=0 \text{ or } b_{k-1,1}^{q-1}\gamma^{(i-1)(q-1)}=1;\\ 
				q ,& \text{if }  b_{k-1,1}\neq 0 \text{ and } b_{k-1,1}^{q-1}\gamma^{(i-1)(q-1)}\neq 1. 
			\end{cases}
		\end{array} 
		$$ 
		
		(2) If $b_{k-2,0}=b_{k-1,0}=0$ and $b_{k-2,1}\neq 0$, then we have  
		$$
		\boldsymbol{A}_{7\times 7}^{2}=\left(\begin{matrix}
			&		&		&		&		e_{j+4}&		0	&		0	\\
			&		&		&		e_{j+3}&		&	0&		0	\\
			&		&		0&		&		&			0&		0	\\
			&		e_{j+1}&		&		&		&	0	&	0\\
			e_j&		&		&		&		&	0	&	0	\\
			0	&	0&		0&	0	&	0	&	0	&	b_{k-2,1}e_{j+1}	\\
			0	&		0&		0&		0&0	&	b_{k-2,1}^{q}e_{j+3}	&b_{k-1,1}e_{j+1}+b_{k-1,1}^{q}e_{j+3}	\\
		\end{matrix}\right)\ . 
		$$
		And so,  $\mathrm{rank}(\boldsymbol{A}_{7\times 7}^{2})=6$, i.e., $\mathrm{rank}\left( \boldsymbol{GG}^\dagger \right)=q+1$. 
		
		(3) If $b_{k-2,0}=b_{k-2,1}=0$ and $b_{k-1,0}\neq 0$, then we have $$
		\boldsymbol{A}_{7\times 7}^{2}=\left(\begin{matrix}
			&		&		&		&		e_{j+4}&		0	&		0	\\
			&		&		&		e_{j+3}&		&	0&		0	\\
			&		&		0&		&		&			0&		b_{k-1,0}^{q}e_{j+1}	\\
			&		e_{j+1}&		&		&		&	0	&	0\\
			e_j&		&		&		&		&	0	&	0	\\
			0	&	0&		0&	0	&	0	&	0	&	0	\\
			0	&		0&		b_{k-1,0}e_{j+3}&		0&0	&	0	&b_{k-1,1}e_{j+1}+b_{k-1,1}^{q}e_{j+3}	\\
		\end{matrix}\right)\ . 
		$$
		And so,  $\mathrm{rank}(\boldsymbol{A}_{7\times 7}^{2})=6$, i.e., $\mathrm{rank}\left( \boldsymbol{GG}^\dagger \right)=q+1$. 
		
		(4) If $b_{k-2,0}=0$, $b_{k-2,1}\neq 0$ and $b_{k-1,0}\neq 0$, then we have $$
		\boldsymbol{A}_{7\times 7}^{2}=\left(\begin{matrix}
			&		&		&		&		e_{j+4}&		0	&		0	\\
			&		&		&		e_{j+3}&		&	0&		0	\\
			&		&		0&		&		&			0&		b_{k-1,0}^{q}e_{j+1}	\\
			&		e_{j+1}&		&		&		&	0	&	0\\
			e_j&		&		&		&		&	0	&	0	\\
			0	&	0&		0&	0	&	0	&	0	&	b_{k-2,1}e_{j+1}	\\
			0	&		0&		b_{k-1,0}e_{j+3}&		0&0	&	b_{k-2,1}^{q}e_{j+3}	&b_{k-1,1}e_{j+1}+b_{k-1,1}^{q}e_{j+3}	\\
		\end{matrix}\right)\ . 
		$$
		And so,  $\mathrm{rank}(\boldsymbol{A}_{7\times 7}^{2})=6$, i.e., $\mathrm{rank}\left( \boldsymbol{GG}^\dagger \right)=q+1$.
		
		(5) If $b_{k-2,0}\neq 0$, then
		for convenience, we denote the vector $\boldsymbol{a}_{u_2,8}$  be the $u$-th row of the matrix $\boldsymbol{A}_{7\times 7}^{2}$ given by the Equation \eqref{A2}, where $1 \le u_2\le 7$. Next, for the matrix $\boldsymbol{A}_{7\times 7}^{2}$ given by  \eqref{A2}, we perform the following $2$ steps of elementary transformations.
		
		\textbf{Step 1.} Replace $\boldsymbol{a}_{7,8}$ with $\boldsymbol{a}_{7,8}-\frac{b_{k-1,0}}{b_{k-2,0}}\boldsymbol{a}_{6,8}$, and replace $\boldsymbol{a}_{7,8}$ with $\boldsymbol{a}_{7,8}-\left(\frac{b_{k-2,1}}{b_{k-2,0}}\right)^q\boldsymbol{a}_{3,8}$, we obtain the matrix $\boldsymbol{A}_2^{(1)}$, and denote the vector  $\boldsymbol{b}_{8,v_2}^{(1)}$ be  the $v_2$-th column of $\boldsymbol{A}_2^{(1)}$, where $1 \le v_2\le 7$.
		
		\textbf{Step 2.} Replace $\boldsymbol{b}_{8,7}^{(1)}$ with $\boldsymbol{b}_{8,7}^{(1)}-\frac{b_{k-2,1}}{b_{k-2,0}}\boldsymbol{b}_{8,3}^{(1)}-\left(\frac{b_{k-1,0}}{b_{k-2,0}}\right)^q\boldsymbol{b}_{8,6}^{(1)}$, we obtain the matrix 
		$$	\boldsymbol{A}_{2}^{(2)}=\left(\begin{matrix}
			&		&		&		&		e_{j+4}&		0	&		0	\\
			&		&		&		e_{j+3}&		&	0&		0	\\
			&		&		0&		&		&			b_{k-2,0}^{q}e_{j+1}&		0	\\
			&		e_{j+1}&		&		&		&	0	&	0\\
			e_j&		&		&		&		&	0	&	0	\\
			0	&	0&		b_{k-2,0}e_{j+3}&	0	&	0	&	0	&	0	\\
			0	&		0&		0&		0&0	&	0	&\left(b_{k-1,1}-\frac{b_{k-1,0}b_{k-2,1}}{b_{k-2,0}}\right)e_{j+1}+\left(b_{k-1,1}^q-\frac{b_{k-1,0}^qb_{k-2,1}^q}{b_{k-2,0}^q}\right)e_{j+3}	\\
		\end{matrix}\right)\ . $$
		Hence,   $\mathrm{rank}\left(\boldsymbol{A}_{7\times 7}^{2}\right) =\mathrm{rank}\left(\boldsymbol{A}_{2}^{(2)}\right)=6$ if and only if  $$\left(b_{k-1,1}-\frac{b_{k-1,0}b_{k-2,1}}{b_{k-2,0}}\right)e_{j+1}+\left(b_{k-1,1}^q-\frac{b_{k-1,0}^qb_{k-2,1}^q}{b_{k-2,0}^q}\right)e_{j+3}=0.$$ 
		Now by Lemma \ref{L 1}, we know that $\mathrm{rank}\left(\boldsymbol{A}_{7\times 7}^{2}\right)=\mathrm{rank}\left(\boldsymbol{A}_{2}^{(2)}\right)=6$ if and only if  $b_{k-2,1}b_{k-1,0}=b_{k-1,1}b_{k-2,0}$ or  $$\left(b_{k-1,1}-\frac{b_{k-2,1}b_{k-1,0}}{b_{k-2,0}}\right)^{q-1}\gamma^{(i-1)(q-1)}=1.$$ 
		Thus
		$$
		\mathrm{rank}\left( \boldsymbol{GG}^\dagger \right) =\begin{array}{c}
			\begin{cases}
				q+1 , &\text{if } b_{k-2,1}b_{k-1,0}=b_{k-1,1}b_{k-2,0} \text{ or }  \left(b_{k-1,1}-\frac{b_{k-2,1}b_{k-1,0}}{b_{k-2,0}}\right)^{q-1}\gamma^{(i-1)(q-1)}=1;\\ 
				q +2, & \text{if }  b_{k-2,1}b_{k-1,0}\neq b_{k-1,1}b_{k-2,0}  \text{ and } \left(b_{k-1,1}-\frac{b_{k-2,1}b_{k-1,0}}{b_{k-2,0}}\right)^{q-1}\gamma^{(i-1)(q-1)}\neq 1. 
			\end{cases}
		\end{array}  
		$$ 
				And so, for $i\neq\frac{q+1}{2}$, we have 
				
				$$
				\mathrm{rank}\left( \boldsymbol{GG}^\dagger \right) =\begin{array}{c}
					\begin{cases}
						q-1,  &\text{if }   b_{k-2,0}=b_{k-2,1}=b_{k-1,0}=b_{k-1,1}=0\\
						&\quad \text{or }  b_{k-2,0}=b_{k-2,1}=b_{k-1,0}=0  \text{ and }   b_{k-1,1}^{q-1}\gamma^{(i-1)(q-1)}=1;\\ 
						q, & \text{if }  b_{k-2,0}=b_{k-2,1}=b_{k-1,0}=0, b_{k-1,1}\neq 0 \text{ and }  b_{k-1,1}^{q-1}\gamma^{(i-1)(q-1)}\neq 1;\\ 
						q+2, &\text{if }  b_{k-2,0}\neq 0 ,  b_{k-2,1}b_{k-1,0}\neq b_{k-1,1}b_{k-2,0} \text{ and }  \left(b_{k-1,1}-\frac{b_{k-2,1}b_{k-1,0}}{b_{k-2,0}}\right)^{q-1}\gamma^{(i-1)(q-1)}\neq 1;\\
						q+1,& \text{otherwise}. 						
					\end{cases}
				\end{array} 
				$$
				
				Now by combining with the discussions of the above $2$ cases, we conclude that
				
				$$
				\mathrm{rank}\left( \boldsymbol{GG}^\dagger \right) = \begin{array}{c}
					\begin{cases}
						q-1, & \text{if }  i=\frac{q+1}{2}, b_{k-1,0}=b_{k-1,1}=0 \\
						&\quad \text{or }  i=\frac{q+1}{2},  b_{k-1,0}= 0 \text{ and }  b_{k-1,1}^{q-1}\gamma^{(i-1)(q-1)}=1\\
						&\quad \text{or } i\neq \frac{q+1}{2}, b_{k-2,0}=b_{k-2,1}=b_{k-1,0}=b_{k-1,1}=0\\ 
						&\quad \text{or }  i\neq \frac{q+1}{2} ,  b_{k-2,0}=b_{k-2,1}=b_{k-1,0}=0 \text{ and }  b_{k-1,1}^{q-1}\gamma^{(i-1)(q-1)}=1;\\ 
						q, & \text{if } i=\frac{q+1}{2},  b_{k-1,0}=0 , b_{k-1,1}\neq 0 \text{ and }  b_{k-1,1}^{q-1}\gamma^{(i-1)(q-1)}\neq 1\\
						&\quad \text{or }   i\neq \frac{q+1}{2} ,  b_{k-2,0}=b_{k-2,1}=b_{k-1,0}=0 , b_{k-1,1}\neq 0  \text{ and }  b_{k-1,1}^{q-1}\gamma^{(i-1)(q-1)}\neq 1;\\ 
						q+2, &\text{if }  i\neq \frac{q+1}{2} ,  b_{k-2,0}\neq 0,  b_{k-2,1}b_{k-1,0}\neq b_{k-1,1}b_{k-2,0}  \text{ and } \left(b_{k-1,1}-\frac{b_{k-2,1}b_{k-1,0}}{b_{k-2,0}}\right)^{q-1}\gamma^{(i-1)(q-1)}\neq 1;\\
						q+1,&\text{otherwise}. 					
					\end{cases}
				\end{array} 
				$$
				
				And so, by Lemma \ref{L2.16 }, we complete the proof of Theorem \ref{T3.2}. 
				
				\quad
				
				\textbf{The proof of Theorem \ref{T3.3}  } 
				
				By $2\nmid i$ and Lemma \ref{L2.5 }, we have $e_{j+2}\neq 0$. Next, depending on the relation between $i$ and $q+1$, we  divide the following $3$ cases to determine the value of $\mathrm{rank}\left(\boldsymbol{A}_{7\times 7}\right)$. 
				
				\textbf{case 1.} If $i=\frac{q+1}{2}$, then  by Lemma \ref{L2.6 }, we have $e_{j+1}=e_{j+3}=0$, and by Lemmas \ref{L2.3 }-\ref{L2.4 }, we can get   $e_{j-1}=e_{j+5}=0$, $e_{j}\in \mathbb{F}_{q^2}^{*}$ and $e_{j+4}\in \mathbb{F}_{q^2}^{*}$. And so, we have 
				
				\begin{equation}
					\label{M3.26}
					\boldsymbol{A}_{7\times 7}=\left(\begin{matrix}
						&		&		&		&		e_{j+4}&		0	&		0	\\
						&		&		&		0&		&	b_{k-2,1}^{q}e_{j+2}&		b_{k-1,1}^{q}e_{j+2}	\\
						&		&		e_{j+2}&		&		&			0&		0	\\
						&		0&		&		&		&	0	&		e_j	\\
						e_j&		&		&		&		&	0	&	0	\\
						0	&	b_{k-2,1}e_{j+2}&		&	0	&	0	&	b_{k-2,0}^{q+1}e_{j+2}&	b_{k-2,0}b_{k-1,0}^{q}e_{j+2}	\\
						0	&		b_{k-1,1}e_{j+2}&	0&		e_{j+4}&0	&b_{k-2,0}^{q}b_{k-1,0}e_{j+2}&	b_{k-1,0}^{q+1}e_{j+2}	\\
					\end{matrix}\right)\ . 
				\end{equation}			
				Now, for convenience, we denote the vectors $\boldsymbol{a}_{u_2,9}$ and $\boldsymbol{b}_{9,v_2}$ be the $u_2$-th row and the $v_2$-th column of the matrix  $\boldsymbol{A}_{7\times 7}$ given by the Equation \eqref{M3.26}, respectively, where $1 \le u_2,v_2\le 7$. Next, for the matrix  $\boldsymbol{A}_{7\times 7}$ given by  \eqref{M3.26}, we perform the following 2 steps of elementary transformations.

				\textbf{Step 1.} Replace $\boldsymbol{a}_{2,9}$ with $\boldsymbol{a}_{2,9}-b_{k-1,1}^{q}\frac{e_{j+2}}{e_{j}}\boldsymbol{a}_{4,9}$, replace $\boldsymbol{a}_{6,9}$ with $\boldsymbol{a}_{6,9}-b_{k-2,0}b_{k-1,0}^{q}\frac{e_{j+2}}{e_{j}}\boldsymbol{a}_{4,9}$, and replace $\boldsymbol{a}_{7,9}$ with $\boldsymbol{a}_{7,9}-b_{k-1,0}^{q+1}\frac{e_{j+2}}{e_{j}}\boldsymbol{a}_{4,9}$; 
				
				\textbf{Step 2.} Replace $\boldsymbol{b}_{9,2}$ with $\boldsymbol{b}_{9,2}-b_{k-1,1}\frac{e_{j+2}}{e_{j+4}}\boldsymbol{b}_{9,4}$, and replace $\boldsymbol{b}_{9,6}$ with $\boldsymbol{b}_{9,6}-b_{k-2,0}^{q}b_{k-1,0}\frac{e_{j+2}}{e_{j+4}}\boldsymbol{b}_{9,4}$, we obtain the matrix  
				$$
				\boldsymbol{A}_{7\times 7}^3=\left(\begin{matrix}
					&		&		&		&		e_{j+4}&		0	&		0	\\
					&		&		&		0&		&	b_{k-2,1}^{q}e_{j+2}&		0	\\
					&		&		e_{j+2}&		&		&			0&		0	\\
					&		0&		&		&		&	0	&		e_j	\\
					e_j&		&		&		&		&	0	&	0	\\
					0	&	b_{k-2,1}e_{j+2}&		&	0	&	0	&	b_{k-2,0}^{q+1}e_{j+2}& 0\\
					0	&	0&	0&		e_{j+4}&0	&0 & 0	\\
				\end{matrix}\right)\ . 
				$$				
				Thus, it is easy to get 			
				$$\mathrm{rank}\left( \boldsymbol{GG}^\dagger \right)=q-5+\mathrm{rank}(\boldsymbol{A}_{7\times 7})=q-5+\mathrm{rank}(\boldsymbol{A}_{7\times 7}^3)=\begin{array}{c}
					\begin{cases}
						q,& \text{if } b_{k-2,1}=b_{k-2,0}=0;\\
						q+1,& \text{if } b_{k-2,1}=0 \text{ and }  b_{k-2,0}\neq 0;\\ 
						q+2,& \text{if }  b_{k-2,1}\neq 0.
					\end{cases}
				\end{array}  $$
				
				\textbf{case 2.} If $i=\frac{q+1}{4}$ or $\frac{3(q+1)}{4}$, then by Lemma \ref{L2.7 }, we have $e_{j}=e_{j+4}=0$, and by Lemma \ref{L2.4 }, $e_{j-1}$, $e_{j+1}$, $e_{j+3}$ and $e_{j+5}$ are all non-zero. Hence,  	
				\begin{equation}
					\label{M3.27}
					\boldsymbol{A}_{7\times 7}=\left(\begin{matrix}
						&		&		&		&		0&		0	&		0	\\
						&		&		&		e_{j+3}&		&	b_{k-2,1}^{q}e_{j+2}&		b_{k-1,1}^{q}e_{j+2}	\\
						&		&		e_{j+2}&		&		&			b_{k-2,0}^{q}e_{j+1}&		b_{k-1,0}^{q}e_{j+1}	\\
						&		e_{j+1}&		&		&		&	0	&		0	\\
						0&		&		&		&		&	e_{j-1}	&	0	\\
						0	&	b_{k-2,1}e_{j+2}&		b_{k-2,0}e_{j+3}&	0	&	e_{j+5}	&	b_{k-2,0}^{q+1}e_{j+2}	&	b_{k-2,1}e_{j+1}+b_{k-2,0}b_{k-1,0}^{q}e_{j+2}	\\
						0	&		b_{k-1,1}e_{j+2}&		b_{k-1,0}e_{j+3}&		0 &0	&	b_{k-2,0}^{q}b_{k-1,0}e_{j+2}+b_{k-2,1}^{q}e_{j+3}	&	b_{k-1,1}e_{j+1}+b_{k-1,0}^{q+1}e_{j+2}+b_{k-1,1}^{q}e_{j+3}	\\
					\end{matrix}\right)\ . 
				\end{equation}
				Now, for convenience, we denote the vectors $\boldsymbol{a}_{u_2,10}$ and $\boldsymbol{b}_{10,v_2}$ be the $u_2$-th row and the $v_2$-th column of the matrix  $\boldsymbol{A}_{7\times 7}$ given by the Equation \eqref{M3.27}, respectively, where $1 \le u_2,v_2\le 7$. Next, for the matrix  $\boldsymbol{A}_{7\times 7}$ given by  \eqref{M3.27}, we perform the following $4$ steps of elementary transformations. 
				
				\textbf{Step 1.} Replace $\boldsymbol{a}_{2,10}$ with $\boldsymbol{a}_{2,10}-b_{k-2,1}^{q}\frac{e_{j+2}}{e_{j-1}}\boldsymbol{a}_{5,10}$, and replace $\boldsymbol{a}_{3,10}$ with $\boldsymbol{a}_{3,10}-b_{k-2,0}^{q}\frac{e_{j+1}}{e_{j-1}}\boldsymbol{a}_{5,10}$; 
				
				\textbf{Step 2.} Replace $\boldsymbol{a}_{6,10}$ with $\boldsymbol{a}_{6,10}-b_{k-2,1}\frac{e_{j+2}}{e_{j+1}}\boldsymbol{a}_{4,10}-b_{k-2,0}^{q+1}\frac{e_{j+2}}{e_{j-1}}\boldsymbol{a}_{5,10}$, and replace $\boldsymbol{a}_{7,10}$ with $\boldsymbol{a}_{7,10}-b_{k-1,1}\frac{e_{j+2}}{e_{j+1}}\boldsymbol{a}_{4,10}-\left(b_{k-2,0}^{q}b_{k-1,0}\frac{e_{j+2}}{e_{j-1}}+b_{k-2,1}^q\frac{e_{j+3}}{e_{j-1}}\right)\boldsymbol{a}_{5,10}$;
				
				\textbf{Step 3.} Replace $\boldsymbol{b}_{10,7}$ with $\boldsymbol{b}_{10,7}-b_{k-1,1}^{q}\frac{e_{j+2}}{e_{j+3}}\boldsymbol{b}_{10,4}-\left(b_{k-2,1}\frac{e_{j+1}}{e_{j+5}}+b_{k-2,0}b_{k-1,0}^q\frac{e_{j+2}}{e_{j+5}}\right)\boldsymbol{b}_{10,5}$, and replace $\boldsymbol{b}_{10,3}$ with $\boldsymbol{b}_{10,3}-b_{k-2,0}\frac{e_{j+3}}{e_{j+5}}\boldsymbol{b}_{10,5}$, we obtain the matrix $\boldsymbol{A}_{3}^{(1)}$, and denote the vector $\boldsymbol{b}_{10,v_2}^{(1)}$ be the $v_2$-th column of $\boldsymbol{A}_{3}^{(1)}$, where $1 \le v_2\le 7$;
				
				\textbf{Step 4.} Replace $\boldsymbol{b}_{10,7}^{(1)}$ with $\boldsymbol{b}_{10,7}^{(1)}-b_{k-1,0}^{q+1}\frac{e_{j+1}}{e_{j+2}}\boldsymbol{b}_{10,3}^{(1)}$, we obtain the matrix 
				$$
				\boldsymbol{A}_{7\times 7}^4=\left(\begin{matrix}
					&		&		&		&		0&		0	&		0	\\
					&		&		&		e_{j+3}&		&  0&	0	\\
					&		&		e_{j+2}&		&		&		0&		0	\\
					&		e_{j+1}&		&		&		&	0	&	0	\\
					0&		&		&		&		&	e_{j-1}	&	0	\\
					0	&	0&		0&	0	&	e_{j+5}	&	0	&	0	\\
					0	&	0&		b_{k-1,0}e_{j+3}&		0&0	&	0	&	b_{k-1,1}e_{j+1}+b_{k-1,0}^{q+1}\left(e_{j+2}-\frac{e_{j+1}}{e_{j+2}}e_{j+3}\right)+b_{k-1,1}^{q}e_{j+3} 	\\
				\end{matrix}\right)\ . 
				$$			
				Hence, $\mathrm{rank}\left(
				\boldsymbol{A}_{7\times 7}\right)=\mathrm{rank}\left(
				\boldsymbol{A}_{7\times 7}^{4}\right)=5$ if and only if 	$$b_{k-1,1}e_{j+1}+b_{k-1,0}^{q+1}\left(e_{j+2}-\frac{e_{j+1}}{e_{j+2}}e_{j+3}\right)+b_{k-1,1}^{q}e_{j+3}=0.$$ 
				Now by Lemma \ref{L2.14 },  we know that $\mathrm{rank}\left(
				\boldsymbol{A}_{7\times 7}\right)=\mathrm{rank}\left(
				\boldsymbol{A}_{7\times 7}^{4}\right)=5$ if and only if  $\varGamma_{1}=0$. 
				Thus 				
				$$\mathrm{rank}\left( \boldsymbol{GG}^\dagger \right)=
				\begin{array}{c}
					\begin{cases}
						q, & \text{if }  \varGamma_{1}=0;\\
						q+1,& \text{if } \varGamma_{1}\neq 0. 
					\end{cases}
				\end{array}  $$
				
				\textbf{case 3.} If $\frac{q+1}{2}\nmid 2i$, then by Lemma \ref{L2.8 }, we know that $e_{j}$, $e_{j+1}$, $e_{j+3}$ and $e_{j+4}$ are all non-zero. 
				
				Now, for convenience, we denote the vector $\boldsymbol{a}_{u_2,11}$ be the $u_2$-th row of the matrix $\boldsymbol{A}_{7\times 7}$ given by the Equation \eqref{M3.20}, where $1 \le u_2\le 7$. Next, for the matrix $\boldsymbol{A}_{7\times 7}$ given by  \eqref{M3.20}, we perform the following $2$ steps of elementary transformations.	
				
				\textbf{Step 1.} Replace $\boldsymbol{a}_{6,11}$ with $\boldsymbol{a}_{6,11}-\frac{e_{j+5}}{e_{j+4}}\boldsymbol{a}_{1,11}-b_{k-2,0}\frac{e_{j+3}}{e_{j+2}}\boldsymbol{a}_{3,11}-b_{k-2,0}\frac{e_{j+2}}{e_{j+1}}\boldsymbol{a}_{4,11}$, and replace $\boldsymbol{a}_{7,11}$ with $\boldsymbol{a}_{7,11}-\frac{e_{j+4}}{e_{j+3}}\boldsymbol{a}_{2,11}-b_{k-1,0}\frac{e_{j+3}}{e_{j+2}}\boldsymbol{a}_{3,11}-b_{k-1,1}\frac{e_{j+2}}{e_{j+1}}\boldsymbol{a}_{4,11}$, we obtain the matrix $\boldsymbol{A}_{4}^{(1)}$, and denote the vector $\boldsymbol{b}_{11,v_2}^{(1)}$ be the $v_2$-th column of $\boldsymbol{A}_{4}^{(1)}$, where $1 \le v_2\le 7$;
				
				\textbf{Step 2.} Replace $\boldsymbol{b}_{11,6}^{(1)}$ with $\boldsymbol{b}_{11,6}^{(1)}-\frac{e_{j-1}}{e_{j}}\boldsymbol{b}_{11,1}^{(1)}-b_{k-2,0}^q\frac{e_{j+1}}{e_{j+2}}\boldsymbol{b}_{11,3}^{(1)}-b_{k-2,1}^q\frac{e_{j+2}}{e_{j+3}}\boldsymbol{b}_{11,4}^{(1)}$, and replace $\boldsymbol{b}_{11,7}^{(1)}$ with $\boldsymbol{b}_{11,7}^{(1)}-\frac{e_{j}}{e_{j+1}}\boldsymbol{b}_{11,2}^{(1)}-b_{k-1,0}^q\frac{e_{j+1}}{e_{j+2}}\boldsymbol{b}_{11,3}^{(1)}-b_{k-1,1}^q\frac{e_{j+2}}{e_{j+3}}\boldsymbol{b}_{11,4}^{(1)}$, we obtain the matrix 
				$$
				\boldsymbol{A}_{7\times 7}^5= \left(\begin{matrix}
					&		&		&		&		e_{j+4}&		0	&		0	\\
					&		&		&		e_{j+3}&		&0&		0	\\
					&		&		e_{j+2}&		&		&			0&	0	\\
					&		e_{j+1}&		&		&		&	0	&	0\\
					e_j&		&		&		&		&	0	&	0	\\
					0	&	0&		0&	0	&0	&	C_1&	C_2	\\
					0	&		0&		0&		0&0	&	C_3&	C_4	\\
				\end{matrix}\right)\ , 
				$$
				where 
				$$
				\begin{aligned}
					&C_1=b_{k-2,0}^{q+1}\left(e_{j+2}-\frac{e_{j+3}}{e_{j+2}}e_{j+1}\right),\\ &C_2=b_{k-2,1}\left(e_{j+1}-\frac{e_{j+2}}{e_{j+1}}e_{j}\right) +b_{k-2,0}b_{k-1,0}^{q}\left(e_{j+2}-\frac{e_{j+3}}{e_{j+2}}e_{j+1}\right), \\
					&C_3=b_{k-2,0}^{q}b_{k-1,0}\left(e_{j+2}-\frac{e_{j+3}}{e_{j+2}}e_{j+1}\right)+b_{k-2,1}^{q}\left(e_{j+3}-\frac{e_{j+4}}{e_{j+3}}e_{j+2}\right)\\
				\end{aligned}
				$$
				and
				$$C_4=b_{k-1,1}\left(e_{j+1}-\frac{e_{j+2}}{e_{j+1}}e_{j}\right)+b_{k-1,0}^{q+1}\left(e_{j+2}-\frac{e_{j+3}}{e_{j+2}}e_{j+1}\right)+b_{k-1,1}^{q}\left(e_{j+3}-\frac{e_{j+4}}{e_{j+3}}e_{j+2}\right).$$
				Thus, we have 
				
				$$\mathrm{rank}\left( \boldsymbol{GG}^\dagger \right)=q-5+\mathrm{rank}\left(\boldsymbol{A}_{7\times 7}\right)=q-5+\mathrm{rank}(\boldsymbol{A}_{7\times 7}^5)=q+\mathrm{rank}\left(\boldsymbol{A}_{2\times 2}\right). $$
				Now by the proof of Theorem \ref{T3.1}, we know that  	
				
				$$
				mathrm{rank}\left( \boldsymbol{GG}^\dagger \right) = \begin{array}{c}
					\begin{cases}
						q,  & \text{if }  b_{k-2,0}=b_{k-2,1}=b_{k-1,0}=b_{k-1,1}=0\\
						&\quad \text{or }  b_{k-2,1}=b_{k-2,0}=0 , b_{k-1,1}\neq 0 \text{ and }  \varGamma_{2}=0;\\
						q+2,& \text{if }  b_{k-2,0}=0 \text{ and }  b_{k-2,1}\neq 0\\
						&\quad \text{or } b_{k-1,0}=b_{k-1,1}=0 \text{ and }  b_{k-2,0}, b_{k-2,1}\neq 0\\ 
						&\quad  \text{or }  b_{k-2,1}=0, b_{k-2,0},b_{k-1,1}\neq 0 \text{ and }  b_{k-1,1}^{q-1}\gamma^{(i-1)(q-1)}\neq -1\\
						&\quad  \text{or } b_{k-2,1},b_{k-2,0}\neq 0,\left(b_{k-1,0},b_{k-1,1}\right)\neq (0,0) \text{ and }  \varGamma\neq 0; \\
						q+1, &\text{otherwise}.
					\end{cases}
				\end{array} 
				$$ 
				
				Thus, by combining with the discussions of the above $3$ cases, we conclude that	
				
				$$
				\mathrm{rank}\left( \boldsymbol{GG}^\dagger \right) =\begin{array}{c}
					\begin{cases}
						q, &\text{if }  i=\frac{q+1}{2}\quad and \quad  b_{k-2,1}=b_{k-2,0}=0\\
						&\quad \text{or }  i=\frac{q+1}{4} \text{ or } \frac{3(q+1)}{4} , b_{k-1,1}=b_{k-1,0}=0 \text{ and }  \left(b_{k-2,0},b_{k-2,1}\right)\neq (0,0)\\
						&\quad \text{or } i=\frac{q+1}{4} \text{ or } \frac{3(q+1)}{4} , \left(b_{k-1,0},b_{k-1,1}\right)\neq (0,0) \text{ and }  \varGamma_{1}=0 \\
						&\quad \text{or }\frac{q+1}{2}\nmid 2i \text{ and }  b_{k-2,0}=b_{k-2,1}=b_{k-1,0}=b_{k-1,1}=0\\
						&\quad \text{or } \frac{q+1}{2}\nmid 2i , b_{k-2,1}=b_{k-2,0}=0, b_{k-1,1}\neq 0 \text{ and }  \varGamma_{2}=0;\\	
						q+2, &\text{if } i = \frac{q+1}{2}  \text{ and }  b_{k-2,1}\neq 0\\
						&\quad \text{or } \frac{q+1}{2}\nmid 2i , b_{k-2,0}=0 \text{ and } b_{k-2,1}\neq 0\\
						&\quad \text{or } \frac{q+1}{2}\nmid 2i, b_{k-1,0}=b_{k-1,1}=0 \text{ and }  b_{k-2,0}, b_{k-2,1}\neq 0\\ 
						&\quad \text{or } \frac{q+1}{2}\nmid 2i, b_{k-2,1}=0, b_{k-2,0},b_{k-1,1}\neq 0 \text{ and }  b_{k-1,1}^{q-1}\gamma^{(i-1)(q-1)}\neq -1\\
						&\quad \text{or }\frac{q+1}{2}\nmid 2i,  b_{k-2,1},b_{k-2,0}\neq 0,\left(b_{k-1,0},b_{k-1,1}\right)\neq (0,0)  \text{ and }  \varGamma\neq 0;	\\
						q+1,& \text{otherwise}. 
					\end{cases}
				\end{array} 
				$$ 				
				
				And so, by Lemma \ref{L2.16 }, we complete the proof of Theorem \ref{T3.3}
				\newpage
				
				\section{Some examples}	 
				\label{section 4}
				
				In this section, we present the corresponding illustrative examples for each case of Theorems \ref{T3.1}-\ref{T3.3} given in Section \ref{section 3}I, which are also checked by the Magma program. For convenience, we adopt the same case-wise classifications in Table I as those used in the proofs of Theorems \ref{T3.1}-\ref{T3.3}.
				\begin{table}[H]
					\centering
					\renewcommand{\arraystretch}{1.5}
					\caption{Some examples for Theorem \ref{T3.1}}			
					\begin{tabular}{|c|c|c|c|c|c|c|c|c|c|c|c|}
						\hline
						\multirow{2}{*}{$q$} & \multirow{2}{*}{$q+1$} & \multirow{2}{*}{$i$}
						& \multirow{2}{*}{$b_{k-2,0}$} & \multirow{2}{*}{$b_{k-2,1}$}
						& \multirow{2}{*}{$b_{k-1,0}$} & \multirow{2}{*}{$b_{k-1,1}$}
						&\multicolumn{2}{c|}{$\operatorname{rank}(\boldsymbol{G}\boldsymbol{G}^\dagger)$} &\multicolumn{2}{c|}{$\operatorname{dim}\left(\mathrm{Hull}_H(\mathcal{C}_k(\boldmath\alpha))\right)$}& \multirow{2}{*}{Our results} \\
						\cline{8-11}
						& & & & & & &Magma output&theoretical value&Magma output&theoretical value& \\
						\hline
						\multirow{12}{*}{$9$} & \multirow{12}{*}{$10$}& \multirow{12}{*}{$3$} & $0$ & $0$ & $0$ & $0$ & $9$ & $9$ & $3$&$3$& case 2.1  \\
						\cline{4-12}
						& & &$1$ & $0$ & $0$ & $0$ & $10$ & $10$ &$2$&$2$& case 2.2 (1) \\
						\cline{4-12}
						&  &  & $0$ & $1$ & $0$ & $0$ & $11$ & $11$ &$1$&$1$& case 2.2 (2) \\
						\cline{4-12}
						&  &  & $0$ & $0$ & $1$ & $0$ & $10$ & $10$ &$2$&$2$& case 2.2 (3) \\
						\cline{4-12}
						&  &  & $0$ & $0$ & $0$ & $\gamma^{3}$ & $9$ & $9$ &$3$&$3$& case 2.2 (4) \\
						\cline{4-12}
						&  &  & $0$ & $0$ & $0$ & $1$ & $10$ & $10$ &$2$&$2$& case 2.2 (4) \\
						\cline{4-12}
						&  &  & $1$ & $0$ & $0$ & $\gamma^3$ & $10$ & $10$ &$2$&$2$& case 2.3 (4) \\
						\cline{4-12}
						&  &  & $1$ & $0$ & $0$ & $2$ & $11$ & $11$ &$1$&$1$& case 2.3 (4) \\
						\cline{4-12}
						&  &  & $1$ & $0$ & $2$ & $0$ & $10$ & $10$ &$2$&$2$& case 2.3 (5) \\
						\cline{4-12}
						&  &  & $1$ & $0$ & $2$ & $\gamma^3$ & $10$ & $10$ &$2$&$2$& case 2.4 (2) \\
						\cline{4-12}
						&  &  & $1$ & $0$ & $2$ & $1$ & $11$ & $11$ &$1$&$1$& case 2.4 (2) \\
						\hline
						\multirow{3}{*}{$25$} & \multirow{3}{*}{$26$}& \multirow{3}{*}{$9$} & $0$ & $1$ & $0$ & $2$ & $27$ & $27$ &$9$&$9$& case 2.3 (2) \\
						\cline{4-12}
						&  &  & $0$ & $1$ & $2$ & $0$ & $27$ & $27$ &$9$&$9$& case 2.3 (3) \\
						\cline{4-12}
						&  &  & $0$ & $1$ & $2$ & $3$ & $27$ & $27$ &$9$&$9$& case 2.4 (1) \\
						\hline
						\multirow{10}{*}{$27$} & \multirow{10}{*}{$28$}& \multirow{6}{*}{$3$} & $1$ & $2$ & $0$ & $1$ & $28$ & $28$ &$11$&$11$& case 2.4 (3) \\
						\cline{4-12}
						&  &  & $1$ & $2$ & $0$ & $2$ & $29$ & $29$ &$10$&$10$& case 2.4 (3) \\
						\cline{4-12}
						&  &  & $1$ & $2$ & $1$ & $0$ & $28$ & $28$ & $11$&$11$&case 2.4 (4) \\
						\cline{4-12}
						&  &  & $1$ & $2$ & $2$ & $0$ & $29$ & $29$ &$10$&$10$& case 2.4 (4) \\
						\cline{4-12}
						&  &  & $1$ & $1$ & $\gamma^3$ & $\gamma^{17}$ & $28$ & $28$ &$11$&$11$& case 2.5 \\
						\cline{4-12}
						&  &  & $1$ & $2$ & $1$ & $2$ & $29$ & $29$ &$10$&$10$& case 2.5 \\
						\cline{3-12}
						&  & \multirow{3}{*}{$9$} & $0$ & $0$ & $\gamma$ & $\gamma^3$ & $27$ & $27$ &$12$&$12$& case 2.3 (1) \\
						\cline{4-12}
						&  &  & $0$ & $0$ & $1$ & $2$ & $28$ & $28$ &$11$&$11$& case 2.3 (1) \\
						\cline{4-12}
						&  &  & $1$ & $2$ & $0$ & $0$ & $29$ & $29$ &$10$&$10$& case 2.3 (6) \\
						\cline{3-12}
						&  & $27$ & $1$ & $2$ & $1$ & $2$ & $27$ & $27$ &$12$&$12$& case 1 \\
						\hline
					\end{tabular}
				\end{table}  
				\begin{table}[H]
					\centering
					\renewcommand{\arraystretch}{1.5}
					\caption{Some examples for Theorem \ref{T3.2}}
					\begin{tabular}{|c|c|c|c|c|c|c|c|c|c|c|c|}
						\hline
						\multirow{2}{*}{$q$} & \multirow{2}{*}{$q+1$} & \multirow{2}{*}{$i$}
						& \multirow{2}{*}{$b_{k-2,0}$} & \multirow{2}{*}{$b_{k-2,1}$}
						& \multirow{2}{*}{$b_{k-1,0}$} & \multirow{2}{*}{$b_{k-1,1}$}
						& \multicolumn{2}{c|}{$\operatorname{rank}(\boldsymbol{G}\boldsymbol{G}^\dagger)$} &\multicolumn{2}{c|}{$\operatorname{dim}\left(\mathrm{Hull}_H(\mathcal{C}_k(\boldmath\alpha))\right)$}& \multirow{2}{*}{Our results} \\
						\cline{8-11}
						& & & & & & &Magma output&theoretical value& Magma output&theoretical value& \\
						\hline
						\multirow{11}{*}{$27$} & \multirow{11}{*}{$28$}& \multirow{8}{*}{$6$} & $0$ & $0$ & $0$ & $\gamma^{23}$ & $26$ & $26$ &$13$&$13$& case 2 (1) \\
						\cline{4-12}
						&  &  & $0$ & $0$ & $0$ & $2$ & $27$ & $27$ &$12$&$12$& case 2 (1) \\
						\cline{4-12}
						&  &  & $0$ & $1$ & $0$ & $2$ & $28$ & $28$ &$11$&$11$& case 2 (2) \\
						\cline{4-12}
						&  &  & $0$ & $0$ & $1$ & $2$ & $28$ & $28$ &$11$&$11$& case 2 (3) \\
						\cline{4-12}
						&  &  & $0$ & $1$ & $2$ & $1$ & $28$ & $28$ &$11$&$11$& case 2 (4) \\
						\cline{4-12}
						&  &  & $1$ & $2$ & $1$ & $2$ & $28$ & $28$ &$11$&$11$& case 2 (5) \\
						\cline{4-12}
						&  &  & $1$ & $1$ & $1$ & $\gamma^{11}$ & $28$ & $28$ &$11$&$11$& case 2 (5) \\
						\cline{4-12}
						&  &  & $1$ & $1$ & $1$ & $2$ & $29$ & $29$ &$10$&$10$& case 2 (5) \\
						\cline{3-12}
						& & \multirow{3}{*}{$14$} & $1$ & $2$ & $0$ & $\gamma^{15}$ & $26$ & $26$ &$13$&$13$& case 1 (1) \\
						\cline{4-12}
						&  &  & $1$ & $2$ & $0$ & $1$ & $27$ & $27$ &$12$&$12$& case 1 (1) \\
						\cline{4-12}
						&  &  & $1$ & $2$ & $1$ & $2$ & $28$ & $28$ &$11$&$11$& case 1 (2) \\
						\hline
						
					\end{tabular}
				\end{table}
				\begin{table}[H]
					\centering
					\renewcommand{\arraystretch}{1.5}
					\caption{Some examples for Theorem \ref{T3.3}}
					\label{tab:examples_T3.3}
					\begin{tabular}{|c|c|c|c|c|c|c|c|c|c|c|c|}
						\hline
						\multirow{2}{*}{$q$} & \multirow{2}{*}{$q+1$} & \multirow{2}{*}{$i$}
						& \multirow{2}{*}{$b_{k-2,0}$} & \multirow{2}{*}{$b_{k-2,1}$}
						& \multirow{2}{*}{$b_{k-1,0}$} & \multirow{2}{*}{$b_{k-1,1}$}
						&\multicolumn{2}{c|}{$\operatorname{rank}(\boldsymbol{G}\boldsymbol{G}^\dagger)$} &\multicolumn{2}{c|}{$\operatorname{dim}\left(\mathrm{Hull}_H(\mathcal{C}_k(\boldmath\alpha))\right)$}& \multirow{2}{*}{Our results} \\
						\cline{8-11}
						& & & & & & &Magma output&theoretical value&Magma output&theoretical value& \\
						\hline
						\multirow{3}{*}{$9$} 
						& \multirow{3}{*}{$10$} 
						& \multirow{3}{*}{$5$} 
						& $0$ & $0$ & $1$ & $2$ & $9$ & $9$ &$3$&$3$ &case 1  \\
						\cline{4-12}
						&  &  
						& $1$ & $0$ & $2$ & $1$ & $10$ & $10$ &$2$&$2$& case 1  \\
						\cline{4-12}
						&  &  
						& $2$ & $1$ & $1$ & $2$ & $11$ & $11$ &$1$&$1$& case 1  \\
						\hline
						\multirow{3}{*}{$27$} 
						& \multirow{3}{*}{$28$} 
						& \multirow{3}{*}{$7$} 
						& $1$ & $2$ & $0$ & $0$ & $27$ & $27$&$12$&$12$ & case 2  \\
						\cline{4-12}
						&  &  
						& $1$ & $2$ & $\gamma^{17}$ & $1$ & $27$ & $27$ &$12$&$12$ & case 2  \\
						\cline{4-12}
						&  &  
						& $1$ & $2$ & $1$ & $2$ & $28$ & $28$ &$11$&$11$ & case 2 \\
						\hline 
						\multirow{22}{*}{$49$} & \multirow{22}{*}{$50$}& \multirow{22}{*}{$15$} & $1$ & $0$ & $0$ & $0$ & $50$ & $50$ &$22$&$22$ & case 3  \\
						\cline{4-12}
						&  &  & $0$ & $1$ & $0$ & $0$ & $51$ & $51$ &$21$&$21$ & case 3  \\
						\cline{4-12}
						&  &  & $0$ & $0$ & $1$ & $0$ & $50$ & $50$ &$22$&$22$ & case 3  \\
						\cline{4-12}
						&  &  & $0$ & $0$ & $0$ & $\gamma^{11}$ & $49$ & $49$ &$23$&$23$ & case 3  \\
						\cline{4-12}
						&  &  & $0$ & $0$ & $0$ & $1$ & $50$ & $50$ &$22$&$22$ & case 3  \\
						\cline{4-12}
						&  &  & $0$ & $0$ & $\gamma^2$ & $\gamma^{4}$ & $49$ & $49$ &$23$&$23$ & case 3 \\
						\cline{4-12}
						&  &  & $0$ & $0$ & $1$ & $2$ & $50$ & $50$ &$22$&$22$ & case 3  \\
						\cline{4-12}
						&  &  & $0$ & $1$ & $0$ & $2$ & $51$ & $51$ &$21$&$21$ & case 3  \\
						\cline{4-12}
						&  &  & $0$ & $1$ & $2$ & $0$ & $51$ & $51$ &$21$&$21$ & case 3  \\
						\cline{4-12}
						&  &  & $1$ & $0$ & $0$ & $\gamma^{11}$ & $50$ & $50$ &$22$&$22$ & case 3  \\
						\cline{4-12}
						&  &  & $1$ & $0$ & $0$ & $2$ & $51$ & $51$ &$21$&$21$ & case 3 \\
						\cline{4-12}
						&  &  & $1$ & $0$ & $2$ & $0$ & $50$ & $50$ &$22$&$22$ & case 3  \\
						\cline{4-12}
						&  &  & $1$ & $2$ & $0$ & $0$ & $51$ & $51$ &$21$&$21$ &  case 3  \\
						\cline{4-12}
						&  &  & $0$ & $1$ & $2$ & $1$ & $51$ & $51$ &$21$&$21$ &  case 3  \\
						\cline{4-12}
						&  &  & $1$ & $0$ & $2$ & $\gamma^{11}$ & $50$ & $50$ &$22$&$22$ & case 3  \\
						\cline{4-12}
						&  &  & $1$ & $0$ & $2$ & $1$ & $51$ & $51$ &$21$&$21$ & case 3  \\
						\cline{4-12}
						&  &  & $1$ & $\gamma^3$ & $0$ & $\gamma^{17}$ & $50$ & $50$ &$22$&$22$ & case 3  \\
						\cline{4-12}
						&  &  & $1$ & $2$ & $0$ & $1$ & $51$ & $51$ &$21$&$21$ & case 3  \\
						\cline{4-12}
						&  &  & $1$ & $\gamma$ & $\gamma^{24}$ & $0$ & $50$ & $50$ &$22$&$22$ & case 3  \\
						\cline{4-12}
						&  &  & $1$ & $2$ & $1$ & $0$ & $51$ & $51$ &$21$&$21$ & case 3 \\
						\cline{4-12}
						&  &  & $1$ & $1$ & $1$ & $\gamma^{20}$ & $50$ & $50$ &$22$&$22$ & case 3 \\
						\cline{4-12}
						&  &  & $1$ & $2$ & $1$ & $1$ & $51$ & $51$ &$21$&$21$ & case 3 \\
						\hline
					\end{tabular}
				\end{table}

				\section{Conclusions}
				\label{section 5}
				In this paper, by taking a special class of the vector  $\boldsymbol{\alpha}$, we obtain the following three main results about a class of $\left(\mathcal{L},\mathcal{P}\right)$-TGRS codes $\mathcal{C}_{q+j}\left(\boldsymbol{\alpha}\right)$. 
				\begin{itemize}
					\item Determine the value of the matrix $\boldsymbol{GG}^\dagger$ (Proposition \ref{P3.1}).
					
					\item Completely determine the Hermitian hull dimension of $\mathcal{C}_{q+j}\left(\boldsymbol{\alpha}\right)$ for three cases(Theorems \ref{T3.1}-\ref{T3.3}). 
					
					\item Construct two classes of EAQECCs.
				\end{itemize}

				\newpage
				\appendices
				\section{Proofs of Several Lemmas}
				\label{appendex A}
				
				\textbf{The proof of Lemma \ref{L2.6 1} } 
				
				By $2\mid i$ and Lemma \ref{L2.5 }, we have $e_{j+2}=0$. 
				
				If $e_{j}=e_{j+2}=0$, then by Lemma \ref{L2.3 }, we have $k_{j+2}\in \mathbb{Z}$ and $k_{j}\in \mathbb{Z}$, i.e., $\frac{(j+2)h}{q+1}\in\mathbb{Z}$ and $\frac{jh}{q+1}\in\mathbb{Z}$, it means  $\frac{2h}{q+1}\in\mathbb{Z}$, i.e., $q+1\mid 2h$. Note that $2\mid q+1$, thus, $\frac{q+1}{2}\mid h$. Now by $h\mid i$, we have $\frac{q+1}{2}\mid i$. And so, by $2\leq i\leq q$, we have $i=\frac{q+1}{2}$. 
				
				Conversely, if $i=\frac{q+1}{2}$, then $h=\frac{q+1}{2}$, thus, from $1\le r \le q-2$, we can get   $$k_{r+2}=\frac{(r+2)h}{q+1}=\frac{2h}{q+1}+k_r=1+k_r,$$
				and so, $k_{r+2}$ and $k_r$ are both integers or not simultaneously. Furthermore, by Lemma \ref{L2.3 }, we know that $e_{r}$ and $e_{r+2}$ are both zeros or not simultaneously. Thus, by $e_{j+2}=0$, we have $e_j=e_{j+4}=0$. 
				
				
				$	\quad $
				
				\textbf{The proof of Lemma \ref{L2.6 } } 
				
				$(1)\Longrightarrow(2)$  On the one hand, by  $e_{j+1}=e_{j+3}=0$ and Lemma \ref{L2.4 }, we have $e_{j+2}\neq 0$, thus, by Lemma \ref{L2.5 }, we have $2\nmid i$. On the other hand, by $e_{j+1}=e_{j+3}=0$ and Lemma \ref{L2.3 }, we have $\frac{(j+1)h}{q+1}\in \mathbb{Z}$ and $\frac{(j+3)h}{q+1}\in\mathbb{Z}$, it means $\frac{2h}{q+1}\in\mathbb{Z}$, i.e., $q+1\mid2h$. Note that  $2\mid q+1$, thus, $\frac{q+1}{2}\mid h$. Now by $h\mid i$, we have  $\frac{q+1}{2}\mid i$.
				And so, by $2\leq i\leq q$, we can get  $i=\frac{q+1}{2}$.
				
				$(2)\Longrightarrow(3)$ On the one hand, by $2 \nmid i$ and $i = \frac{q+1}{2}$, we have $2 \nmid \frac{q+1}{2}$, it means $q \equiv 1 \pmod 4$. On the other hand, by $i = \frac{q+1}{2}$, it's easy to get $\frac{q+1}{2}\mid i$. 
				
				$(3)\Longrightarrow(1)$ On the one hand, by $\frac{q+1}{2}\mid i$ and $2\le i\le q$, we have $i=\frac{q+1}{2}$, thus, $h=\frac{q+1}{2}$.  By Lemma \ref{L2.3 }, we know that $e_{r}=0$ if and only if $k_{r}=\frac{r}{2}\in \mathbb{Z}$, i.e., $2\mid r$, it means that exactly one of  $e_{j+1}$ and $e_{j+2}$ equals zero. 
				On the other hand, by $q \equiv 1 \pmod 4$, we have $4\nmid q+1$. Note that  $2\mid q+1$, thus, $2\nmid \frac{q+1}{2}$. By combining with  $i=\frac{q+1}{2}$, it follows that $2\nmid i$. Now by Lemma \ref{L2.5 }, we have $e_{j+2}\neq 0$, thus, $e_{j+1}=0$. Furthermore, by Lemma \ref{L2.4 }, we can get  $e_{j+1}=e_{j+3}=0$. 
				
				
				
				
				\quad
				
				\textbf{The proof of Lemma \ref{L2.7 } } 
				
				By $e_{j+2}\neq 0$, Lemma \ref{L2.3 } and Lemma \ref{L2.5 }, we have $2\nmid i$ and $k_{j+2}\notin \mathbb{Z}$, i.e., $2\nmid i$ and $\frac{(j+2)h}{q+1}\notin\mathbb{Z}$. 
				
				$(1)\Longrightarrow(2)$ By  $e_{j+4}=e_{j}=0$ and Lemma \ref{L2.3 }, we have $\frac{(j+4)h}{q+1}\in\mathbb{Z}$ and $\frac{jh}{q+1}\in\mathbb{Z}$, then by combining with $\frac{(j+2)h}{q+1}\notin\mathbb{Z}$, it follows that $\frac{4h}{q+1}\in\mathbb{Z}$ and $\frac{2h}{q+1}\notin\mathbb{Z}$, i.e., $q+1\mid 4h$ and $q+1\nmid 2h$. Note that $2\mid q+1$, thus,  $\frac{q+1}{2}\mid 2h$ and $\frac{q+1}{2}\nmid h$, namely,  $2\mid\frac{q+1}{2}$, i.e., $4\mid q+1$, and so, $q\equiv-1\pmod 4$. Hence, $\frac{q+1}{4}\mid h$ and $\frac{q+1}{2}\nmid h$. Now by $h\mid i$, we have $\frac{q+1}{4}\mid i$ and $\frac{q+1}{2}\nmid i$. And so, by $2\le i\le q$, we can get $i=\frac{q+1}{4}$ or $\frac{3(q+1)}{4}$. 
				
				$(2)\Longrightarrow(3)$ By $i=\frac{q+1}{4}$ or $\frac{3(q+1)}{4}$, it is easy to get $\frac{q+1}{4}\mid i$.   
				
				$(3)\Longrightarrow(1)$ By $q\equiv-1\pmod 4$, we have $4\mid q+1$, i.e, $2\mid \frac{q+1}{2}$, then by combining with $2\nmid i$, it follows that $i\neq \frac{q+1}{2}$. While by  $\frac{q+1}{4}\mid i$ and $2\le i\le q$, we have $i=\frac{q+1}{4}$ or $\frac{3(q+1)}{4}$, thus, $h=\frac{q+1}{4}$. By Lemma \ref{L2.3 }, we know that $e_{r}=0$ if and only if $k_{r}=\frac{r}{4}\in \mathbb{Z}$, i.e.,  $4\mid r$, it means that only one of $e_{j}$, $e_{j+1}$, $e_{j+2}$, $e_{j+3}$ equals zero. And by Lemma \ref{L2.4 }, we know that $e_{j+1}$ and $e_{j+3}$ are both zeros or not simultaneously, thus, $e_{j+1}$ and $e_{j+3}$ are both non-zeros. By combining with $e_{j+2}\neq 0$, we have $e_{j}=0$. Furthermore, by Lemma \ref{L2.4 }, it's easy to get $e_{j}=e_{j+4}=0$. 
				

				
				
				

				\quad
				
				\textbf{The proof of Lemma \ref{L2.8 } } 
				
				For any odd prime $p$ and $q=p^m$, it's easy to know that $q\equiv1\pmod 4$ or $q\equiv-1\pmod 4$. 
				
				Then, by Lemma \ref{L2.5 } and Lemmas \ref{L2.6 }-\ref{L2.7 }, the following three statements are true,
				
				$(1)$ $e_{j+2}\neq 0$ if and only if $2\nmid i$; 
				
				$(2)$ Both $e_{j+1}\neq 0$ and $e_{j+3}\neq 0$ if and only if $q\equiv -1\pmod 4$ or $q\equiv 1\pmod 4$ and $\frac{q+1}{2}\nmid i$. 
				
				$(3)$ If $e_{j+2}\neq 0$, then both $e_{j}\neq 0$ and $e_{j+4}\neq 0$ if and only if  $q\equiv 1\pmod 4$ or $q\equiv -1\pmod 4$ and $\frac{q+1}{4}\nmid i$.
				
				Thus, $e_{r}\neq 0$ $\left(r=j, j+1, j+2, j+3, j+4\right)$ if and only if both $2\nmid i$ and $q\equiv 1\pmod 4$ with $\frac{q+1}{2}\nmid i$ or $q\equiv -1\pmod 4$ with $\frac{q+1}{4}\nmid i$. If $q\equiv 1\pmod 4$, then $2\nmid \frac{q+1}{2}$, i.e., $\gcd\left(2,\frac{q+1}{2}\right)=1$, furthermore, $\frac{q+1}{2}\nmid 2i$ if and only if $\frac{q+1}{2}\nmid i$. If $q\equiv -1\pmod 4$, then $4\mid q+1$, i.e., $2\mid \frac{q+1}{2}$, furthermore,  $\frac{q+1}{2}\nmid 2i$ if and only if $\frac{q+1}{4}\nmid i$.

				\quad
				
				\textbf{The proof of Lemma \ref{L2.9 } }
				
				By Corollary \ref{C2.1}, we have $e_1$, $e_q \neq 0$. 
				If $i\equiv 0\pmod p$, then   $i(q-1)=0$, furthermore, $\mathrm{rank}(A)=2$. 
			    If $ i^2 \equiv 1\pmod p$, then  $p\nmid i$, furthermore, $2\le i\le q-1$ and $i(q-1)\neq 0$, thus, $\mathrm{rank}(A)=1$ or $2$. Note that  $\mathrm{rank}(A)=1$ if and only if $\left[i(q-1)\right]^2=e_1e_q$, i.e., 
				\begin{equation}
					\label{E2.1}
					i^2(q-1)^2=(q-1)^2\left(\sum\limits_{t=0}^{i-1}{ \gamma ^{t\left( q-1 \right)}}\right)\left(\sum\limits_{t=0}^{i-1}{ \gamma ^{qt\left( q-1 \right)}}\right). 
				\end{equation}
				By $ i^2 \equiv 1\pmod p$ and $q-1\not\equiv 0\pmod p$, we know that the Equation \eqref{E2.1} is equivalent to 
				\begin{equation}
					\label{E2.2}
					\left(\sum\limits_{t=0}^{i-1}{ \gamma ^{t\left( q-1 \right)}}\right)\left(\sum\limits_{t=0}^{i-1}{ \gamma ^{qt\left( q-1 \right)}}\right)=1.
				\end{equation}
				Note that $\gamma^{q-1}-1$ and $\gamma^{q(q-1)}-1$ are both non-zero, and so, by multiplying $\left(\gamma^{q-1}-1\right)\left(\gamma^{q(q-1)}-1\right) $ on both sides of the Equation \eqref{E2.2}, we have 
				\begin{equation}
					\label{E2.3}
					\left(\gamma^{i(q-1)}-1\right)\left(\gamma^{iq(q-1)}-1\right)=\left(\gamma^{q-1}-1\right)\left(\gamma^{q(q-1)}-1\right). 
				\end{equation}
				Next, by expanding and simplifying the Equation \eqref{E2.3}, we have 
				\begin{equation}
					\label{E2.4}		\gamma^{q-1}-\gamma^{i(q-1)}=\gamma^{iq(q-1)}-\gamma^{q(q-1)},  
				\end{equation}
				i.e., 
				\begin{equation}
					\label{E2.5}		\gamma^{q-1}-\gamma^{i(q-1)}=\gamma^{i(1-q)}-\gamma^{(1-q)}.
				\end{equation}			
				And so, the Equation \eqref{E2.5} is equivalent to 
				\begin{equation}
					\label{E2.6}
					\gamma^{q-1}\left(\gamma^{(i+1)(1-q)}-1\right)\left(1-\gamma^{(i-1)(q-1)}\right)=0.
				\end{equation}
				Note that $\gamma^{q-1}\in \mathbb{F}_{q^2}^{*}$, thus,  $\gamma^{(i+1)(1-q)}=1$ or $\gamma^{(i-1)(q-1)}=1$, i.e., 
				$$(i+1)(1-q)\equiv 0\pmod {q^2-1} \text{ or } (i-1)(q-1)\equiv 0\pmod {q^2-1},$$ 
				namely, 
				\begin{equation}
					\label{E2.6 }
					i+1\equiv 0\pmod {q+1} \text{ or } i-1\equiv 0\pmod {q+1},
				\end{equation}
				which contradicts the assumption $ 2 \le i\le q-1$. Thus $\mathrm{rank}(A)\neq 1$, it means  $\mathrm{rank}(A)=2$.			 
				
				
				\quad
				
				\textbf{The proof of Lemma \ref{L2.10 } } 
				
				\textbf{(1)}	
				By  $2\nmid i$ and Lemma \ref{L2.5 }, we have $e_{j+2}=q-1$ and 
				$\gamma^{(j+2)(q-1)} = -1$, thus, $e_{j+2}^2=e_{j+3}e_{j+1}$ if and only if  $(q-1)^2=e_{j+3}e_{j+1}$, i.e. 
				\begin{equation}
					\label{E2.11}
					( q-1 )^2 =( q-1 )^2 \left( \sum_{t=0}^{i-1}{ \gamma ^{\left( j+3 \right) t\left( q-1 \right)}} \right) \left( \sum\limits_{t=0}^{i-1}{\gamma ^{\left( j+1 \right) t\left( q-1 \right)}} \right) . 
				\end{equation}
				Now by $\gamma^{(j+2)(q-1)} = -1$, we have  $\gamma^{(j+1)(q-1)}=-\gamma^{1-q} \neq 1$  and $\gamma^{(j+3)(q-1)}=-\gamma^{q-1}\neq 1 $, i.e., $\gamma^{(j+1)(q-1)}-1$ and $\gamma^{(j+3)(q-1)}-1$ are both non-zero, thus, by multiplying $\left(\gamma^{(j+1)(q-1)} - 1 \right) \left(\gamma^{(j+3)(q-1)} - 1\right)$ on both sides of the Equation \eqref{E2.11}, we have 
				\begin{equation}
					\left(\gamma^{(j+1)(q-1)} - 1 \right) \left(\gamma^{(j+3)(q-1)} - 1\right)=\left(\gamma^{i(j+1)(q-1)} - 1 \right) \left(\gamma^{i(j+3)(q-1)} - 1\right), 
				\end{equation}
				i.e., 
				\begin{equation}
					\label{E2.13}
					\left(1+\gamma^{(1-q)} \right) \left(1+\gamma^{(q-1)} \right)=\left(1+\gamma^{i(1-q)}  \right) \left(1+\gamma^{i(q-1)} \right). 
				\end{equation}
				Next, by expanding and simplifying the Equation \eqref{E2.13}, we have
				$\gamma ^{\left( 1-q \right)}+\gamma ^{\left( q-1 \right)}=\gamma ^{i\left( 1-q \right)}+\gamma ^{i\left( q-1 \right)}$,  
				i.e., 
				\begin{equation}
					\label{E2.14}
					\gamma^{(1-q)}\left(\gamma^{(i+1)(q-1)}-1\right) \left( \gamma ^{\left( i-1 \right) \left( 1-q \right)}-1 \right)  =0. 
				\end{equation}
				Note that $\gamma^{1-q}\in \mathbb{F}_{q^2}^{*}$, and so,   $\gamma^{(i+1)(q-1)}=1$ or $\gamma^{(i-1)(1-q)}=1$, i.e., 
				$$(i+1)(q-1)\equiv 0\pmod {q^2-1} \text{ or } (i-1)(1-q)\equiv 0\pmod {q^2-1},$$ 
				namely, 
				\begin{equation}
					\label{E2.14 }
					i+1\equiv 0\pmod {q+1} \text{ or } i-1\equiv 0\pmod {q+1}.
				\end{equation}
				Thus, by $2\leq i\leq q$, we know that the Equation \eqref{E2.14 } holds if and only if $i=q$.

				\textbf{(2)}  
				By $2\nmid i$, $\frac{q+1}{2}\nmid 2i$, Lemma \ref{L2.5 } and Lemma \ref{L2.8 }, we have $e_{j+2}=q-1$ and $e_{r}\in \mathbb{F}_{q^2}^{*}$ for $r\in \left\{j,j+1,j+3,j+4\right\}$, thus, $e_{j+1}^2=e_{j+2}e_{j}$ if and only if $e_{j+1}^2=(q-1)e_{j}$, i.e. 
				\begin{equation}
					\label{E2.7}
					\left( q-1 \right)^2\left( \sum_{t=0}^{i-1}{ \gamma ^{\left( j+1 \right) t\left( q-1 \right)}} \right) ^2=(q-1)^2 \left(  \sum_{t=0}^{i-1}{\gamma ^{jt\left( q-1 \right)}} \right) . 
				\end{equation}
				Note that  $\gamma^{j(q-1)}-1\in \mathbb{F}_{q^2}^{*}$ and  $\gamma^{(j+1)(q-1)}-1\in \mathbb{F}_{q^2}^{*}$, and so, by multiplying $\left(\gamma^{j(q-1)} - 1 \right) \left(\gamma^{(j+1)(q-1)} - 1\right)^2$ on both sides of the Equation \eqref{E2.7}, we have 
				\begin{equation}
					\left( \gamma ^{i(j+1)\left( q-1 \right)}-1 \right) ^2\left( \gamma ^{j\left( q-1 \right)} -1\right) =\left( \gamma ^{ (j+1)(q-1) }-1 \right) ^2\left( \gamma ^{ij\left( q-1 \right)}-1 \right) , 
				\end{equation}
				i.e., 
				\begin{equation}
					\label{E2.9}
					\left( 1+\gamma ^{i\left( 1-q \right)} \right) ^2\left( 1+\gamma ^{2\left( 1-q \right)} \right) =\left( 1+\gamma ^{\left( 1-q \right)} \right) ^2\left( 1+\gamma ^{2i\left( 1-q \right)} \right) .
				\end{equation} 
				Next, by expanding and simplifying the Equation \eqref{E2.9}, we have $\gamma ^{i\left( 1-q \right)}+\gamma ^{\left( i+2 \right) \left( 1-q \right)}=\gamma ^{\left( 1-q \right)}+\gamma ^{\left( 2i+1 \right) \left( 1-q \right)}$,  
				i.e., 
				\begin{equation}
					\label{E2.16 }
					\gamma ^{\left( 1-q \right)}\left( \gamma ^{\left( i-1 \right) \left( 1-q \right)}-1 \right) \left( \gamma ^{\left( i+1 \right) \left( 1-q \right)}-1 \right)=0. 
				\end{equation}
				Now by $\gamma^{1-q}\in \mathbb{F}_{q^2}^{*}$, we have  $\gamma^{(i-1)(1-q)}=1$ or $\gamma^{(i+1)(1-q)}=1$, i.e.,  
				$$(i-1)(1-q)\equiv 0\pmod {q^2-1} \text{ or } (i+1)(1-q)\equiv 0\pmod {q^2-1},$$ 
				namely, 
				\begin{equation}
					\label{E2.17 }
					i-1\equiv 0\pmod {q+1} \text{ or } i+1\equiv 0\pmod {q+1}.
				\end{equation}
				Hence, by $2\leq i\leq q$, we know that the Equation \eqref{E2.17 } holds if and only if $i=q$.
				
				In the similar proof as that of $(2)$, Lemma \ref{L2.10 } (3) holds too.
				
				
				\quad
				
				\textbf{The proof of Lemma \ref{L2.11 } } 
				
				By $ 2\nmid i$, $\frac{q+1}{2}\nmid 2i$, Lemma \ref{L2.5 } and Lemma \ref{L2.8 }, we have $e_{j+2}=q-1$ and $e_{r}\in \mathbb{F}_{q^2}^{*}$ for $r\in \left\{j,j+1,j+3,j+4\right\}$, thus, the  Equation \eqref{E2.15} holds if and only if 
				\begin{equation}
					\label{E2.16}
					b_{k-1,1}\left(e_{j+1}^2-e_{j+2}e_{j}\right)e_{j+2}e_{j+3}+b_{k-1,1}^{q}\left(e_{j+3}^2-e_{j+4}e_{j+2}\right)e_{j+1}e_{j+2}+b_{k-1,0}^{q+1}\left(e_{j+2}^2-e_{j+1}e_{j+3}\right)e_{j+1}e_{j+3}=0. 
				\end{equation}
				Now by the proof of Lemma \ref{L2.10 }, we know that for $m\in \left\{0,1,3,4\right\}$,  $\gamma^{(j+m)(q-1)}=-\gamma^{(m-2)(q-1)}\neq 1$ and $\gamma^{i(j+m)(q-1)}=-\gamma^{i(m-2)(q-1)}$, thus 
				\begin{equation}
					\label{A 19}
					e_{j+m}=\sum\limits_{t=0}^{i-1}{\left( q-1 \right) \gamma ^{t(j+m)\left( q-1 \right)}}=\left( q-1 \right) \frac{1-\gamma ^{i(j+m)\left( q-1 \right)}}{1-\gamma ^{(j+m)\left( q-1 \right)}}=(q-1)\frac{1+\gamma^{i(m-2)(q-1)}}{1+\gamma^{(m-2)(q-1)}}.
				\end{equation}
				Hence, the Equation \eqref{E2.16} is equivalent to 
				\begin{equation} \label{E2.17}
					\begin{aligned} 
						&b_{k-1,1}\left[\left(\frac{1+\gamma ^{i\left( 1-q \right)}}{1+\gamma ^{1-q}}\right)^2-\frac{1+\gamma ^{2i\left( 1-q \right)}}{1+\gamma ^{2\left( 1-q \right)}}\right]\cdot\frac{1+\gamma ^{i\left( q-1 \right)}}{1+\gamma ^{ q-1 }}
						+b_{k-1,1}^{q}\left[\left(\frac{1+\gamma ^{i\left( q-1 \right)}}{1+\gamma ^{ q-1}}\right)^2-\frac{1+\gamma ^{2i\left( q-1 \right)}}{1+\gamma ^{2\left( q-1 \right)}}\right]\cdot\frac{1+\gamma ^{i\left( 1-q \right)}}{1+\gamma ^{1-q }}\\
						=&-b_{k-1,0}^{q+1}\left[1-\frac{1+\gamma^{i(1-q)}}{1+\gamma^{1-q}}\cdot \frac{1+\gamma^{i(q-1)}}{1+\gamma^{q-1}}\right]\cdot \frac{1+\gamma^{i(1-q)}}{1+\gamma^{1-q}}\cdot \frac{1+\gamma^{i(q-1)}}{1+\gamma^{q-1}}. 
					\end{aligned}
				\end{equation}
				Now by multiplying $\left(1+\gamma ^{1-q}\right)^2\left(1+\gamma ^{2\left(1-q\right)}\right)\left(1+\gamma ^{\left(q-1\right)}\right)^2\left(1+\gamma ^{2\left(q-1\right)}\right)$ on both sides of the Equation \eqref{E2.17}, we have
				\begin{equation} \label{E2.18}
					\begin{aligned}
						&b_{k-1,1}\left[\left(1+\gamma ^{i\left(1-q\right)}\right)^2\left(1+\gamma ^{2\left(1-q \right)}\right)-\left(1+\gamma ^{2i\left(1-q \right)}\right)\left(1+\gamma ^{1-q}\right)^2\right]\cdot\left(1+\gamma ^{i\left(q-1\right)}\right)\left(1+\gamma ^{q-1}\right)\left(1+\gamma ^{2\left(q-1 \right)}\right)\\
						+&b_{k-1,1}^{q}\left[\left(1+\gamma ^{i\left(q-1\right)}\right)^2\left(1+\gamma ^{2\left(q-1 \right)}\right)-\left(1+\gamma ^{2i\left(q-1 \right)}\right)\left(1+\gamma ^{q-1}\right)^2\right]\cdot\left(1+\gamma ^{i\left(1-q\right)}\right)\left(1+\gamma ^{1-q}\right)\left(1+\gamma ^{2\left(1-q \right)}\right)\\
						=&b_{k-1,0}^{q+1}\left[\left(1+\gamma^{i(q-1)}\right)\left(1+\gamma^{i(1-q)}\right)-\left(1+\gamma^{1-q}\right)\left(1+\gamma^{q-1}\right)\right] \left(1+\gamma^{i(1-q)}\right)\left(1+\gamma^{i(q-1)}\right)\left(1+\gamma ^{2\left(1-q \right)}\right)\left(1+\gamma ^{2\left(q-1 \right)}\right).
					\end{aligned}
				\end{equation}
				Next, by expanding and simplifying the Equation \eqref{E2.18}, we have
				\begin{equation}\label{E2.20}
					\begin{aligned}
						&2b_{k-1,1}\left[-\gamma^{1-q}\left(\gamma^{(i-1)(1-q)}-1\right)\left(\gamma^{(i+1)(1-q)}-1\right)\right]\cdot\left(1+\gamma ^{i\left(q-1\right)}\right)\left(1+\gamma ^{q-1}\right)\left(1+\gamma ^{2\left(q-1 \right)}\right)\\
						+&2b_{k-1,1}^{q}\left[-\gamma^{q-1}\left(\gamma^{(i-1)(q-1)}-1\right)\left(\gamma^{(i+1)(q-1)}-1\right)\right]\cdot\left(1+\gamma ^{i\left(1-q\right)}\right)\left(1+\gamma ^{1-q}\right)\left(1+\gamma ^{2\left(1-q \right)}\right)\\
						=&b_{k-1,0}^{q+1}\left[\gamma^{q-1}\left(\gamma^{(i-1)(q-1)}-1\right)\left(1-\gamma^{(i+1)(1-q)}\right)\right]\cdot \left(1+\gamma^{i(1-q)}\right)\left(1+\gamma^{i(q-1)}\right)\left(1+\gamma ^{2\left(1-q \right)}\right)\left(1+\gamma ^{2\left(q-1 \right)}\right).
					\end{aligned}
				\end{equation}
				Now by multiplying $\gamma ^{(2i+1)\left(q-1\right)}$ on both sides of the Equation \eqref{E2.20}, we have
				\begin{equation}\label{E2.21}
					\begin{aligned}
						&2b_{k-1,1}\left[\left(1-\gamma^{(i-1)(q-1)}\right)\left(1-\gamma^{(i+1)(q-1)}\right)\right]\cdot\left(1+\gamma ^{i\left(q-1\right)}\right)\left(1+\gamma ^{q-1}\right)\left(1+\gamma ^{2\left(q-1 \right)}\right)\\
						+&2b_{k-1,1}^{q}\left[\gamma^{(i-1)(q-1)}\left(\gamma^{(i-1)(q-1)}-1\right)\left(\gamma^{(i+1)(q-1)}-1\right)\right]\cdot\left(1+\gamma ^{i\left(q-1\right)}\right)\left(1+\gamma ^{q-1}\right)\left(1+\gamma ^{2\left(q-1 \right)}\right)\\
						=&-b_{k-1,0}^{q+1}\left[\left(\gamma^{(i-1)(q-1)}-1\right)\left(\gamma^{(i+1)(q-1)}-1\right)\right]\cdot \left(1+\gamma^{i(q-1)}\right)\left(1+\gamma^{i(q-1)}\right)\left(1+\gamma ^{2\left(q-1 \right)}\right)\left(1+\gamma ^{2\left(q-1 \right)}\right)\gamma^{1-q}. 
					\end{aligned}
				\end{equation}
				Note that $\frac{q+1}{2}\nmid 2i$, then we have $\gamma^{i(q-1)}+1\in \mathbb{F}_{q^2}^{*}$. Thus, the Equation \eqref{E2.21} holds if and only if 
				$$2b_{k-1,1}\left(1+\gamma^{q-1}\right)\left(1+b_{k-1,1}^{q-1}\gamma^{(i-1)(q-1)}\right)=-b_{k-1,0}^{q+1}\left(1+\gamma^{2(q-1)}\right)\left(1+\gamma^{i(q-1)}\right)\gamma^{1-q},$$
				namely, $$2b_{k-1,1}\left(1+\gamma^{1-q}\right)\left(1+b_{k-1,1}^{q-1}\gamma^{(i-1)(q-1)}\right)=-b_{k-1,0}^{q+1}\left(1+\gamma^{2(1-q)}\right)\left(1+\gamma^{i(q-1)}\right),$$
				i.e., $\varGamma_{2}=0$. 
				
				
				\quad
				
				\textbf{The proof of Lemma \ref{L2.12 }  } 
				
				By multiplying $b_{k-2,1}b_{k-2,0}^{q+1}\left(e_{j+1}-\frac{e_{j+2}}{e_{j+1}}e_{j}\right)\left(e_{j+2}-\frac{e_{j+3}}{e_{j+2}}e_{j+1}\right)$ on both sides of the Equation \eqref{E2.22},  we have
				\begin{equation}\label{E2.23}
					\begin{aligned}
						&\left[\Delta\cdot\left(e_{j+1}-\frac{e_{j+2}}{e_{j+1}}e_{j}\right)
						+\Delta^q\cdot
						\left(e_{j+3}-\frac{e_{j+4}}{e_{j+3}}e_{j+2}\right)\right] b_{k-2,0}^{q+1}\left(e_{j+2}-\frac{e_{j+3}}{e_{j+2}}e_{j+1}\right)\\
						=&b_{k-2,1}^{q+1}\left(e_{j+3}-\frac{e_{j+4}}{e_{j+3}}e_{j+2}\right)\left(e_{j+1}-\frac{e_{j+2}}{e_{j+1}}e_{j}\right),
					\end{aligned}
				\end{equation}
				where $\Delta=b_{k-1,1}-\frac{b_{k-2,1}b_{k-1,0}}{b_{k-2,0}}$. 
				Note that $ 2\nmid i$, $\frac{q+1}{2}\nmid 2i$, Lemma \ref{L2.5 } and Lemma \ref{L2.8 }, we have $e_{j+2}=q-1$ and $e_{r}\in \mathbb{F}_{q^2}^{*}$ for $r\in \left\{j,j+1,j+3,j+4\right\}$, and then the Equation \eqref{E2.23} holds if and only if 
				\begin{equation}\label{E2.24}
					\begin{aligned}
						&\left[\Delta\cdot\left(e_{j+1}^2-e_{j}e_{j+2}\right)e_{j+3}
						+\Delta^q\cdot
						\left(e_{j+3}^{2}-e_{j+2}e_{j+4}\right)e_{j+1}\right] b_{k-2,0}^{q+1}\left(e_{j+2}^{2}-e_{j+1}e_{j+3}\right)\\
						=&b_{k-2,1}^{q+1}\left(e_{j+3}^2-e_{j+2}e_{j+4}\right)\left(e_{j+1}^2-e_{j}e_{j+2}\right)e_{j+2}.
					\end{aligned}
				\end{equation}
				where $\Delta=b_{k-1,1}-\frac{b_{k-2,1}b_{k-1,0}}{b_{k-2,0}}$.
				
				Now by the proof of Lemma \ref{L2.11 }, we know that for $m\in \left\{0,1,3,4\right\}$,  $\gamma^{(j+m)(q-1)}=-\gamma^{(m-2)(q-1)}\neq 1$
				and
				$ \gamma^{i(j+m)(q-1)}=-\gamma^{i(m-2)(q-1)}$,  
				thus  
				$$e_{j+m}=\sum\limits_{t=0}^{i-1}{\left( q-1 \right) \gamma ^{t(j+m)\left( q-1 \right)}}=\left( q-1 \right) \frac{1-\gamma ^{i(j+m)\left( q-1 \right)}}{1-\gamma ^{(j+m)\left( q-1 \right)}}=(q-1)\frac{1+\gamma^{i(m-2)(q-1)}}{1+\gamma^{(m-2)(q-1)}}.$$
				Hence, the Equation \eqref{E2.24} is equivalent to 
				\begin{equation}\label{E2.25}
					\begin{aligned}
						&\left\{\Delta\cdot\left[\left(\frac{1+\gamma ^{i\left( 1-q \right)}}{1+\gamma ^{\left( 1-q \right)}}\right)^2-\frac{1+\gamma ^{2i\left( 1-q \right)}}{1+\gamma ^{2\left( 1-q \right)}}\right]\frac{1+\gamma ^{i\left( q-1 \right)}}{1+\gamma ^{\left( q-1 \right)}}\right.\\
						&\left.+\Delta^q\cdot\left[\left(\frac{1+\gamma ^{i\left( q-1 \right)}}{1+\gamma ^{\left( q-1 \right)}}\right)^2-\frac{1+\gamma ^{2i\left( q-1 \right)}}{1+\gamma ^{2\left( q-1 \right)}}\right]\frac{1+\gamma ^{i\left( 1-q \right)}}{1+\gamma ^{\left( 1-q \right)}}\right\}\cdot b_{k-2,0}^{q+1}\left[1-\frac{1+\gamma ^{i\left( 1-q \right)}}{1+\gamma ^{\left( 1-q \right)}}\cdot \frac{1+\gamma ^{i\left( q-1 \right)}}{1+\gamma ^{\left( q-1 \right)}}\right]\\
						=&b_{k-2,1}^{q+1}\left[\left(\frac{1+\gamma ^{i\left( q-1 \right)}}{1+\gamma ^{\left( q-1 \right)}}\right)^2-\frac{1+\gamma ^{2i\left( q-1 \right)}}{1+\gamma ^{2\left( q-1 \right)}}\right]\cdot\left[\left(\frac{1+\gamma ^{i\left( 1-q \right)}}{1+\gamma ^{\left(1-q \right)}}\right)^2-\frac{1+\gamma ^{2i\left( 1-q \right)}}{1+\gamma ^{2\left( 1-q \right)}}\right]
					\end{aligned}
				\end{equation}
				Now by multiplying $\left(1+\gamma ^{ 1-q }\right)^3\left(1+\gamma ^{ q-1 }\right)^3\left(1+\gamma ^{2\left( 1-q \right)}\right)\left(1+\gamma ^{2\left( q-1 \right)}\right)$ on both sides of the Equation \eqref{E2.25}, we have 
				\begin{equation}\label{E2.26}
					\begin{aligned}
						&\left\{\Delta\cdot\varPsi\cdot\left(1+\gamma ^{i\left( q-1 \right)}\right)\left(1+\gamma ^{\left( q-1 \right)}\right)\left(1+\gamma ^{2\left( q-1 \right)}\right)\right.\\
						+&\left.\Delta^q\cdot\varUpsilon\cdot\left(1+\gamma ^{i\left( 1-q \right)}\right)\left(1+\gamma ^{\left( 1-q \right)}\right)\left(1+\gamma ^{2\left( 1-q \right)}\right)\right\}\cdot b_{k-2,0}^{q+1}\Omega\\
						=&b_{k-2,1}^{q+1}\cdot\varPsi\cdot\varUpsilon\cdot\left(1+\gamma ^{ 1-q }\right)\left(1+\gamma ^{ q-1 }\right) ,
					\end{aligned}
				\end{equation}
				where
				$$\varPsi=\left(1+\gamma ^{i\left( 1-q \right)}\right)^2\left(1+\gamma ^{2\left( 1-q \right)}\right)-\left(1+\gamma ^{2i\left( 1-q \right)}\right)\left(1+\gamma ^{\left( 1-q \right)}\right)^2,$$
				$$\varUpsilon=\left(1+\gamma ^{i\left( q-1 \right)}\right)^2\left(1+\gamma ^{2\left( q-1 \right)}\right)-\left(1+\gamma ^{2i\left( q-1 \right)}\right)\left(1+\gamma ^{\left( q-1 \right)}\right)^2$$
				and
				$$\varOmega=\left(1+\gamma ^{\left( q-1 \right)}\right)\left(1+\gamma ^{\left( 1-q \right)}\right)-\left(1+\gamma ^{i\left( q-1 \right)}\right)\left(1+\gamma ^{i\left( 1-q \right)}\right).$$
				Note that
				$$\varPsi=-2\gamma^{1-q}\left(\gamma ^{(i-1)\left( 1-q \right)}-1\right)\left(\gamma ^{(i+1)\left( 1-q \right)}-1\right),\varUpsilon=-2\gamma^{(q-1)}\left(\gamma ^{(i-1)\left( q-1 \right)}-1\right)\left(\gamma ^{(i+1)\left( q-1 \right)}-1\right)$$
				and
				$$\Omega=\gamma^{(q-1)}\left(\gamma ^{(i-1)\left( q-1 \right)}-1\right)\left(\gamma ^{(i+1)\left( 1-q \right)}-1\right),$$
				thus, the Equation \eqref{E2.26} is equivalent to
				\begin{equation}\label{E2.27}
					\begin{aligned}
						&\left\{-2\Delta\cdot\gamma^{1-q}\left(\gamma ^{(i-1)\left( 1-q \right)}-1\right)\left(\gamma ^{(i+1)\left( 1-q \right)}-1\right)\cdot\left(1+\gamma ^{i\left( q-1 \right)}\right)\left(1+\gamma ^{\left( q-1 \right)}\right)\left(1+\gamma ^{2\left( q-1 \right)}\right)\right.\\
						&-\left.2\Delta^q\cdot\gamma^{(q-1)}\left(\gamma ^{(i-1)\left( q-1 \right)}-1\right)\left(\gamma ^{(i+1)\left( q-1 \right)}-1\right)\cdot\left(1+\gamma ^{i\left( 1-q \right)}\right)\left(1+\gamma ^{\left( 1-q \right)}\right)\left(1+\gamma ^{2\left( 1-q \right)}\right)\right\}\\
						&\cdot \left[b_{k-2,0}^{q+1}\gamma^{(q-1)}\left(\gamma ^{(i-1)\left( q-1 \right)}-1\right)\left(\gamma ^{(i+1)\left( 1-q \right)}-1\right)\right]\\
						=&4b_{k-2,1}^{q+1}\cdot\gamma^{1-q}\left(\gamma ^{(i-1)\left( 1-q \right)}-1\right)\left(\gamma ^{(i+1)\left( 1-q \right)}-1\right)\cdot\gamma^{(q-1)}\left(\gamma ^{(i-1)\left( q-1 \right)}-1\right)\left(\gamma ^{(i+1)\left( q-1 \right)}-1\right)\\
						&\cdot \left(1+\gamma ^{ 1-q }\right)\left(1+\gamma ^{ q-1 }\right),
					\end{aligned}
				\end{equation}
				i.e., 
				\begin{equation}\label{E2.28}
					\begin{aligned}
						&\Delta\cdot\gamma^{1-q}\left(\gamma ^{(i-1)\left( 1-q \right)}-1\right)\left(\gamma ^{(i+1)\left( 1-q \right)}-1\right)\cdot\left(1+\gamma ^{i\left( q-1 \right)}\right)\left(1+\gamma ^{\left( q-1 \right)}\right)\left(1+\gamma ^{2\left( q-1 \right)}\right)\\
						+&\Delta^q\cdot\gamma^{(q-1)}\left(\gamma ^{(i-1)\left( q-1 \right)}-1\right)\left(\gamma ^{(i+1)\left( q-1 \right)}-1\right)\cdot\left(1+\gamma ^{i\left( 1-q \right)}\right)\left(1+\gamma ^{\left( 1-q \right)}\right)\left(1+\gamma ^{2\left( 1-q \right)}\right)\\
						=&-2\left(\frac{b_{k-2,1}}{b_{k-2,0}}\right)^{q+1}\cdot\gamma^{1-q}\left(\gamma ^{(i-1)\left( 1-q \right)}-1\right)\left(\gamma ^{(i+1)\left( q-1 \right)}-1\right)\left(1+\gamma ^{ 1-q }\right)\left(1+\gamma ^{ q-1 }\right),
					\end{aligned}
				\end{equation}
				Now by multiplying $\gamma^{(2i+1)(q-1)}$ on both sides of the Equation \eqref{E2.28}, we have
				\begin{equation}\label{E2.29}
					\begin{aligned}
						&\Delta\left(1-\gamma ^{(i-1)\left( q-1 \right)}\right)\left(1-\gamma ^{(i+1)\left( q-1 \right)}\right)\cdot\left(1+\gamma ^{i\left( q-1 \right)}\right)\left(1+\gamma ^{\left( q-1 \right)}\right)\left(1+\gamma ^{2\left( q-1 \right)}\right)\\
						+&\Delta^q\cdot\gamma^{(i-1)(q-1)}\left(\gamma ^{(i-1)\left( q-1 \right)}-1\right)\left(\gamma ^{(i+1)\left( q-1 \right)}-1\right)\cdot\left(1+\gamma ^{i\left( q-1 \right)}\right)\left(1+\gamma ^{\left( q-1 \right)}\right)\left(1+\gamma ^{2\left( q-1 \right)}\right)\\
						=&-2\left(\frac{b_{k-2,1}}{b_{k-2,0}}\right)^{q+1}\cdot\gamma^{i(q-1)}\left(1-\gamma ^{(i-1)\left( q-1 \right)}\right)\left(\gamma ^{(i+1)\left( q-1 \right)}-1\right)\left(1+\gamma ^{ q-1 }\right)\left(1+\gamma ^{ q-1 }\right) .
					\end{aligned}
				\end{equation}
				Note that  $1-\gamma ^{(i-1)\left( q-1 \right)}\in \mathbb{F}_{q^2}^{*}$ and $\gamma ^{(i+1)\left( q-1 \right)}-1\in \mathbb{F}_{q^2}^{*}$, and so, the Equation \eqref{E2.29} is equivalent to 
				\begin{equation}\label{E2.30}
					\Delta\left(1+\Delta^{q-1}\cdot\gamma^{(i-1)(q-1)} \right)\left(1+\gamma ^{i\left( q-1 \right)}\right)\left(1+\gamma ^{2\left( q-1 \right)}\right)
					=2\left(\frac{b_{k-2,1}}{b_{k-2,0}}\right)^{q+1}\gamma^{i(q-1)}\cdot\left(1+\gamma ^{\left( q-1 \right)}\right),
				\end{equation}
				i.e., $\varGamma =0$. 
				
				
				\quad
				
				\textbf{The proof of Lemma \ref{L2.13 }  } 
				
				By Lemma \ref{L2.5 }, we have $\gamma^{(j+2)(q-1)}=-1$, and so,  $\gamma^{(j+3)(q-1)}=-\gamma^{q-1}\neq 1$ and  $\gamma^{(j+1)(q-1)}=-\gamma^{1-q}\neq 1$. Note that $2\mid i$, thus,  $\gamma^{i(j+3)(q-1)}=\gamma^{i(q-1)}$ and $\gamma^{i(j+1)(q-1)}=\gamma^{i(1-q)}$. Furthermore, it is easy to know that 
				$$e_{j+1}=\left( q-1 \right)\sum\limits_{t=0}^{i-1}{ \gamma ^{t(j+1)\left( q-1 \right)}}=\left( q-1 \right) \frac{1-\gamma ^{i(j+1)\left( q-1 \right)}}{1-\gamma ^{(j+1)\left( q-1 \right)}}=(q-1)\frac{1-\gamma^{i(1-q)}}{1+\gamma^{(1-q)}}$$
				and	$$e_{j+3}=\left( q-1 \right)\sum\limits_{t=0}^{i-1}{ \gamma ^{t(j+3)\left( q-1 \right)}}=\left( q-1 \right) \frac{1-\gamma ^{i(j+3)\left( q-1 \right)}}{1-\gamma ^{(j+1)\left( q-1 \right)}}=(q-1)\frac{1-\gamma^{i(q-1)}}{1+\gamma^{(q-1)}}.$$
				Hence, by multiplying $\left(1+\gamma^{q-1}\right)\left(1+\gamma^{1-q}\right)$ on both sides of the Equation \eqref{E2.31}, we have 
				\begin{equation}
					\label{E2.33}
					b_{k-1,1}\left(\gamma^{i(1-q)}-1\right)\left(\gamma^{q-1}+1\right)+b_{k-1,1}^{q}\left(\gamma^{i(q-1)}-1\right)\left(\gamma^{1-q}+1\right)=0.
				\end{equation}
				Now by  multiplying $\gamma^{i(q-1)}$ on both sides of Equation \eqref{E2.33}, we have 
				\begin{equation}
					b_{k-1,1}\left(1-\gamma^{i(q-1)}\right)\left(\gamma^{q-1}+1\right)+b_{k-1,1}^{q}\gamma^{(i-1)(q-1)}\left(\gamma^{i(q-1)}-1\right)\left(1+\gamma^{q-1}\right)=0.
				\end{equation}
				i.e.,
				\begin{equation}
					\label{E2.34}
					b_{k-1,1}\left(1-\gamma^{i(q-1)}\right)\left(\gamma^{q-1}+1\right)\left(1-b_{k-1,1}^{q-1}\gamma^{(i-1)(q-1)}\right)=0.
				\end{equation}
				Note that  $1-\gamma^{i(q-1)}\in \mathbb{F}_{q^2}^{*}$ and $\gamma^{q-1}+1\in \mathbb{F}_{q^2}^{*}$, thus the Equation \eqref{E2.34} is equivalent to $b_{k-1,1}=0$ or 
				$b_{k-1,1}^{q-1}\gamma^{(i-1)(q-1)}=1$. 
				
				
				\newpage
				\textbf{The proof of Lemma \ref{L 1} }
				
				In the similar proof as those of Lemmas \ref{L2.13 }-\ref{L 1} is immediately.  
				
				\quad
				
				\textbf{The proof of Lemma \ref{L2.14 }  } 
				
				By $2\nmid i$ and Lemma \ref{L2.5 }, we have $e_{j+2}=q-1\neq 0$. 
				Then by multiplying $e_{j+2}$ on both sides of the Equation \eqref{E2.35},  
				\begin{equation}
					\label{E2.36}
					b_{k-1,1}e_{j+1}e_{j+2}+b_{k-1,0}^{q+1}\left(e_{j+2}^2-e_{j+1}e_{j+3}\right)+b_{k-1,1}^{q}e_{j+2}e_{j+3}=0. 
				\end{equation}
				Now by the proof of Lemma \ref{L2.11 }, we know that 
				$\gamma^{(j+1)(q-1)}=-\gamma^{1-q}\neq 1$,  $\gamma^{(j+3)(q-1)}=-\gamma^{q-1}\neq 1$ and  
				$$e_{j+1}=\left( q-1 \right)\sum\limits_{t=0}^{i-1}{ \gamma ^{t(j+1)\left( q-1 \right)}}=\left( q-1 \right) \frac{1-\gamma ^{i(j+1)\left( q-1 \right)}}{1-\gamma ^{(j+1)\left( q-1 \right)}}=(q-1)\frac{1+\gamma^{i(1-q)}}{1+\gamma^{(1-q)}}, $$
				$$e_{j+3}=\left( q-1 \right)\sum\limits_{t=0}^{i-1}{ \gamma ^{t(j+3)\left( q-1 \right)}}=\left( q-1 \right) \frac{1-\gamma ^{i(j+3)\left( q-1 \right)}}{1-\gamma ^{(j+1)\left( q-1 \right)}}=(q-1)\frac{1+\gamma^{i(q-1)}}{1+\gamma^{(q-1)}}. $$
				Hence, the Equation \eqref{E2.36} is equivalent to 
				\begin{equation}
					\label{E2.37}
					b_{k-1,1}\left(\frac{1+\gamma^{i(1-q)}}{1+\gamma^{1-q}}\right)+b_{k-1,0}^{q+1}\left(1-\frac{1+\gamma^{i(1-q)}}{1+\gamma^{1-q}}\cdot \frac{1+\gamma^{i(q-1)}}{1+\gamma^{(q-1)}}\right)+b_{k-1,1}^{q}\left(\frac{1+\gamma^{i(q-1)}}{1+\gamma^{(q-1)}}\right)=0. 
				\end{equation}
				Note that $\left(1+\gamma^{1-q}\right)\in \mathbb{F}_{q^2}^{*}$ and $\left(1+\gamma^{q-1}\right)\in \mathbb{F}_{q^2}^{*}$, and so, by multiplying $\left(1+\gamma^{1-q}\right)\left(1+\gamma^{q-1}\right)$ on both sides of the Equation \eqref{E2.37}, we have 
				\begin{equation}
					\label{E2.38}
					\begin{aligned}
						&b_{k-1,1}\left(1+\gamma^{i(1-q)}\right)\left(1+\gamma^{q-1}\right)+b_{k-1,1}^q\left(1+\gamma^{i(q-1)}\right)\left(1+\gamma^{1-q}\right)\\
						=&-b_{k-1,0}^{q+1}\left[\left(1+\gamma^{1-q}\right)\left(1+\gamma^{q-1}\right)-\left(1+\gamma^{i(1-q)}\right)\left(1+\gamma^{i(q-1)}\right)\right]. 
					\end{aligned}
				\end{equation}
				Next, by expanding and simplifying the Equation \eqref{E2.38}, we have 
				\begin{equation}
					\label{E2.39}
					b_{k-1,1}\left(1+\gamma^{i(1-q)}\right)\left(1+\gamma^{q-1}\right)+b_{k-1,1}^q\left(1+\gamma^{i(q-1)}\right)\left(1+\gamma^{1-q}\right)
					=-b_{k-1,0}^{q+1}\gamma^{q-1}\left(\gamma^{(i-1)(q-1)}-1\right)\left(\gamma^{(i+1)(1-q)}-1\right). 
				\end{equation}
				Now by multiplying $\gamma^{i(q-1)}$ on both sides of the Equation \eqref{E2.39}, we have
				$$ 
				\begin{aligned}
					&b_{k-1,1}\left(\gamma^{i(q-1)}+1\right)\left(1+\gamma^{q-1}\right)+b_{k-1,1}^{q}\gamma^{(i-1)(q-1)}\left(1+\gamma^{i(q-1)}\right)\left(\gamma^{q-1}+1\right)\\
					=&-b_{k-1,0}^{q+1}\left(\gamma^{(i-1)(q-1)}-1\right)\left(1-\gamma^{(i+1)(q-1)}\right),
				\end{aligned}
				$$
				which is equivalent to  $$b_{k-1,1}\left(\gamma^{i(q-1)}+1\right)\left(1+\gamma^{q-1}\right)\left(1+b_{k-1,1}^{q-1}\gamma^{(i-1)(q-1)}\right)=b_{k-1,0}^{q+1}\left(\gamma^{(i-1)(q-1)}-1\right)\left(\gamma^{(i+1)(q-1)}-1\right),$$
				i.e., $\varGamma_{1}=0$. 
				
				
				\section{The proof of Proposition \ref{P3.1}}
				\label{appendex B}
				We present the detailed proofs of  $6$ cases in Proposition \ref{P3.1}. 
				
				\textbf{case 1.}  For $ u,v\in \left\{1, 2, \ldots , k-2 \right\} $. 
				
				(1) For $u=1$, it is easy to know that  $u+v-2\in \left\{ 0,1,\ldots ,k-3 \right\}$. Now by  $q-1\le k-3=q+j-3<2q-2$, we know that $a_{1v} \neq 0$ if and only if $u+v-2=0$ or $q-1$, i.e.,  $v=1$ or $v=q$. Hence,
				$$a_{11}=q-1, a_{1q}=\left( q-1 \right) \beta ^{q\left( q-1 \right)}.$$
				
				(2) For $2\le u\le q-j+1=j+4$. it is easy to know that  $ u+v-2\in \left\{ u-1, u, \ldots , u+k-4 \right\}$. Now by $0<1\le u-1\le q-j<q-1 $ and $ q-1< q+j-2 \le u+k-4=u+q+j-4\le 2q-3<2q-2$, we know that $a_{uv} \neq 0$ if and only if $u+v-2=q-1$, i.e.,  $v=q-u+1$. Hence, 
				$$a_{u(q-u+1)}=\left( q-1 \right) \beta ^{(u-1)+q(q-u)}=(q-1)\beta^{(q-1)(q-u+1)}.$$
				
				(3) For $j+5=q-j+2\le u \le q-1$, it is easy to know that $ u+v-2\in \left\{ u-1, u, \ldots , u+k-4 \right\}$. Now by  $0<q-j+1\le u-1\le q-2<q-1$ and $2q-2\le u+k-4 \le q+k-5=2q+j-5=3q-j-8<3q-3$, we know that $a_{uv} \neq 0$ if and only if $u+v-2=q-1$ or $2q-2$, i.e.,  $v=q-u+1$ or $2q-u$. Hence, 
				$$a_{u(q-u+1)}=\left( q-1 \right) \beta ^{(u-1)+q(q-u)}=(q-1)\beta^{(q-1)(q-u+1)}, $$
				$$a_{u(2q-u)}=\left( q-1 \right) \beta ^{(u-1)+q(2q-u-1)}=(q-1)\beta^{(q-1)(q-u)}. $$
				
				(4) For $u=q$, it is easy to know that $ u+v-2\in \left\{ q-1, q, \ldots , q+k-4 \right\}$. Now by  $2q-2\le  q+k-4=2q+j-4 =3q-j-7<3q-3$, we know that $a_{qv} \neq 0$ if and only if $q+v-2=q-1$ or $2q-2$, i.e.,  $v=1$ or $q$. Hence, 
				$$a_{q1}=\left( q-1 \right) \beta ^{q-1} 
				\text{ and } a_{qq}=\left( q-1 \right) \beta ^{(q-1)+q(q-1)}=(q-1)\beta^{q^2-1}=q-1. $$
				
				(5) For $q+1\le u\le k-2=q+j-2$, it is easy to know that  $ u+v-2\in \left\{ u-1, u, \ldots , u+k-4 \right\}$. Now by $q-1<q\le u-1\le q+j-3=2q-j-6<2q-2 $ and $ 2q-2< 2q+j-3 \le u+k-4=u+q+j-4\le 2q+2j-6=3q-9<3q-3$, we know that $a_{uv} \neq 0$ if and only if $u+v-2=2q-2$, i.e.,  $v=2q-u$. Hence, 
				$$a_{u(2q-u)}=\left( q-1 \right) \beta ^{(u-1)+q(2q-u-1)}=(q-1)\beta^{(q-1)(2q-u+1)}.$$
				
				\textbf{case 2.} For $u=k-1, v\in \left\{1, 2, \ldots , k-2 \right\}$, it's easy to get   
				\begin{equation}
					\label{E3.5}
					\begin{aligned}
						a_{\left( k-1 \right) v}&=\sum_{s=1}^{q-1}\left[ \left( \alpha _s\beta \right) ^{k-2+\left( v-1 \right) q}+b_{k-2,0}\left( \alpha _s\beta \right) ^{k+\left( v-1 \right) q}+b_{k-2,1}\left( \alpha _s\beta \right) ^{k+1+\left( v-1 \right) q} \right]\\
						&=\sum_{s=1}^{q-1}\left( \alpha _s\beta \right) ^{k-2+\left( v-1 \right) q}+b_{k-2,0}\sum_{s=1}^{q-1}\left( \alpha _s\beta \right) ^{k+\left( v-1 \right) q}+b_{k-2,1}\sum_{s=1}^{q-1}\left( \alpha _s\beta \right) ^{k+1+\left( v-1 \right) q}. 
					\end{aligned}
				\end{equation}
				
				Now by Remark \ref{u+v-2}, we know that $\sum\limits_{s=1}^{q-1}\left( \alpha _s\beta \right) ^{k-2+\left( v-1 \right) q}\neq 0$ if and only if 
				\begin{equation}
					\label{k-3+v}
					k-3+v\equiv 0(\bmod (q-1)),
				\end{equation} 
				$\sum\limits_{s=1}^{q-1}\left( \alpha _s\beta \right) ^{k+\left( v-1 \right) q}\neq 0$ if and only if 
				\begin{equation}
					\label{k-1+v}
					k-1+v \equiv 0(\bmod (q-1))
				\end{equation}
				 and  $\sum\limits_{s=1}^{q-1}\left( \alpha _s\beta \right) ^{k+1+\left( v-1 \right) q}\neq 0$ if and only if
				 \begin{equation}
				 	\label{k+v}
				 	k+v\equiv 0(\bmod (q-1)).
				 \end{equation}
				
				Note that $q\ge 7$, thus, any two congruences among \ref{k-3+v}, \ref{k-1+v} and \ref{k+v} cannot hold simultaneously.
				 Hence, any two of the sums
				 $\sum\limits_{s=1}^{q-1}\left(\alpha_s\beta\right)^{k-2+(v-1)q}$, 
				 $\sum\limits_{s=1}^{q-1}\left(\alpha_s\beta\right)^{k+(v-1)q}$ and 
				 $\sum\limits_{s=1}^{q-1}\left(\alpha_s\beta\right)^{k+1+(v-1)q}$
				 are not non-zeros simultaneously.

				
				(1) For $\sum\limits_{s=1}^{q-1}\left( \alpha _s\beta \right) ^{k-2+\left( v-1 \right) q}$, we have $k+v-3\in \left\{ k-2,k-1,\ldots ,2k-5 \right\} $. Now by  $q-1<k-2=q+j-2=2q-j-5<2q-2$ and $2q-2<2k-5=2q+2j-5=3q-8<3q-3$, we know that $\sum\limits_{s=1}^{q-1}\left( \alpha _s\beta \right) ^{k-2+\left( v-1 \right) q}\neq 0$ if and only if $k+v-3=2q-2$, i.e., $v=q-j+1=j+4 $. Hence, 
				\begin{equation}
					\label{E3.6}
					\begin{aligned}
						a_{\left( k-1 \right) (j+4)}&=\sum\limits_{s=1}^{q-1}\left( \alpha _s\beta \right) ^{k-2+\left( j+3 \right) q}=(q-1)\beta^{k-2+(j+3)q}=(q-1)\beta^{q+j-2+(q-j)q}\\
						&=(q-1)\beta^{(1-q)(j-1)}=(q-1)\beta^{(q-1)(1-j+q+1)}=(q-1)\beta^{(q-1)(j+5)}.
					\end{aligned}
				\end{equation}

				(2) For $\sum\limits_{s=1}^{q-1}\left( \alpha _s\beta \right) ^{k+\left( v-1 \right) q}$, we have $k+v-1\in \left\{ k,k+1,\ldots ,2k-3 \right\} $. Now by  $q-1<k=q+j=2q-j-3<2q-2$ and $2q-2<2k-3=2q+2j-3=3q-6<3q-3$, we know that  $\sum\limits_{s=1}^{q-1}\left( \alpha _s\beta \right) ^{k+\left( v-1 \right) q}\neq 0$ if and only if $k+v-1=2q-2$, i.e., $v=q-j-1=j+2 $. Hence, 
				\begin{equation}
					\label{E3.7}
					\begin{aligned}
						a_{\left( k-1 \right) (j+2)}&=b_{k-2,0}\sum\limits_{s=1}^{q-1}\left( \alpha _s\beta \right) ^{k+(j+1)q} =b_{k-2,0}(q-1)\beta^{k+(j+1)q}=b_{k-2,0}(q-1)\beta^{q+j+(q-j-2)q}\\
						&=b_{k-2,0}(q-1)\beta^{(1-q)(j+1)}=b_{k-2,0}(q-1)\beta^{(q-1)(q-j)}=b_{k-2,0}(q-1)\beta^{(q-1)(j+3)}.
					\end{aligned}
				\end{equation}

				(3) For $\sum\limits_{s=1}^{q-1}\left( \alpha _s\beta \right) ^{k+1+\left( v-1 \right) q}$, we have  $ k+v\in \left\{ k+1,k+2,\ldots ,2k-2 \right\} $. Now by  $q-1<k+1=q+j+1=2q-j-2<2q-2$ and $2q-2<2k-2=2q+2j-2=3q-5<3q-3$, we know that  $\sum\limits_{s=1}^{q-1}\left( \alpha _s\beta \right) ^{k+1+\left( v-1 \right) q}\neq 0$ if and only if $k+v=2q-2$, i.e., $v=q-j-2=j+1 $. Hence, 
				\begin{equation}
					\label{E3.8}
					\begin{aligned}
						a_{\left( k-1 \right) (j+1)}&=b_{k-2,1}\sum\limits_{s=1}^{q-1}\left( \alpha _s\beta \right) ^{k+1+qj} =b_{k-2,1}(q-1)\beta^{k+1+qj}=b_{k-2,1}(q-1)\beta^{q+j+1+(q-j-3)q}\\
						&=b_{k-2,1}(q-1)\beta^{(1-q)(j+2)}=b_{k-2,1}(q-1)\beta^{(q-1)(q-j-1)}=b_{k-2,1}(q-1)\beta^{(q-1)(j+2)}.
					\end{aligned}
				\end{equation}
				
				\textbf{case 3.} For $u=k, v\in \left\{1, 2, \ldots , k-2 \right\}$, it's easy to get 
				\begin{equation}
					\label{E3.9}
					\begin{aligned}
						a_{kv}&=\sum_{s=1}^{q-1}\left[\left( \alpha _s\beta \right) ^{k-1+\left( v-1 \right) q}+b_{k-1,0}\left( \alpha _s\beta \right) ^{k+\left( v-1 \right) q}+b_{k-1,1}\left( \alpha _s\beta \right) ^{k+1+\left( v-1 \right) q}\right]\\
						&=\sum_{s=1}^{q-1}\left( \alpha _s\beta \right) ^{k-1+\left( v-1 \right) q}+
						b_{k-1,0}\sum_{s=1}^{q-1}\left( \alpha _s\beta \right) ^{k+\left( v-1 \right) q}
						+b_{k-1,1}\sum_{s=1}^{q-1}\left( \alpha _s\beta \right) ^{k+1+\left( v-1 \right) q}. 
					\end{aligned}
				\end{equation}
				
				In the similar proof as that of Case 2, we know that  
				$\sum\limits_{s=1}^{q-1}\left( \alpha _s\beta \right) ^{k-1+\left( v-1 \right) q}\neq 0$ if and only if $k-2+v\equiv 0(\bmod (q-1))$, and any two of the sums  $\sum\limits_{s=1}^{q-1}\left( \alpha _s\beta \right) ^{k-1+\left( v-1 \right) q}$, $\sum\limits_{s=1}^{q-1}\left( \alpha _s\beta \right) ^{k+\left( v-1 \right) q}$  and   $\sum\limits_{s=1}^{q-1}\left( \alpha _s\beta \right) ^{k+1+\left( v-1 \right) q}$ are not non-zeros simultaneously. 
				
				(1) For $\sum\limits_{s=1}^{q-1}\left( \alpha _s\beta \right) ^{k-1+\left( v-1 \right) q}$, we have $k+v-2\in \left\{ k-1,k,\ldots ,2k-4 \right\} $. Now by $q-1<k-1=q+j-1=2q-j-4<2q-2$ and $2q-2<2k-4=2q+2j-4=3q-7<3q-3$, we know that  $\sum\limits_{s=1}^{q-1}\left( \alpha _s\beta \right) ^{k-1+\left( v-1 \right) q}\neq 0$ if and only if $k+v-2=2q-2$, i.e., $v=q-j=j+3 $. Hence, 
				$$
				\begin{aligned}
					a_{k (j+3)}&=\sum\limits_{s=1}^{q-1}\left( \alpha _s\beta \right) ^{k-1+\left( j+2 \right) q}=(q-1)\beta^{k-1+(j+2)q}=(q-1)\beta^{q+j-1+(q-j-1)q}\\
					&=(q-1)\beta^{(1-q)j}=(q-1)\beta^{(q-1)(q+1-j)}=(q-1)\beta^{(q-1)(j+4)}.
				\end{aligned}
				$$
				
				(2) For $\sum\limits_{s=1}^{q-1}\left( \alpha _s\beta \right) ^{k+\left( v-1 \right) q}$, by the Equation \eqref{E3.7},  we can get 
				$a_{k (j+2)}=b_{k-1,0}(q-1)\beta^{(q-1)(j+3)}. $
				
				(3) For $\sum\limits_{s=1}^{q-1}\left( \alpha _s\beta \right) ^{k+1+\left( v-1 \right) q}$, by the Equation \eqref{E3.8} , we can get 
				$a_{k (j+1)}=b_{k-1,1}(q-1)\beta^{(q-1)(j+2)}$. \\
				
				\textbf{case 4.} For $u\in {1, 2, \ldots , k-2}, v=k-1$, it's easy to get 
				\begin{equation}
					\label{E3.10}
					\begin{aligned}
						a_{u\left( k-1 \right)}&=\sum_{s=1}^{q-1}\left[\left( \alpha _s\beta \right) ^{q\left( k-2 \right) +u-1}+b_{k-2,0}^{q}\left( \alpha _s\beta \right) ^{qk+u-1}+b_{k-2,1}^{q}\left( \alpha _s\beta \right) ^{q\left( k+1 \right) +u-1}\right]\\
						&=\sum_{s=1}^{q-1}\left( \alpha _s\beta \right) ^{q\left( k-2 \right) +u-1}+b_{k-2,0}^{q}\sum_{s=1}^{q-1}\left( \alpha _s\beta \right) ^{qk+u-1}+b_{k-2,1}^{q}\sum_{s=1}^{q-1}\left( \alpha _s\beta \right) ^{q\left( k+1 \right) +u-1}. 
					\end{aligned}
				\end{equation}
				
				Now by Remark \ref{u+v-2}, we know that  $\sum\limits_{s=1}^{q-1}\left( \alpha _s\beta \right) ^{(k-2)q+ u-1}\neq 0$ if and only if 
				\begin{equation}
					\label{k-3+u}
					k-3+u\equiv 0(\bmod (q-1)),
				\end{equation}
				 $\sum\limits_{s=1}^{q-1}\left( \alpha _s\beta \right) ^{qk+u-1}\neq 0$ if and only if 
				 \begin{equation}
				 	\label{k-1+u}
				 	k-1+u \equiv 0(\bmod (q-1))
				 \end{equation}
				 and $\sum\limits_{s=1}^{q-1}\left( \alpha _s\beta \right) ^{(k+1)q+u-1}\neq 0$ if and only if 
				 \begin{equation}
				 	\label{k+u}
				 	k+u\equiv 0(\bmod (q-1)). 
				 \end{equation}
				Note that $q\ge 7$, thus, any two congruences among \ref{k-3+u}, \ref{k-1+u} and \ref{k+u} cannot hold simultaneously.
				Hence, any two of the sums
				$\sum\limits_{s=1}^{q-1}\left( \alpha _s\beta \right) ^{(k-2)q+ u-1}$, 
				$\sum\limits_{s=1}^{q-1}\left( \alpha _s\beta \right) ^{qk+u-1}$ and  
				$\sum\limits_{s=1}^{q-1}\left( \alpha _s\beta \right) ^{(k+1)q+u-1}$
				are not non-zeros simultaneously.

				(1) For $\sum\limits_{s=1}^{q-1}\left( \alpha _s\beta \right) ^{q\left( k-2 \right) +u-1}$, we have $k+u-3\in \left\{ k-2,k-1,\ldots ,2k-5 \right\} $. Now by $q-1<k-2=q+j-2=2q-j-5<2q-2$ and $2q-2<2k-5=2q+2j-5=3q-8<3q-3$, we know that $\sum\limits_{s=1}^{q-1}\left( \alpha _s\beta \right) ^{q\left( k-2 \right) +u-1}\neq 0$ if and only if $k+u-3=2q-2$, i.e., $u=q-j+1=j+4 $. Hence, 
				\begin{equation}
					\label{E3.11}
					a_{(j+4)\left( k-1 \right) }=\sum\limits_{s=1}^{q-1}\left( \alpha _s\beta \right) ^{(k-2)q+j+3 }=(q-1)\beta^{(k-2)q+j+3}=(q-1)\beta^{(q+j-2)q+q-j}=(q-1)\beta^{(q-1)(j-1)}. 
				\end{equation}
				
				(2) For $\sum\limits_{s=1}^{q-1}\left( \alpha _s\beta \right) ^{qk+ u-1 }$, we have $k+u-1\in \left\{ k,k+1,\ldots ,2k-3 \right\} $. Now by $q-1<k=q+j=2q-j-3<2q-2$ and $2q-2<2k-3=2q+2j-3=3q-6<3q-3$, we know that $\sum\limits_{s=1}^{q-1}\left( \alpha _s\beta \right) ^{qk+u-1}\neq 0$ if and only if $k+u-1=2q-2$, i.e., $u=q-j-1=j+2 $. Hence, 
				\begin{equation}
					\label{E3.12}
					a_{(j+2)\left( k-1 \right) }=b_{k-2,0}^q\sum\limits_{s=1}^{q-1}\left( \alpha _s\beta \right) ^{qk+j+1} =b_{k-2,0}^q(q-1)\beta^{qk+j+1}=b_{k-2,0}^q(q-1)\beta^{q(q+j)+q-j-2}=b_{k-2,0}^q(q-1)\beta^{(q-1)(j+1)}.
				\end{equation}
				
				(3) For $\sum\limits_{s=1}^{q-1}\left( \alpha _s\beta \right) ^{q(k+1)+u-1  }$, we have  $ k+u\in \left\{ k+1,k+2,\ldots ,2k-2 \right\} $. Now by $q-1<k+1=q+j+1=2q-j-2<2q-2$ and $2q-2<2k-2=2q+2j-2=3q-5<3q-3$, we know that  $\sum\limits_{s=1}^{q-1}\left( \alpha _s\beta \right) ^{q(k+1)+u-1  }\neq 0$ if and only if $k+u=2q-2$, i.e., $u=q-j-2=j+1 $. Hence, 
				\begin{equation}
					\label{E3.13}
					a_{(j+1)\left( k-1 \right) }=b_{k-2,1}^q\sum\limits_{s=1}^{q-1}\left( \alpha _s\beta \right) ^{(k+1)q+j} =b_{k-2,1}^q(q-1)\beta^{(k+1)q+j}=b_{k-2,1}^q(q-1)\beta^{(q+j+1)q+q-j-3}=b_{k-2,1}^q(q-1)\beta^{(q-1)(j+2)}.
				\end{equation}
				
				\textbf{case 5.} When $u\in \left\{1, 2, \ldots , k-2 \right\}, v=k$, it is easy to know that 
				\begin{equation}
					\label{E3.14}
					\begin{aligned}
						a_{uk}&=\sum_{s=1}^{q-1}\left[\left( \alpha _s\beta \right) ^{q\left( k-1 \right) +u-1}+b_{k-1,0}^{q}\left( \alpha _s\beta \right) ^{qk+u-1}+b_{k-1,1}^{q}\left( \alpha _s\beta \right) ^{q\left( k+1 \right) +u-1}\right]\\
						&=\sum_{s=1}^{q-1}\left( \alpha _s\beta \right) ^{q\left( k-1 \right) +u-1}+b_{k-1,0}^{q}\sum_{s=1}^{q-1}\left( \alpha _s\beta \right) ^{qk+u-1}+ b_{k-1,1}^{q}\sum_{s=1}^{q-1}\left( \alpha _s\beta \right) ^{q\left( k+1 \right) +u-1} . 
					\end{aligned}
				\end{equation}
				
				In the similar proof as that of Case 4, we know that  
				$\sum\limits_{s=1}^{q-1}\left( \alpha _s\beta \right) ^{q\left( k-1 \right) +u-1}\neq 0$ if and only if $k-2+u\equiv 0(\bmod (q-1))$, and any two of the sums  $\sum\limits_{s=1}^{q-1}\left( \alpha _s\beta \right) ^{q\left( k-1 \right) +u-1}$, $\sum\limits_{s=1}^{q-1}\left( \alpha _s\beta \right) ^{qk+u-1}$ and  $\sum\limits_{s=1}^{q-1}\left( \alpha _s\beta \right) ^{q\left( k+1 \right) +u-1}$ are not non-zeros simultaneously. 
				
				(1) For $\sum\limits_{s=1}^{q-1}\left( \alpha _s\beta \right) ^{q\left( k-1 \right) +u-1}$, we have $k+u-2\in \left\{ k-1,k,\ldots ,2k-4 \right\} $. Now by $q-1<k-1=q+j-1=2q-j-4<2q-2$ and $2q-2<2k-4=2q+2j-4=3q-7<3q-3$, we know that $\sum\limits_{s=1}^{q-1}\left( \alpha _s\beta \right) ^{q\left( k-1 \right) +u-1}\neq 0$ if and only if $k+u-2=2q-2$, i.e., $u=q-j=j+3 $. Hence, 
				$$a_{ (j+3)k}=\sum\limits_{s=1}^{q-1}\left( \alpha _s\beta \right) ^{(k-1)q+ j+2}=(q-1)\beta^{(k-1)q+ j+2}=(q-1)\beta^{(q+j-1)q+q-j-1}=(q-1)\beta^{(q-1)j}. $$
				
				(2) For $\sum\limits_{s=1}^{q-1}\left( \alpha _s\beta \right) ^{qk+u-1}$, by the Equation \eqref{E3.12},  we can get
				$a_{(j+2) k}=b_{k-1,0}^q(q-1)\beta^{(q-1)(j+1)}$. 
				
				(3) For $\sum\limits_{s=1}^{q-1}\left( \alpha _s\beta \right) ^{q\left( k+1 \right) +u-1}$, by the Equation \eqref{E3.13}, we can get
				$a_{(j+1) k}=b_{k-1,1}^q(q-1)\beta^{(q-1)(j+2)}$. 
				
				\textbf{case 6.} For  $u,v \in \{k-1,k\}$, by directly calculating the  Equation \eqref{a uv}, we have
				$$
				\begin{aligned}
					\label{E3.15}
					A_1&=\left( q-1 \right) b_{k-2,0}^{q+1}\beta ^{\left( q+1 \right) k}
					=\left( q-1 \right) b_{k-2,0}^{q+1}\beta ^{qk+q}
					=\left( q-1 \right) b_{k-2,0}^{q+1}\beta ^{q(q+j)+2q-j-3}
					=\left( q-1 \right) b_{k-2,0}^{q+1}\beta ^{(q-1)(j+2)};\\
					A_2&=\left( q-1 \right) \left[ b_{k-2,0}b_{k-1,0}^{q}\beta ^{\left( q+1 \right) k}+b_{k-2,1}\beta ^{q\left( k-1 \right) +k+1} \right] 
					=\left( q-1 \right) \left[ b_{k-2,0}b_{k-1,0}^{q}\beta ^{(q-1)(j+2)}+b_{k-2,1}\beta ^{q\left( q+j-1 \right) +2q-j-2} \right]\\
					&=\left( q-1 \right) \left[ b_{k-2,0}b_{k-1,0}^{q}\beta ^{(q-1)(j+2)}+b_{k-2,1}\beta ^{(q-1)(j+1)} \right];\\
					A_3&=\left( q-1 \right) \left[ b_{k-2,1}^{q}\beta ^{q\left( k+1 \right) +k-1}+b_{k-2,0}^{q}b_{k-1,0}\beta ^{\left( q+1 \right) k} \right] 
					=\left( q-1 \right) \left[ b_{k-2,1}^{q}\beta ^{q\left( q+j+1 \right) +2q-j-4}+b_{k-2,0}^{q}b_{k-1,0}\beta ^{(q-1)(j+2)} \right] \\
					&=\left( q-1 \right) \left[ b_{k-2,1}^{q}\beta ^{(q-1)(j+3)}+b_{k-2,0}^{q}b_{k-1,0}\beta ^{(q-1)(j+2)} \right] ; \\
					A_4&=\left( q-1 \right) \left[ b_{k-1,1}^{q}\beta ^{q\left( k+1 \right) +k-1}+b_{k-1,0}^{q+1}\beta ^{\left( q+1 \right) k}+b_{k-1,1}\beta ^{q\left( k-1 \right) +k+1} \right]\\
					&=\left( q-1 \right) \left[ b_{k-1,1}^{q}\beta ^{(q-1)(j+3)}+b_{k-1,0}^{q+1}\beta ^{(q-1)(j+2)}+b_{k-1,1}\beta ^{(q-1)(j+1)} \right].
				\end{aligned}
				$$

			\end{document}